\newcommand{\axyz}[4][\alpha]{#1_{#2,#3}^{#4}}
\newcommand{\e}{\mathrm{e}}
\newcommand{\dmax}{d_{\mathsf{max}}}
\newcommand{\davg}{d_{\mathsf{avg}}}
\newcommand{\dmin}{d_{\mathsf{min}}}
\newcommand{\ct}{\mathcal{T}}
\renewcommand{\Pr}[1]{\mathbb{P}\left[\,#1\,\right]}
\newcommand{\Pru}[2]{\mathbb{P}_{#1}\!\left[#2\right]}
\newcommand{\Ex}[1]{\mathbb{E} \left[\,#1\,\right]}
\newcommand{\Exu}[2]{\mathbb{E}_{#1} \left[\,#2\,\right]}
\newcommand{\BO}[1]{\mathcal{O}\!\left(#1\right)} 
\newcommand{\BT}[1]{\Theta\!\left(#1\right)}
\newcommand{\lo}[1]{o\!\left(#1\right)}
\newcommand*{\abs}[1]{\lvert #1\rvert}
\newcommand*{\bfrac}[2]{\genfrac{(}{)}{}{}{#1}{#2}}
\newcommand{\Htwo}[2]{H_{#1}^{\mathsf{two}}(#2)}
\newcommand{\Ctwo}[2]{C_{#1}^{\mathsf{two}}(#2)}
\newcommand{\ttwo}[1][cov]{t_{\mathsf{#1}}^{\mathsf{two}}}
\newcommand{\ttwoh}[1][hit]{t_{\mathsf{#1}}^{\mathsf{two}}}
\newcommand{\thit}{t_{\mathsf{hit}}}
\newcommand{\trel}{t_{\mathsf{rel}}}
\newcommand{\mdeg}[1]{d\!\left(#1 \right)}
\newcommand{\dist}{\operatorname{dist}}
\newcommand{\diam}{\operatorname{diam}}
\newcommand{\poly}{\mathop{\mathsf{poly}}}
\newcommand{\polylog}{\mathop{\mathsf{polylog}}}
\newcommand{\NP}{\mathsf{NP}}
\newcommand{\EXP}{\mathsf{EXP}}
\newcommand{\PSPACE}{\mathsf{PSPACE}}
\renewcommand{\leq}{\leqslant}
\renewcommand{\geq}{\geqslant}
\renewcommand{\tilde}{\widetilde}
\newcommand{\Z}{{\mathbb{Z}}}
\newcommand{\eps}{\varepsilon}
\newtheorem{theorem}{Theorem}[section]
\newtheorem{lemma}[theorem]{Lemma}
\newtheorem{corollary}[theorem]{Corollary}
\newtheorem{proposition}[theorem]{Proposition}
\newtheorem{definition}[theorem]{Definition}
\newtheorem*{claim}{Claim}
\newtheorem*{remark}{Remark}
\newenvironment{poc}{\begin{proof}[Proof of claim]}{\end{proof}}
\newenvironment{poci}{\begin{proof}[Proof of Claim \eqref{sum 0}]}{\end{proof}}
\newenvironment{pocii}{\begin{proof}[Proof of Claim \eqref{sum i}]}{\end{proof}}
\newcommand{\Lr}[1]{Lemma~\ref{#1}}
\newcommand{\Tr}[1]{Theorem~\ref{#1}}
\newcommand{\Sr}[1]{Section~\ref{#1}}
\newcommand{\Prr}[1]{Pro\-position~\ref{#1}}
\newcommand{\Cr}[1]{Corollary~\ref{#1}}
\newcommand{\N}{\ensuremath{\mathbb N}}
\begin{document}
\providecommand{\keywords}[1]{\textbf{\textit{keywords---}} #1}

\title{The Power of Two Choices for Random Walks\footnotetext{A preliminary version of this paper appeared at \newblock \emph{The 11th  Innovations  in  Theoretical  Computer  Science Conference (ITCS 2020)}, volume 151 of LIPIcs, pages 76:1--76:19 \cite{ITCSpaper}}}
	\author[1]{Agelos Georgakopoulos}
	\author[1]{John Haslegrave}
	\author[2]{Thomas Sauerwald}
	\author[2]{John Sylvester}
	\affil[1]{Mathematics Institute, University of Warwick}
	\affil[2]{Department of Computer Science \& Technology, University of Cambridge}
	\date{\vspace{-5ex}}
\maketitle
\begin{abstract}
	
We apply the power-of-two-choices paradigm to a random walk on a graph: rather than moving to a uniform random neighbour at each step, a controller is allowed to choose from two independent uniform random neighbours. We prove that this allows the controller to significantly accelerate the hitting and cover times in several natural graph classes. In particular, we show that the cover time becomes linear in the number $n$ of vertices on discrete tori and bounded degree trees, of order  $\mathcal{O}(n \log \log n)$ on bounded degree expanders, and of order  $\mathcal{O}(n (\log \log n)^2)$ on the Erd\H{o}s-R\'{e}nyi random graph in a certain sparsely connected regime. 

We also consider the algorithmic question of computing an optimal strategy, and prove a dichotomy in efficiency between computing strategies for hitting and cover times.
\end{abstract}
\noindent\textbf{Keywords:} Power of Two Choices, Markov Chains, Random Walks, Cover time.\\
\noindent\textbf{AMS MSC 2010:} 05C81, 60J10, 68R10, 68Q17.
\setcounter{footnote}{1}
\section{Introduction}

The power of choice paradigm asserts that when a random process is offered a choice between two or more uniformly selected options, as opposed to being supplied with just one, then a series of choices can be made to improve the overall performance. This idea was first applied to the `balls into bins' model \cite{AzarBalanced,HeavySTOC,Mitz}, where it was proved that the power of choice decreases the maximum load from $\BT{\frac{\log n}{\log\log n}}$ to $\BT{\log\log n}$ when assigning $n$ balls to $n$ bins.  The power of choice was later extensively studied for random graphs under the broader class of rule-based random graph processes, known as Achlioptas processes, see e.g.\ \cite{SpeArc,BF-Ach,BK-Ach,RioArc,SW-Ach} and references therein. 
The power of choice has also been studied with regard to the Preferential Attachment process for growing a random connected graph; in this context the choices may have a powerful effect on the degree distribution, see e.g.~\cite{HJchoicePA,MPchoicePA}.

In this paper we extend the power of two choices paradigm to random walk on a graph. We show that for many natural classes of graphs this results in a significant speed-up of the cover and hitting times, which are the expected times to visit all vertices or any fixed vertex from a worst case start vertex. 
We study the \emph{choice random walk (CRW)}, which at every step is offered two uniformly random independently sampled neighbours (with repetition) of the current location, and (with full knowledge of the graph) must choose one as the next step; see Section \ref{formaldef} for more details. We prove that the cover time of CRW decreases to $\BT{n}$ for grids (i.e.\ finite quotients of $\Z^d$) and bounded degree trees on $n$ vertices, and that the cover time of expander graphs decreases to $\BO{n \log\log n}$. We note that for simple random walk (SRW) these cover times are all $\Omega(n\log n)$ and some are $\Omega(n^2)$ \cite{aldousfill}. We also consider computational questions relating to choosing a good strategy: we show that an optimal strategy for minimising a hitting time can be computed in polynomial time, but choosing an optimal strategy for minimising the cover time is $\NP$-hard. See \Sr{sec res} for more details and other results.

Part of our motivation is to improve the efficiency of random walks used in algorithmic applications such as searching, routing, self-stabilization, and query processing in wireless networks, peer-to-peer networks and other distributed systems. One practical setting where routing using the power of choice walk may be advantageous is in relatively slowly evolving dynamic networks such as the internet. For example, say a packet has a target destination $v$ and each node stores a pointer to a neighbour which it believes leads most directly to $v$. If this network is perturbed then the deterministic scheme may get stuck in ``dead ends'' whereas a random walk would avoid this fate. The CRW which prefers edges pointed to by a node may be the best of both worlds as it would also avoid traps but may see a speed up over the simple random walk when the original paths are still largely intact.  

\subsection{Related Literature}
To the best of our knowledge, Avin \& Krishnamachari~\cite{AvinKri} were the first to apply the principle of the power of choice to random walks. However, their version only considers a simple choice rule where the vertex with fewer previous visits is always preferred, and ties are broken randomly. This is in the spirit of balanced allocations, the origin of the power of two choices paradigm. Their results are mainly empirical and suggest a decrease in the variance of the cover time, and a significant improvement in visit load balancing. This is related to the greedy random walk of Orenshtein and Shinkar~\cite{OreShi}, which chooses uniformly from adjacent vertices that have not yet been visited (if possible). This model is well studied for expanders~\cite{BerUnVisit,CooBia}. The power of choice has also been studied in the context of deterministic random walks and the rotor-router model \cite{BeeDet,CooperICALP}.  

Perhaps closest to our work, Azar, Broder, Karlin, Linial and Phillips \cite{ABKLPbias} introduced the $\eps$-biased random walk ($\eps$-BRW) where at each step with probability $\eps>0$ a controller can choose a neighbour of the current vertex to move to, otherwise a uniformly random one is selected. The model is quite similar to ours in the sense that the controller has full knowledge of the graph when choosing a neighbour. Azar et al.\ obtained bounds on the stationary probabilities and show that optimal strategies for maximising or minimising stationary probabilities or hitting times can be computed in polynomial time. There is some overlap with our results in Section \ref{complexsec}, where in particular Theorem \ref{progstat} uses a clever substitution from \cite{ABKLPbias} to express an optimisation problem as a linear program. One major difference is that Azar et al.\ restrict their study to time independent strategies and do not investigate cover times. Three of the authors of this paper have recently extended \cite{ABKLPbias} to the time dependent setting and studied cover times for $\eps$-biased random walks \cite{TBRW}. The conference paper \cite{ITCSpaper} collects some of our results on the CRW from here and on the $\eps$-BRW from \cite{TBRW} giving a comparison between the two processes. 

Azar et al.\ \cite{ABKLPbias} suggest that the most natural choice of bias for the $\eps$-BRW is $\eps=\Theta(1/\dmax)$, where $\dmax$ is the maximum degree. It is shown in \cite[Proposition 1]{ITCSpaper} that the CRW can emulate the $\eps$-BRW provided $\eps\leq 1/\dmax$. However, the reverse does not hold unless the bias $\eps$ is close to $1$; the main obstacle is that avoiding a particular next step is much more difficult for the $\eps$-BRW.
Further evidence that the CRW is more powerful than the $\eps$-BRW is in the cover time bounds we prove for the CRW in Theorem \ref{trelbdd} and for the time dependent version of the $\eps$-BRW in \cite[Theorem 3.2]{TBRW}. For the most natural choice $\eps=\Theta(1/\dmax)$, these bounds differ by a factor which is almost linear in $\dmax$, suggesting that the CRW deals better with high-degree graphs than the $\eps$-BRW.

With regard to complexity questions, we note that for the simple random walk, hitting times can be expressed as the solution to a set of $n$ linear equations and can therefore  be computed in polynomial time. Determining the complexity of computing the cover time, however, is far more challenging and still remains open \cite[Open Problem 6.35]{aldousfill}. Significant progress was made by Ding, Lee and Peres \cite{DLP} who discovered a deterministic polynomial time $\BO{1}$-approximation algorithm for the cover time. In this paper we show that computing an optimal strategy for the cover time of the CRW is $\NP$-hard. 

\subsection{Our Results} \label{sec res}

In this section we shall present the main results we have obtained for CRW. The numbers of these theorems correspond to where they appear in the paper, although some theorem statements have been simplified for ease of exposition. 

The CRW is not reversible in general, however, we show that it can emulate certain reversible chains.  Combining this with the well-known  connection between electrical networks and reversible Markov chains, we obtain the following general bound on the maximum hitting time $\ttwoh(G)$ between any two vertices of a graph $G$. 
\begin{theorem}\label{quad-cover}For any finite graph $G$ we have $\ttwoh(G)<\min\{3\abs{E}, n^2\}$.
\end{theorem}

This is tight up to constants at both ends of the density spectrum and improves considerably over the well-known $\BO{n \abs{E}}$ worst-case bound for the simple random walk. A witness to tightness for sparse graphs is traversing a path from end to end, 
and for dense graphs hitting a vertex connected by a single edge to a clique.       

\medskip
Most of this paper focuses on the cover time $\ttwo(G)$ for CRW on a graph $G$  under an optimal strategy. For the SRW $\thit$, the maximum hitting time between any two vertices, determines the cover time up to a $\log n$ factor by Matthew's bound \cite[Ch.\ 11.2]{levin2009markov}. However, due to the effect of the choices, this does not apply to the CRW and so we develop other methods to bound $\ttwo$. 

The next result implies that  $\ttwo(T)$ is linear for a bounded degree tree $T$:

\begin{theorem}\label{tree linear}
	For every $d\in \N$ and every $n$-vertex tree $T$  with maximum degree $d$, we have \[\ttwo\left(T\right)\leq 8dn.\]
\end{theorem}
Our strategy for achieving this changes with time, and covers the vertices of $T$ in a prescribed order.

Next, we obtain a similar result for $d$-dimensional grids and tori. The proof technique is different: we 
show that there exists a CRW strategy for the infinite $d$-dimensional grid under which the CRW becomes strongly recurrent. In particular, the expected crossing time of any edge is finite. We use this to deduce
	\begin{theorem}\label{covergrid}
	For any $d$, and any  $d$-dimensional $n$-vertex torus or grid $G$, we have  $\ttwo(G)=\BT{n}$ and $\ttwoh(G) =\BT{\diam(G)} = \BT{n^{1/d}}$.
	\end{theorem}

Avin \& Krishnamachari~\cite{AvinKri} conjecture a speed up for their aforementioned local power of two choice walk on the $2$-dimensional grid. Theorem \ref{covergrid} corroborates this for our global version of the process, but does not yet prove their  conjecture.

\medskip
We develop a method for boosting the probabilities of rare events in the CRW, which gives bounds on hitting and cover times. Perhaps the most important application of these methods is to expander graphs:

	\begin{theorem}\label{trelbddcor}For every sequence $(G_n)_{n\in \N}$ of bounded degree expanders, where $G_n$ has $n$ vertices, we have  \[\ttwo(G_n)=\BO{n \log \log n}.\] 	\end{theorem}

\Tr{trelbddcor} is in fact an immediate corollary of a more general bound (\Tr{trelbdd}), bounding $\ttwo(G)$ in terms of the hitting time (of the SRW), relaxation time and degree discrepancy. In particular, these bounds apply w.h.p.\ to the random $d$-regular graph for fixed $d$. Another application of these methods gives the following bounds for the Erd\H{o}s--R\'enyi random graph, showing a significant improvement on cover time for the regime with sub-polynomial growth of the average degree.
\begin{theorem}\label{gnp}Let $\mathcal{G} \overset{d}{\sim}\mathcal{G}(n,p)$ where $  np\geq c\ln n$ for any fixed $c>1$ and $\log np =\lo{\log n}$. Then w.h.p.\ 
	\begin{enumerate}[(i)]
		\item  $\ttwo\left(\mathcal{G}\right) = \BO{n \cdot \log(np)\cdot \log\log n}$ 
		\item  $\ttwoh\left(\mathcal{G}\right) = n^{1-\Omega(1/ \log(np))}$ .
	\end{enumerate} 
	\end{theorem}

Finally, \Sr{complexsec}  deals with the computational complexity of computing optimal strategies to minimise hitting and cover times. We show the following dichotomy: an optimal strategy to hit a set of vertices can be computed efficiently whereas choosing between two cover time strategies is $\NP$-hard. 
More precisely, we have

 \begin{theorem}\label{hitexact}For any graph $G$, $S \subset V$ and $x\in V\setminus S$, a strategy minimising the hitting time of $S$ from $x$ can be computed in time $\poly(\abs{V})$. \end{theorem}

Notice that any strategy for covering a graph must specify a set of choice preferences from every vertex for every possible set of vertices covered and thus may have size exponential in $n$. This makes the second half of the dichotomy as phrased above sound somewhat modest. However, what we show is that even in the ``on-line'' setting where one is given the set covered so far then just choosing the next step (outputting something of polynomial size) is $\NP$-hard.
    \begin{theorem}\label{CovIsHard} Given the covered set $X$ and position $v$ of the walk at some time, it is $\NP$-hard to choose the next step from two neighbours of $v$ so as to minimise the expected time for the CRW to visit every vertex not in $X$, assuming an optimal strategy is followed thereafter.
\end{theorem}
Our proof shows that this remains $\NP$-hard if $G$ is constrained to have maximum degree $3$. To the best of our knowledge, this is the first intractability result for processes involving random walks with choice.

\medskip
Our results for fundamental graph topologies are summarised in Table \ref{tbl:results}, along with the corresponding hitting and cover times for the simple random walk for ease of comparison. 

\begin{table}[ht]  
	\resizebox{\textwidth}{!}{\begin{tabular}{@{}lcccc@{}} \toprule
			& \multicolumn{2}{c}{Hitting time} & \multicolumn{2}{c}{Cover time} \\ \cmidrule(lr){2-3} \cmidrule(lr){4-5}
			Graph family & SRW & CRW & SRW & CRW \\ \midrule
			Subcubic ($\dmax\leq 3$) & $\BO{n^2}$ &  $\BT{\diam(G)}$& $\BO{n^2}$ & $\BT n$ \\ \addlinespace[.25\defaultaddspace]
			$2$-dim.\ torus/box  & $ \BT{n\log n}$& $\BT{n^{1/d}}$ & $\BT{n\log^2 n}$ &  $\BT n$  \\ \addlinespace[.25\defaultaddspace]
			$d$-dim.\ torus/box  & $ \BT n$& $\BT{n^{1/d}}$ & $ \BT{n\log n}$ &  $\BT n$  \\ \addlinespace[.25\defaultaddspace]
			Bounded-degree tree & $\BO{n^2}$ & $\BO{n}$ & $\BO{n^2}$ & $\BT n$  \\ \addlinespace[.25\defaultaddspace]
			$d$-regular expander & $\BT{n}$ & 
			$n^{1-\Omega(1/\log d)}$& $\BT{n\log n}$  &$\BO{n\log(d)\log\log n}$ \\ \addlinespace[.25\defaultaddspace]
			$\mathcal{G}(n,\polylog(n)/n)$ & $\BT n$  & $n^{1-\Omega(1/\log\log n)}$  & $\BT{n\log n}$&  $\BO{n(\log\log n)^2}$\\ \addlinespace[.25\defaultaddspace]
			Complete graph & $n-1$ & $\frac{(n-1)^2}{2n-3}$   & $\sim n\ln n$ & $\sim(n\ln n)/2$  \\ \bottomrule
	\end{tabular}}\caption{All graphs are on $n$ vertices. The third and fifth columns contain our results on the hitting and cover times of the choice random walk. The second and fourth columns give corresponding quantities for the simple random walk for comparison, which can be found in \cite{aldousfill} and elsewhere. For random graphs, these bounds apply w.h.p.}
	\label{tbl:results}
\end{table}

\section{Preliminaries} \label{formaldef}

The \textit{choice random walk} (CRW) is a discrete time stochastic process $(X_t)_{t\geq 0}$ on the vertices of a connected graph $G=(V,E)$, influenced by a \emph{controller}. The starting state is a fixed vertex; at each time $t\in\mathbb N$ the controller is presented with two neighbours $\{c_1^t,c_2^t \}$ of the current state $X_t$ chosen uniformly at random with replacement and must choose one of these neighbours as the next state $X_{t+1}$. We assume that at each time $t$ the controller knows the graph $G$, its current position $X_t\in V$, and $\mathcal{H}_t:=\left(X_i,\{c_1^i,c_2^i\}\right)_{i=0}^t$, the \textit{history} of the process so far. The controller has access to arbitrary computational resources and an infinite string of random bits $\omega$ in order to choose $X_{t+1}$ from $\{c_1^t,c_2^t\}$. A \textit{CRW strategy} is a function which given any  $t$, $\mathcal{H}_t$ and $\{c_1^t, c_2^t\} \subseteq \Gamma(X_t) $, outputs one of $\{c_1^t, c_2^t\}$ (where we write $\Gamma(v) := \{w : vw \in E \}$ for the neighbourhood of $v$).  Note that any such strategy defines a Markov chain on $V$. 

We say that a CRW strategy is \textit{unchanging} if it is independent of both time and the history of the walk. We say that an unchanging strategy is \textit{reversible} if the Markov chain it defines is reversible. We recall that any reversible Markov chain is identically distributed with a random walk on an edge-weighted graph as explained e.g.\ in \cite{aldousfill}, we shall make use of this representation. For many graphs with a high degree of symmetry we can find good reversible strategies, and we can then use tools from the theory of reversible Markov chains to analyse the CRW on these graphs. 
The strategies we consider may use random bits in addition to those used for choosing $\{c_1^t, c_2^t\}$; we say a strategy is \textit{deterministic} if no additional random bits are used.

If we are trying to minimise the expected hitting time of a given vertex, it is easy to see that there is an unchanging, deterministic optimal strategy. However, it need not be reversible; an example where it is not is given in Figure \ref{bull}. We shall use reversible strategies to bound the hitting time of the optimal strategy; these will also in general not be deterministic.

\begin{figure}
\centering
\begin{tikzpicture}[xscale=0.9,yscale=0.9,knoten/.style={thick,circle,draw=black,minimum size=.5cm,fill=white},target/.style={thick,circle,draw=black,minimum size=.5cm,fill=red},edge/.style={thick,black},dedge/.style={thick,black,-stealth}]
	\node[knoten] (u) at (0,2.598) {};
	\node[knoten] (v) at (3,2.598) {};
	\node[knoten] (w) at (4.5,0) {};
	\node[knoten] (x) at (6,2.598) {};
	\node[target] (y) at (9,2.598) {};
	\draw[edge] (u) to (v);
	\draw[edge] (v) to (w);
	\draw[edge] (w) to (x);
	\draw[edge] (x) to (y);
	\draw[edge] (v) to (x);
	\draw[dedge] (0.25,2.848) to node[above] {$\scriptstyle 1$} (1.5,2.848);
	\draw[dedge] (3.25,2.848) to node[above] {$\scriptstyle 5/9$} (4.5,2.848);
	\draw[dedge] (6.25,2.848) to node[above] {$\scriptstyle 5/9$} (7.5,2.848);
	\draw[dedge] (2.75,2.348) to node[below] {$\scriptstyle 1/9$} (1.5,2.348);
	\draw[dedge] (8.75,2.348) to node[below] {$\scriptstyle 1$} (7.5,2.348);
	\draw[dedge] (3.342,2.506) to node[right] {$\scriptstyle 3/9$} (3.967,1.424);
	\draw[dedge] (5.75,2.348) to node[below] {$\scriptstyle 1/9$} (4.5,2.348);
	\draw[dedge] (6.092,2.256) to node[right] {$\scriptstyle 3/9$} (5.467,1.174);
	\draw[dedge] (4.158,0.095) to node[left] {$\scriptstyle 1/4$} (3.534,1.174);
	\draw[dedge] (4.405,0.342) to node[left, near end] {$\scriptstyle 3/4$} (5.033,1.424);
	\end{tikzpicture}
	\caption{Optimal CRW transition probabilities on the bull graph for hitting the rightmost vertex. The corresponding Markov chain is not reversible.}\label{bull}
\end{figure}
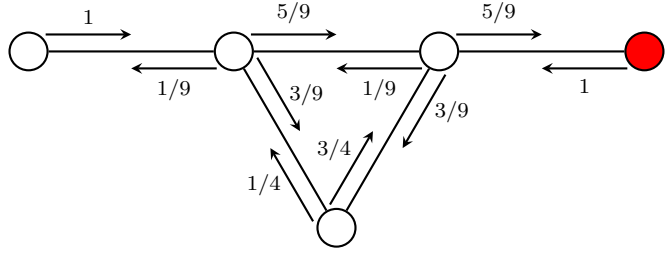

\medskip
For a strategy $\alpha$ and for a vertex $v$ and distinct neighbours $i,j$ let $\axyz vij\in [0,1]$ be the probability that when the walk is at $v$ it chooses $i$ when offered $\{i,j\}$ as choices, i.e.\ $\axyz vij:=\Pr{X_{t+1}=i \mid X_{t}=v, c^t=\{i,j\}}$ (this probability is also conditional on $\mathcal{H}_{t}$ but we suppress this for notational convenience). These are the only parameters we may vary, but we shall find it convenient to define $\axyz vii:=1/2$ for each $i$ adjacent to $v$. Thus 
\begin{equation}\label{alphacond}\text{for each }v\in V: \;\axyz vij \in [0,1] \text{ and } \axyz vji = 1-\axyz vij \text{ for all } i,j \in \Gamma(v).\end{equation}
The transition probabilities $q_{v,i} $ for the strategy $\alpha$ are then given by 
\begin{equation}\label{potctp}q_{v,i}=\frac{2\sum_{j\in \Gamma(v) } \axyz vij}{\mdeg{v}^2}.\end{equation}
For a family of parameters $\axyz vij$ to yield a valid set of transition probabilities $q_{i,j}$ they must satisfy \begin{equation}\label{alphasumcond} \sum_{i \in \Gamma(v)}\sum_{j \in \Gamma(v)}\axyz vij =\frac{\mdeg{v}^2}{2}\end{equation}
for every $v\in V$. Notice that any weights satisfying \eqref{alphacond} also satisfy \eqref{alphasumcond}.

\medskip
Let $\Ctwo{v}{G}$ denote the minimum expected time (taken over all strategies) for the CRW to visit every vertex of $G$ starting from $v$, and define the \textit{cover time} $\ttwo(G):=\max_{v\in V}\Ctwo vG$. Analogously, let $\Htwo xy$ denote the minimum expected time for the CRW to reach $y$, which may be a single vertex or a set of vertices, starting from a vertex $x$ and define the \textit{hitting time} $\ttwoh(G): = \max_{x,y\in V}\Htwo xy$. We drop the superscript from this notation when referring to the associated quantities for the simple random walk.

\section{Bounds from Weighted Graphs}
In this section we analyse CRW strategies which emulate a random walk on a weighted graph. We prove a tight general bound on hitting times and show that any vertex of a graph with maximum degree $3$ can be hit in time proportional to its distance from the start vertex.  
\subsection{An Extremal Hitting Time Result}
In this section we prove that $\ttwo[hit](G) = \BO{e(G)}$ for an arbitrary graph $G$, where $e(G)$ is the number of edges. This bound is best possible up to the implied constants: for sparse graphs, the path has $\ttwo[hit]$ around $2e(G)$. For dense graphs, a clique with a pendant path, where the length of the path is growing much slower than the size of the clique, gives $\ttwo[hit]$ around $3n^2/8$. 
\begin{lemma}Fix a vertex $v$, and partition its neighbours into two sets, $A$ and $B$. There is an unchanging strategy for the CRW
	such that whenever the walker is at $v$ it moves to a random neighbour according to the probability distribution
	in which every vertex in $B$ is twice as likely as every vertex in $A$.\end{lemma}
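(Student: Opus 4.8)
The plan is to construct the strategy explicitly at $v$, using a single tie-breaking probability to resolve pairs that straddle $A$ and $B$, and to let the strategy be an arbitrary fixed rule at every other vertex so that it is unchanging. Write $a=|A|$, $b=|B|$, and $d=\mdeg{v}=a+b$. The distribution in which each vertex of $B$ carries twice the mass of each vertex of $A$ is $q_{v,i}=\tfrac1{a+2b}$ for $i\in A$ and $q_{v,i}=\tfrac2{a+2b}$ for $i\in B$, so these are the transition probabilities I want to realise at $v$ through \eqref{potctp}.

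Concretely, at $v$ I would set $\axyz vij=\tfrac12$ whenever $\{i,j\}\subseteq A$ or $\{i,j\}\subseteq B$, and, when exactly one of $i,j$ — say $i$ — lies in $B$, set $\axyz vij=p:=\frac{2a+3b}{2(a+2b)}$ (so $\axyz vji=1-p=\frac{b}{2(a+2b)}$). Since $p\ge\tfrac12\iff a+b\ge0$ and $p\le1\iff b\ge0$, we have $p\in[\tfrac12,1]$, so this family obeys \eqref{alphacond} and hence also \eqref{alphasumcond}; by \eqref{potctp} it therefore defines a genuine transition rule at $v$, and at all other vertices we simply choose between the two offered neighbours uniformly at random.

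It then remains only to check, via \eqref{potctp}, that the $q_{v,i}$ come out as claimed. For $i\in A$, splitting $\sum_{j\in\Gamma(v)}\axyz vij$ into its $a$ terms with $j\in A$ (each $\tfrac12$, including $j=i$) and its $b$ terms with $j\in B$ (each $1-p$) gives $\tfrac a2+b(1-p)=\frac{a(a+2b)+b^2}{2(a+2b)}=\frac{(a+b)^2}{2(a+2b)}$, whence $q_{v,i}=\frac2{d^2}\cdot\frac{d^2}{2(a+2b)}=\frac1{a+2b}$. For $i\in B$ the analogous computation yields $\sum_j\axyz vij=\tfrac b2+ap=\frac{(a+b)^2}{a+2b}$ and $q_{v,i}=\frac2{a+2b}$, exactly twice the value for $A$; moreover $a\cdot\frac1{a+2b}+b\cdot\frac2{a+2b}=1$, confirming this is a valid probability distribution.

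There is no real obstacle here beyond pinning down the value $p$; the identity that makes the two computations consistent is $(a+b)^2-a(a+2b)=b^2$. It is worth noting that a randomised tie-break is unavoidable: resolving an $A$-versus-$B$ pair deterministically in favour of $B$ would make each $B$-vertex $\frac{2a+b}{a}$ times as likely as each $A$-vertex rather than exactly twice as likely, so no deterministic unchanging strategy achieves the stated ratio in general — consistent with the earlier remark that the useful reversible strategies here are typically not deterministic.
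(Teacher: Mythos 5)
Your construction is correct and is essentially the paper's: the same one-parameter strategy (uniform tie-breaks within $A$ and within $B$, a single probability for mixed pairs), except that the paper only proves the \emph{existence} of a suitable tie-breaking probability by an intermediate-value/continuity argument, whereas you exhibit it in closed form as $p=\frac{2a+3b}{2(a+2b)}$ and verify the resulting transition probabilities directly. Your computation checks out (including the boundary check $p\in[\tfrac12,1]$ and the normalisation), so this is a mild sharpening of the paper's argument rather than a different route.
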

\begin{proof}
	Fix some number $p\in[0,1]$, and consider the following strategy for moving from $v$. If offered two choices from the same set, choose between them uniformly
	at random, but if offered one choice from $A$ and one choice from $B$, choose the one from $A$ with probability $p$. 
	Clearly all elements of $A$ are equiprobable, as are all elements of $B$, so it is sufficient to show that for some $p$ this strategy chooses
	an element of $A$ with probability $q=\frac{\abs A}{\abs A+2\abs B}$. If this is the case each element of $A$ will be chosen with probability $\frac{1}{\abs{A}+2\abs{B}}$ and each element of $B$ w.p. $\frac{2}{\abs{A}+2\abs{B}}$. 
	If $p=1/2$ the probability of choosing an element of $A$ is $\frac{\abs A}{\abs A+\abs B}\geq q$,
	and if $p=0$ then it is $q':=\bfrac{\abs A}{\abs A+\abs B}^2$. Since $(\abs A+\abs B)^2\geq\abs A(\abs A+2\abs B)$, we have $q'\leq q$, and hence for some 
	$p\in[0,1/2]$ we have the required probability by continuity.
\end{proof}
By considering the strategy at each vertex separately, we immediately get the following consequence.
\begin{corollary}\label{double-weight}
Let $G=(V,E)$ be a locally finite weighted graph with weight function $w: E\to \mathbb{R}_+$, having the property that for any two incident edges $xy,xz$ either $w(xy)=w(xz)$, or $w(xy)=2w(xz)$, or $2w(xy)=w(xz)$. Then there is an unchanging strategy for the CRW on $G$ which simulates the random walk defined by the weights $w$.
\end{corollary}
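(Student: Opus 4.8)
The plan is to reduce the claim to a vertex-by-vertex application of the preceding lemma, as the sentence introducing the corollary already hints. The first step is to unpack the weight hypothesis locally. Fix a vertex $v$ and let $w_v$ be the smallest weight of an edge incident to $v$. For any other edge $vu$, applying the hypothesis to the pair $vu$ and an edge $vu'$ of weight $w_v$ gives $w(vu)/w_v\in\{1/2,1,2\}$, while minimality gives $w(vu)\ge w_v$; hence $w(vu)\in\{w_v,2w_v\}$. Thus $\Gamma(v)=A_v\cup B_v$ where $A_v=\{u:w(vu)=w_v\}$ and $B_v=\{u:w(vu)=2w_v\}$, with $B_v$ possibly empty, and local finiteness guarantees $A_v$ and $B_v$ are finite.

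Next I would identify the transition law we must reproduce. The random walk defined by $w$ moves from $v$ to a neighbour $u$ with probability $w(vu)\big/\sum_{u'\in\Gamma(v)}w(vu')$, which by the previous paragraph equals $1/(|A_v|+2|B_v|)$ if $u\in A_v$ and $2/(|A_v|+2|B_v|)$ if $u\in B_v$. In other words, each vertex of $B_v$ is exactly twice as likely as each vertex of $A_v$, which is precisely the type of distribution the Lemma produces: applying it at $v$ with the partition $\Gamma(v)=A_v\cup B_v$ yields an unchanging local rule for choosing between the two offered neighbours at $v$ that realises this distribution. When $B_v=\emptyset$ the Lemma instead realises the uniform distribution on $\Gamma(v)$, which is again exactly the $w$-weighted transition law at such a vertex.

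Finally I would assemble these local rules into one global CRW strategy: whenever the walk is at $v$, apply the rule associated with $v$ to the offered pair $\{c_1^t,c_2^t\}$. Since each local rule depends only on which of $A_v$ and $B_v$ the offered vertices lie in, the combined strategy is independent of $t$ and of the history $\mathcal{H}_t$, hence unchanging, and it is manifestly a valid CRW strategy. Out of every vertex it induces exactly the one-step transition probabilities of the $w$-weighted walk, and since a time-homogeneous random walk is determined by its one-step transitions, the CRW under this strategy is identically distributed with the $w$-weighted random walk from any starting vertex, i.e.\ it simulates that walk. The only part that needs genuine care is the first step — pinning the incident edge weights at each vertex down to at most two values in ratio $2$, and correctly handling the degenerate all-equal case — after which the corollary follows directly from the Lemma, so I do not anticipate a real obstacle.
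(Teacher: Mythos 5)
Your proposal is correct and is exactly the argument the paper intends: the paper's entire proof is the one-line remark that the corollary follows ``by considering the strategy at each vertex separately,'' and you have simply filled in the routine details (that the hypothesis forces the incident weights at each vertex into at most two values in ratio $2$, that the resulting transition law is the one the Lemma realises, and that the local rules glue into a single unchanging strategy). No discrepancy with the paper's approach.
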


Here, by the \emph{random walk defined by the weights $w$} we mean the reversible Markov chain where the transition probability from a vertex $x$ to a neighbour $y$ is proportional to $w(xy)$.
For a weighted graph $(G,w)$, write $w(G)=\sum_{e\in E(G)}w(e)$.
\begin{lemma}\label{weighted-hit}Let $(G,w)$ be a finite weighted graph, and let $x$ be a vertex such that every edge incident with $x$ has weight $1$. 
	Then for any vertex $y$ adjacent to $x$, we have \[H_{y}(x)\leq w(G)+w(G\setminus x).\]\end{lemma}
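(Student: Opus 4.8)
The plan is to exploit the standard identity for the expected return time of a reversible walk, together with a one-step decomposition at $x$. Write $w(v):=\sum_{u\in\Gamma(v)}w(uv)$ for the weighted degree of $v$, so that the walk defined by $w$ has stationary distribution $\pi(v)=w(v)/\sum_{u}w(u)=w(v)/(2w(G))$, since $\sum_u w(u)$ counts each edge twice. Consequently the expected first return time to $x$, namely $\mathbb{E}_x[T_x^+]$ with $T_x^+:=\min\{t\geq 1: X_t=x\}$, equals $1/\pi(x)=2w(G)/w(x)$.

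First I would condition on the first step of the walk out of $x$. By the Markov property (and since a weighted graph has no self-loop at $x$, so the first step genuinely leaves $x$),
\[\mathbb{E}_x[T_x^+]=\sum_{z\in\Gamma(x)}\frac{w(xz)}{w(x)}\bigl(1+H_z(x)\bigr)=1+\frac{1}{w(x)}\sum_{z\in\Gamma(x)}w(xz)\,H_z(x).\]
Equating this with $2w(G)/w(x)$ and multiplying through by $w(x)$ gives the identity $\sum_{z\in\Gamma(x)}w(xz)\,H_z(x)=2w(G)-w(x)$.

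Now the hypothesis that every edge incident with $x$ has weight $1$ does the remaining work: it forces $w(xz)=1$ for each $z\in\Gamma(x)$, so $w(x)=|\Gamma(x)|$ and the left-hand side collapses to $\sum_{z\in\Gamma(x)}H_z(x)$; moreover deleting $x$ removes precisely the edges at $x$, of total weight $w(x)$, so $w(G\setminus x)=w(G)-w(x)$ and hence $2w(G)-w(x)=w(G)+w(G\setminus x)$. Since $y\in\Gamma(x)$ and every $H_z(x)\geq 0$, discarding all terms except $z=y$ yields $H_y(x)\leq\sum_{z\in\Gamma(x)}H_z(x)=w(G)+w(G\setminus x)$, as required. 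There is essentially no obstacle here: the only points to state carefully are the return-time formula $\mathbb{E}_x[T_x^+]=1/\pi(x)$ and the no-self-loop convention, after which the bound is immediate — indeed an equality before the non-$y$ terms are thrown away, which also explains why the estimate is tight exactly when $x$ has a single neighbour.
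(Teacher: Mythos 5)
Your proof is correct and is essentially the paper's own argument: both use the return-time identity $\mathbb{E}_x[T_x^+]=2w(G)/w(x)$ together with a first-step decomposition at $x$, and then discard the non-$y$ terms from $\sum_{z\sim x}H_z(x)=2w(G)-d(x)=w(G)+w(G\setminus x)$. No gaps.
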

\begin{proof}Since the stationary distribution is given by $\pi_v=\frac{1}{2w(G)}\sum_{u\sim v}w(uv)$,
	we have expected return time to $x$ of $\pi^{-1}_x=\frac{2w(G)}{d(x)}$, see for example \cite[Sect.\ 3.2]{aldousfill}. 
	Thus 
	\[\frac{2w(G)}{d(x)}=1+\sum_{z\sim x}\frac1{d(x)}H_{z}(x),\]
	implying
	\[H_{y}(x)\leq\sum_{z\sim x}H_{z}(x)=2w(G)-d(x)=w(G)+w(G\setminus x).\qedhere\]
\end{proof}
We now restate and prove our result for CRW hitting times.  
\newtheorem*{thm:quad-cover}{Theorem \ref{quad-cover}}
\begin{thm:quad-cover}For any finite graph $G$ we have $\ttwoh(G)<\min\{3\abs{E}, n^2\}$.
\end{thm:quad-cover}

\begin{proof}We have to show that the above bounds apply to $\Htwo yx$ for two arbitrary vertices $x,y$. 
	Define a weight function $w:E(G)\to\mathbb R_+$ by $w(uv)=2^{-\min(d(u,x),d(v,x))}$. Note that $w$ satisfies the requirements of 
	Corollary~\ref{double-weight}, so we can bound $\Htwo yx$ by the corresponding hitting time of the random walk on $(G,w)$. We will now bound
	the latter hitting time.
	
	Write $d$ for the maximum distance of a vertex from $x$, and $V_k$ for the set of vertices at distance exactly $k$ from $x$.
	Note that if $y\in V_{k+1}$ then 
	\[H_{y}(x)\leq H_{y}(V_{k})+\max_{z\in V_{k}}H_{z}(x),\]
	and consequently
	\[\max_{y\in V(G)}H_{y}(x)\leq\sum_{k=0}^{d-1}\max_{z\in V_{k+1}}H_{z}(V_{k}).\] 
	For each $0\leq k\leq d-1$ let $G_k$ be the simple weighted graph obtained by deleting $\bigcup_{i<k}V_i$ and identifying vertices in $V_k$ to give a vertex $v_k$; 
	if a vertex in $V_{k+1}$ has multiple edges to $V_k$, delete all but one of them to leave a simple graph. Since removing
	edges between $V_{k+1}$ and $V_k$ cannot reduce the hitting time of $V_k$, we have for any $z\in V_{k+1}$ that $H^{G}_z(V_k)\leq H^{G_k}_z(v_k)$. Note that
	the latter hitting time is unchanged by multiplying all weights by $2^k$, and since every $z\in V_{k+1}$ is adjacent to $v_k$ in $G_k$,
	by Lemma~\ref{weighted-hit} we have $H^{G_k}_z(v_k)\leq 2^k(w(G_k)+w(G_k\setminus v_k))$.
	Thus 
	\[\max_{y\in V(G)}H_{y}(x)\leq\sum_{k=0}^{d-1}2^k(w(G_k)+w(G_k\setminus v_k)).\]
	If $e$ is an edge between $V_j$ and $V_{j+1}$ then the contribution of $e$ to the $k$th term of the above sum is $2^{k-j+1}$ if $k<j$, at most $1$ if $k=j$ and $0$ otherwise,
	so its total contribution is less than $3$, and is less than $2$ if $e$ is one of the edges deleted to make $G_j$ simple. If $e$ is an edge within $V_j$ then its contribution to the $k$th term is $2^{k-j+1}$ if $k<j$ and $0$ otherwise, so
	its total contribution is less than $2$. The first bound follows. Note that of the edges of the first type which are not deleted, there is exactly one from each vertex (other than $x$) to a vertex in a lower layer of $G$, and so these edges form a tree. Thus there are $n-1$ such edges, whose contribution is bounded by $3$, and at most $\binom n2-(n-1)$ other edges, whose contribution is bounded by $2$, giving a bound of $2\binom n2+n-1=n^2-1$.
\end{proof}

\subsection{Cover Times of Subcubic Graphs}
In this section we prove the CRW cover time of any subcubic graph is linear in the number of vertices, where we remind the reader that a subcubic graph is a graph with maximum degree $3$.

\begin{proposition}\label{binary}Let $G$ be any connected graph of maximum degree $3$. Then $\Htwo{u}{v}\leq 9$ for any $uv\in E(G)$. If in addition $G$ is finite with $n$ vertices then $\ttwo(G) = \BT{n}$.
\end{proposition}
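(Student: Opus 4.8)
The plan is to establish the two assertions in turn, and the second will follow from the first together with a standard ``cover-by-visiting-each-edge'' argument once we have the right deterministic control on individual edge crossings.

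For the first assertion, $\Htwo{u}{v}\le 9$ for any edge $uv$, I would argue directly using the structure of a maximum-degree-$3$ graph. Fix the target $v$ and consider a simple strategy: whenever the walker is at a vertex $w\neq v$ that is adjacent to $v$, prefer $v$ whenever it is offered (so the walker moves to $v$ with probability $1-(1-1/d(w))^2\ge 1-(2/3)^2=5/9$ from such a $w$), and from any other vertex prefer, among the offered choices, a neighbour that is closer to $v$ in graph distance, breaking ties arbitrarily. Because $\dmax\le 3$, from a vertex at distance $k\ge 1$ from $v$ that is not a neighbour of $v$ there is at least one neighbour at distance $k-1$, so with probability at least $1-(2/3)^2=5/9$ the walker is offered it and moves closer. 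Thus the distance-to-$v$ process is dominated by a process that decreases by $1$ with probability $\ge 5/9$ at each step until the walker reaches a neighbour of $v$, from which it hits $v$ with probability $\ge 5/9$ each step; this is dominated by a geometric-type bound. A cleaner route, given the tools already in the excerpt, is to invoke Theorem~\ref{quad-cover} localized to a suitable subgraph, but the crude distance-decrease argument above already yields a bound of the form $H_y(v)\le C$ with an absolute constant, and a careful accounting (the worst case being roughly a geometric number of ``resets'' each costing an expected constant number of steps) gives the constant $9$; one just needs to be a little careful at the vertex $v$ itself and at its neighbours. The main obstacle here is the bookkeeping to get the \emph{explicit} constant $9$ rather than merely $O(1)$; one likely optimizes the strategy at neighbours of $v$ versus farther vertices and solves the resulting small linear system for the worst-case $H_y(v)$.

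For the second assertion, I would cover $G$ by repeatedly hitting uncovered vertices along a fixed spanning structure. Take a spanning tree $T$ of $G$ and fix a closed walk $W=(w_0,w_1,\dots,w_{2(n-1)}=w_0)$ in $T$ traversing every edge of $T$ exactly twice (an Euler tour of the doubled tree), so $W$ visits all $n$ vertices and has length $2(n-1)$. The strategy is: at stage $i$, having currently ``reached'' $w_i$ along $W$, run the CRW with the hitting strategy for target $w_{i+1}$ from the first assertion (applied with target $w_{i+1}$, using that $w_i w_{i+1}\in E(T)\subseteq E(G)$), until $w_{i+1}$ is hit, then advance to stage $i+1$. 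By the first assertion each stage takes expected time at most $9$, so by linearity of expectation the expected time to traverse all of $W$ — and hence visit every vertex — is at most $9\cdot 2(n-1) = \BO n$. This immediately gives $\Ctwo{v}{G}=\BO n$ for the chosen starting vertex $w_0$, and taking the max over starting vertices (equivalently, starting the tour at the vertex corresponding to the walk's start) gives $\ttwo(G)=\BO n$. I expect no serious obstacle in this part: it is the standard spanning-tree-tour argument, the only subtlety being that the strategy is time- and history-dependent (it changes which vertex it is trying to hit as the covered set grows), which is permitted by the definition of a CRW strategy in Section~\ref{formaldef}.

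Summarizing the order of steps: (1) define the localized hitting strategy for a fixed target in a subcubic graph and prove $\Htwo uv\le 9$ for adjacent $u,v$ via a distance-decrease coupling plus a small linear-system computation at $v$ and its neighbours; (2) fix a spanning tree and an Euler tour of its doubling; (3) concatenate the hitting strategies along the tour and apply linearity of expectation to conclude $\ttwo(G)\le 9\cdot 2(n-1)=\BO n$.
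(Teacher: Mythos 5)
Your overall approach is the same as the paper's: a ``always step towards $v$ when offered'' strategy giving probability at least $1-(2/3)^2=5/9$ of decreasing the distance, followed by a spanning-tree tour of length at most $2(n-1)$ in which each leg costs at most $9$ in expectation. The second part is complete and matches the paper (which uses a depth-first tour of at most $2n-3$ edges, giving $\ttwo(G)\leq 18n-27$; your $18(n-1)$ is equally fine).

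The one genuine gap is in the first part, exactly where the constant $9$ is claimed. A ``geometric-type bound'' does not suffice: with probability up to $4/9$ the walker moves to a non-preferred neighbour and may \emph{increase} its distance to $v$, so you cannot treat the steps as independent trials for a single success. Nor is any optimisation at neighbours of $v$ versus farther vertices, or any linear system, needed. The paper closes this cleanly with a coupling: fix for each $w\neq v$ one neighbour $f(w)$ closer to $v$, always prefer $f(X_t)$, and couple $\dist(X_t,v)$ with a walk $Y_t$ on $\Z$ with $Y_0=1$ that decreases by $1$ exactly when $X_{t+1}=f(X_t)$ and increases by $1$ otherwise. Then $Y_t\geq\dist(X_t,v)$ always (even though the true distance may stay constant or the walker may accidentally move closer), so $X$ hits $v$ by the time $Y$ hits $0$; and since $\Pr{Y_{t+1}=Y_t-1}\geq 5/9$, the downward drift is at least $5/9-4/9=1/9$, whence the expected time for $Y$ to go from $1$ to $0$ is at most $9$ (e.g.\ by optional stopping/Wald). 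This is the missing computation; with it, your argument is correct as stated.
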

\begin{proof}For each $w\neq v$ choose a neighbour $f(w)$ such that $d(f(w),v)<d(w,v)$. Set $f(v)=u$. Let $(X_t)$ be a CRW starting at $u$ using the strategy of choosing $X_{t+1}=f(X_t)$ whenever possible. Couple this with a random walk $Y_t$ on $\Z$ starting at $1$, with $Y_{t+1}=Y_t-1$ if $X_{t+1}=f(X_t)$ and $Y_{t+1}=Y_t+1$ otherwise. Clearly $Y_t\geq\dist(X_t,v)$ at every step of the walk, and so $X_t$ reaches $v$ on or before the first time $t$ that  $Y_t=0$. Since $\Pr{Y_{t+1}=Y_t-1}\geq 5/9$, we have
	\begin{equation}\label{HYt}\Htwo uv\leq \Ex{\min\{t:Y_t=0\}\mid \{Y_0=1 \} }\leq 9.
	\end{equation}
	
	If $G$ has $n$ vertices, let $v$ be any vertex and choose a spanning walk in the graph starting at $v$ and having at most $2n-3$ edges. Such a walk always exists, for example a depth-first exploration of a spanning tree. Proceed in $2n-3$ rounds, in each round using the strategy above to hit the next vertex of the walk. Each round has expected duration at most $9$ by \eqref{HYt}, and so $\ttwo(G)\leq 18n-27$.\end{proof}
\begin{remark}Since $\ttwoh(G)\leq\ttwo(G)$, this is also linear. Even for $3$-regular graphs the diameter could grow linearly, so this is the best possible.\end{remark}
\section{Trees}

In this section we show that $\ttwo(T)=\Theta(n)$ for trees $T$ of bounded degree. Even more, we will prove that we can even specify an arbitrary (closed) walk $W$ traversing each edge of $T$ once in each direction, and cover the vertices of $T$ in the order dictated by $W$ in linear expected time. This is the gist of the following result:
\newtheorem*{thm:tree linear}{Theorem \ref{tree linear}}
\begin{thm:tree linear}
	For every $d\in \N$ and every tree $T$  with maximum degree $d$, we have \[\sum_{x,y\in V(T), xy\in E(T)} \Htwo xy \leq 8d\abs{V(T)}.\]
\end{thm:tree linear}

This result will be proved by realising a strategy to cover $T$ as a sequence of weighted walks, and then bounding the hitting times in these walks using the Essential Edge Lemma. We shall now remind the reader of the setting and statement of this lemma: We say that an edge $vx$ of a graph is \textit{essential} if its removal would disconnect the graph, into two components $A(v, x)$ and $A(x, v)$, say, containing $v$ and $x$ respectively. Let $E(v, x)$ be the set of edges of $A(v, x)$.
\begin{lemma}[Essential Edge Lemma {\cite[Lem.\ 5.1]{aldousfill}}]\label{essentialedge} Let $G$ be any graph with edge weights $\{w_{i,j} \}_{ij\in E}$. If $vx$ is an essential edge, then
	\begin{equation*}
	H_v(x) = \frac{2\sum_{\{i,j\} \in E(v,x) }w_{i,j}}{w_{v,x}} + 1 , 
	\end{equation*}
	where $H_v(x)$ is the hitting time of the reversible Markov chain defined by these edge weights.
\end{lemma}

We now define the CRW strategies we use in the proof of \Tr{tree linear}. Given a tree $T$, we pick an arbitrary \emph{`root'} vertex $r\in V(T)$. In order to obtain an upper bound on $\Htwo xy$ for $x,y\in V(T)$ such that $ xy\in E(T)$, we follow the (unchanging) strategy $\sigma^{xy}$ making the following choices at each vertex $v$:
\begin{equation}\label{sigma xy}\begin{aligned} 
\text{Reduce the distance to $y$  if possible. Otherwise, choose uniformly}\\  \text{an option that increases distance to $r$ if at least one is available.}
\end{aligned}
\end{equation}
In other words, $\sigma^{xy}$ prefers the unique neighbour $w$ of $v$ with  $d(w,y)<d(v,y)$, avoids the unique neighbour $z$  with  $d(z,r)<d(v,r)$, and is indifferent among all other neighbours of $v$. 

We emphasise that $r$ was an arbitrary vertex, but it is important for our calculations below that it is fixed for all  $\sigma^{xy}, x,y\in V(T)$.

Since the strategy $\sigma^{xy}$ is unchanging, there is an assignment of weights $w_{x,y}(e), e\in E(T)$ such that the corresponding random walk (as defined after \Cr{double-weight}) 
is equidistributed with the CRW under strategy $\sigma^{xy}$ when both walks start at $x$ and stop when first visiting $y$.
These weights can be multiplied by any positive constant without changing the random walk they define, and we normalise by fixing $w_{x,y}(xy)=1$ for concreteness.
The rest of the weights can be calculated explicitly, and so we can apply the Lemma \ref{essentialedge} to give the bound
\begin{equation} \label{w xy e}
\Htwo xy< 2\sum_{e\in E(T)} w_{x,y}(e),
\end{equation}
with the understanding that we set $w_{x,y}(e)=0$ if $y$ separates $x$ from $e$ as this edge does not contribute to the sum in Lemma \ref{essentialedge}.

The latter formula expresses $\Htwo xy$ as a sum of contributions of each $e\in E(T)$. The main surprise in the proof of \Tr{tree linear} is the following lemma, which says that for each $e\in E(T)$, the sum of these  contributions $w_{x,y}(e)$ over all $\Htwo xy, x,y\in V(T), xy\in E(T)$, is bounded. An obvious double counting argument involving \eqref{w xy e} will then establish \Tr{tree linear}. 

\begin{lemma}\label{edge bound}
	For every $d\in \N$, every tree $T$ with maximum degree $d$, and every edge $e\in E(T)$, we have $\sum_{x,y\in V(T), xy\in E(T)} w_{x,y}(e) \leq 4d$.
\end{lemma}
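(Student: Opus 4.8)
The plan is to fix an edge $e = ab \in E(T)$, with $a$ the endpoint on the same side as the root $r$ (so $b$ is the ``child'' endpoint), and to understand exactly which pairs $(x,y)$ with $xy \in E(T)$ contribute a nonzero weight $w_{x,y}(e)$, and how large that weight can be. First I would set up the structure: deleting $e$ splits $T$ into two components $T_a \ni a$ and $T_b \ni b$. Recall $w_{x,y}(e) = 0$ whenever $y$ separates $x$ from $e$; combined with the fact that $w_{x,y}$ is normalised by $w_{x,y}(wy) = 1$ at the edge $wy$ adjacent to the target $y$, I expect the weight $w_{x,y}(e)$ to depend only on the position of $e$ relative to $y$ along the path from $x$ to $y$ (and on the degrees along the relevant path), not on $x$ itself beyond which side it lies on. So the first key step is to derive an explicit formula (or a clean upper bound) for $w_{x,y}(e)$ in terms of the combinatorics of the path $P$ from $e$ to $y$ in $T$: under strategy $\sigma^{xy}$, each vertex $v$ on this path either prefers the neighbour toward $y$, and either the ``away from $r$'' constraint helps, hurts, or is neutral there. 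Working out the transition probabilities from \eqref{potctp} and translating to conductances, I expect each step along $P$ to multiply the weight by a bounded factor — something like at most $\frac{2}{?}$ depending on the degree — so that $w_{x,y}(e)$ decays geometrically (or at least is summable) in the distance from $e$ to $y$.

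The second key step is the summation. For a fixed target edge direction, i.e.\ fixed oriented edge $xy$, the contribution $w_{x,y}(e)$ is nonzero only for $e$ lying ``behind'' $y$ as seen from $x$; and for a fixed edge $e$, as $(x,y)$ ranges over oriented edges with $xy \in E(T)$, the relevant quantity is really a function of $y$ and which side $x$ is on. So $\sum_{x,y : xy \in E(T)} w_{x,y}(e)$ can be reorganised as a sum over vertices $y \in V(T)$ (weighted by $\deg(y)$, the number of choices of the neighbour $x$), of a quantity that decays with $\dist(e, y)$. Because $T$ is a tree with maximum degree $d$, the number of vertices at distance $k$ from $e$ is at most $2(d-1)^{k}$ (roughly), while I expect each such $y$ to contribute a weight that is at most $c^{-k}$ for some constant growth rate $c$. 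The delicate point is that this product $(d-1)^k \cdot c^{-k}$ is \emph{not} obviously summable for large $d$ unless the per-step decay factor beats $d-1$ — so I would need the formula from step one to be sharp enough that the decay rate is at least like $1/(d-1)$ times a further gain, OR, more likely, the correct bookkeeping is that along the unique path from $e$ to $y$ the weight contributions telescope so that the \emph{total} over all $y$ in a subtree hanging off a given vertex is controlled by the weight at that vertex, giving a geometric-series-in-disguise that sums to $O(d)$ rather than naively bounding term by term.

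The main obstacle I anticipate is precisely this tension between the branching factor $d-1$ of the tree and the decay of the weights: a term-by-term bound will not suffice, so the proof must exploit the tree structure to telescope or regroup the sum — essentially, for each vertex $v$ on the ``$b$-side'' path structure, bounding $\sum_y w_{v\to y}(\cdots)$ over all $y$ in the subtree below $v$ by a constant multiple of the single weight entering $v$, and then summing the $O(1)$ contributions over the at most $d$ children at each branching, with the whole thing collapsing geometrically back to $e$. Getting the constant to come out as $4d$ (rather than, say, $d^2$ or an exponential in $d$) will require care in step one to not lose factors when passing from the $\alpha$-parameters to conductances via \eqref{potctp}, and in tracking the effect of the ``increase distance to $r$'' tie-breaking rule, which is what makes the strategy reversible and keeps the weights from blowing up. Once the regrouped sum is established, a direct double-counting with \eqref{w xy e} — summing Lemma~\ref{edge bound} over all $e \in E(T)$ and using $|E(T)| = |T| - 1$ — yields $\sum_{xy \in E(T)} \Htwo xy < 2 \sum_e \sum_{xy} w_{x,y}(e) \leq 2(|T|-1)\cdot 4d \leq 8d|T|$, which is exactly Theorem~\ref{tree linear}.
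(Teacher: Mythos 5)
Your plan follows essentially the same route as the paper: you correctly identify that a term-by-term bound fails against the branching factor, and that the sum must be telescoped subtree-by-subtree using the fact that the ``avoid the direction of $r$'' rule makes each child edge carry only a $\frac{1}{2d(v)-1}$ fraction of the weight of the edge entering $v$, so that summing over the at most $d(v)-1$ children gives a factor below $1/2$ — this is exactly the paper's layered decomposition $\Sigma_i \leq \Sigma_{i-1}/2$ built on Propositions~\ref{first choice} and~\ref{last choice}. The quantitative details you defer (the explicit transition-probability ratios and the separate $(1/2)^i$ decay along the path from $e$ to $r$, which yields $\Sigma_0 \leq 2d$ and hence the constant $4d$) are precisely what the paper's proof supplies, so the proposal is a sound but incomplete version of the same argument.
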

We emphasise that this sum is taken over all ordered pairs of adjacent vertices. The proof of \Tr{tree linear} is based on the fact that, for a fixed $e$, the values $w_{x,y}(e)$ decay fast with the distance $d(xy,e)$, and even more so in the direction of $r$. (Here, we define the distance between two edges $xy,wz\in E$ to be $d(xy,wz):=\min\{d(x,w),d(y,z),d(x,z),d(y,w) \}$.) The following two propositions will yield quantitative bounds on the speed of this decay.

\begin{proposition}\label{first choice}
	Let $G$ be any graph, $x\in V(G)$, and $v\in N(x)$.
	Consider a CRW strategy that when at $x$ always chooses $v$ when that choice is available, otherwise it chooses each of the available options independently with probability $1/2$. Then for every $w\neq v \in N(x)$, the transition probabilities satisfy $q_{x,w}/q_{x,v}<1/2$.
\end{proposition}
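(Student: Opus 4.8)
The plan is to reduce everything to a direct computation of the two transition probabilities $q_{x,v}$ and $q_{x,w}$ via formula~\eqref{potctp}. Write $d:=|N(x)|$; since a vertex $w\neq v$ in $N(x)$ is assumed to exist we have $d\geq 2$. In the parametrisation~\eqref{alphacond}, the described strategy corresponds to setting $\axyz xvj=1$ for every $j\in N(x)\setminus\{v\}$ (the option $v$ is always taken when offered) and $\axyz xij=1/2$ for all $i,j\in N(x)\setminus\{v\}$ (any tie not involving $v$ is broken uniformly). Together with the convention $\axyz xii=1/2$ this pins down all the parameters at $x$; in particular $\axyz xwv=1-\axyz xvw=0$ for each $w\neq v$, which is the key asymmetry driving the bound.

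First I would compute $q_{x,v}$. Summing over $j\in N(x)$ and separating the diagonal term, $\sum_{j}\axyz xvj=\axyz xvv+\sum_{j\neq v}\axyz xvj=\tfrac12+(d-1)=d-\tfrac12$, so \eqref{potctp} gives $q_{x,v}=\frac{2(d-1/2)}{d^2}=\frac{2d-1}{d^2}$. Next, fix $w\neq v$; then $\sum_{j}\axyz xwj=\axyz xwv+\axyz xww+\sum_{j\neq v,w}\axyz xwj=0+\tfrac12+\tfrac{d-2}{2}=\tfrac{d-1}{2}$, whence $q_{x,w}=\frac{2\cdot(d-1)/2}{d^2}=\frac{d-1}{d^2}$.

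Dividing, $\frac{q_{x,w}}{q_{x,v}}=\frac{d-1}{2d-1}$, and since $2(d-1)=2d-2<2d-1$ this is strictly less than $1/2$, as claimed. I do not expect any real obstacle here: the only point requiring a little care is remembering to include the convention $\axyz xii=1/2$ in each sum, since omitting it would give the wrong normalisation. (Equivalently one could argue combinatorially, counting among the $d^2$ equally likely ordered pairs of sampled neighbours the $2d-1$ pairs that contain $v$ — each of which forces a move to $v$ — against the $d-1$ ``effective'' pairs leading to $w$; but the computation through~\eqref{potctp} is the cleanest route.)
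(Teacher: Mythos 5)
Your proof is correct and arrives at exactly the same quantities as the paper ($q_{x,v}=\frac{2d-1}{d^2}$, $q_{x,w}=\frac{d-1}{d^2}$, ratio $\frac{d-1}{2d-1}<1/2$); the paper simply computes them by the direct probabilistic count you mention in your parenthetical remark, rather than through the $\alpha$-parametrisation and \eqref{potctp}. The two routes are the same calculation in different clothing, and your bookkeeping of the diagonal convention $\axyz xii=1/2$ is handled correctly.
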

\begin{proof}
	Let $d:=d(x)$.
	We have $q_{x,v}=1-\left(\frac{d-1}{d}\right)^2 = \frac{d^2 -(d^2-2d+1)}{d^2} = \frac{2d-1}{d^2}$ and $q_{x,w} = \left(\frac{d-1}{d}\right)^2 \frac1{d-1}$ since each $w\neq v$ is chosen with equal probability, and only when $v$ is not among the options. Thus $q_{x,w}/q_{x,v} = \left(\frac{d-1}{d}\right)^2  \frac{d^2}{(d-1) (2d-1)} = \frac{d-1}{2d-1} < 1/2$ as claimed.
\end{proof}

\begin{proposition}\label{last choice}
	Let $G$ be any graph, $x\in V(G)$, and $v,z \in N(x)$ where $v\neq z$.
	Consider a CRW strategy that when at $x$ always chooses $v$ when that choice is available, never chooses $z$ unless there is no other option, and it chooses each of the other available options independently with probability $1/2$. Then the transition probabilities satisfy $q_{x,z}/q_{x,v}=\frac1{2d(x)-1}$. 
\end{proposition}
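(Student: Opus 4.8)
The plan is to compute the two transition probabilities $q_{x,v}$ and $q_{x,z}$ directly from the CRW sampling rule, in exactly the same spirit as the proof of Proposition~\ref{first choice}. Write $d:=d(x)$, and recall that the two offered options $c_1^t,c_2^t$ are drawn independently and uniformly (with replacement) from $N(x)$; the strategy's behaviour on the remaining neighbours, and its extra coin flips, will turn out to be irrelevant to these two particular transitions.

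First I would observe that, under the stated strategy, the walk steps to $v$ precisely when $v$ is among the two offered neighbours, since $v$ is always taken whenever available. Hence
\[
q_{x,v}=\Pr{v\in\{c_1^t,c_2^t\}}=1-\left(\tfrac{d-1}{d}\right)^2=\frac{2d-1}{d^2}.
\]
Next, the walk steps to $z$ only when $z$ is forced, i.e.\ when no other neighbour is on offer, which happens exactly when $c_1^t=c_2^t=z$; this has probability $1/d^2$. Dividing,
\[
\frac{q_{x,z}}{q_{x,v}}=\frac{1/d^2}{(2d-1)/d^2}=\frac{1}{2d-1},
\]
which is the claimed identity.

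I do not expect any genuine obstacle here: the argument is a one-line inclusion–exclusion for sampling with replacement, and it uses only the two extreme cases ($v$ present, respectively $z$ the unique option) so the intermediate options play no role. The only point worth a remark is that the formula remains correct in the degenerate case $d=2$, where $N(x)=\{v,z\}$ and there are no ``other'' options at all: there $q_{x,v}=3/4$ and $q_{x,z}=1/4$, and indeed $3=2d-1$.
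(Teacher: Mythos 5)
Your proof is correct and follows exactly the paper's argument: compute $q_{x,v}=\frac{2d-1}{d^2}$ as the probability that $v$ is offered, $q_{x,z}=\frac{1}{d^2}$ as the probability that both offers equal $z$, and divide. The only difference is that you spell out the (easy) justifications that the paper leaves implicit.
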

\begin{proof}
	As in the proof of Proposition~\ref{first choice}, we have $q_{x,v}=\frac{2d(x)-1}{d(x)^2}$. Easily, $q_{x,z} =  \frac1{d(x)^2}$. Thus $q_{x,z}/q_{x,v} = \frac{d(x)^2}{(2d(x)-1)d(x)^2}= \frac1{2d(x)-1}$ as claimed.
\end{proof}

Armed with these propositions we can now prove \Lr{edge bound}.
\begin{proof}[Proof of \Lr{edge bound}]
	Fix $e\in E(T)$. We split $\Sigma:= \sum_{x,y\in V(T), xy\in E(T)} w_{x,y}(e)$ as a sum $\Sigma= \sum_{i\in \N} \Sigma_i$ of `layers' $\Sigma_i$, corresponding roughly to distance from $e$, and show that $\Sigma_i$ decays exponentially in $i$. 
	
	Let $P$ be the path from $e$ to $r$ in $T$ (excluding $e$), and let $L_0$ be the set of all edges of $P$ and all  edges incident with $P$ (including $e$). Let $\Sigma_0:= \sum_{x,y\in V(T), xy\in L_0} w_{x,y}(e)$ be the total weight assigned to $e$ by pairs of adjacent vertices of $L_0$. Define $L_i, i\geq 1$ recursively as the set of edges incident with $L_{i-1}$ not contained in $\bigcup_{j<i} L_j$, and let $\Sigma_i:= \sum_{x,y\in V(T), xy\in L_i} w_{x,y}(e)$.
	
	\begin{claim}
		The following inequalities hold\begin{enumerate}[(i)]
			\item \label{sum 0}
			$\Sigma_0 \leq 2d$,
			\item \label{sum i}
			$\Sigma_i \leq \Sigma_{i-1}/2.$
		\end{enumerate}
	\end{claim}

	\begin{poci} Let $x_1, x_2, \ldots, x_k$, where $x_k=r$ be the vertices of $P$ as they appear from $e$ to $r$. Recall that $w_{x_{i+1},x_i}(e)=0$ for every $i\geq 1$, as $e$ is separated from $x_{i+1}$ by removing $x_{i}$ and thus does not contribute to the sum in the formula for $H_{x_{i+1}}(x_{i})$ from Lemma \ref{essentialedge}. 
		In the other direction, we claim $w_{x_i,x_{i+1}}(e) < (1/2)^i$ for every $i\geq 1$. Indeed, by \Prr{first choice} we have $\frac{w_{x_i,x_{i+1}}(x_{i-1},x_i)}{w_{x_i,x_{i+1}}(x_i,x_{i+1})}<1/2$ because this ratio coincides with the ratio of the corresponding transition probabilities $\frac{q_{x_i, x_{i-1}}}{q_{x_i, x_{i+1}}}$ by the definitions. Moreover, at each $x_j, 1\leq j<i$, the strategy $\sigma^{x_i x_{i+1}}$ makes the same choices as $\sigma^{x_j x_{j+1}}$, hence \[\frac{w_{x_i,x_{i+1}}(x_{j-1},x_j)}{w_{x_i,x_{i+1}}(x_j,x_{j+1})} = \frac{w_{x_j,x_{j+1}}(x_{j-1},x_j)}{w_{x_j,x_{j+1}}(x_j,x_{j+1})}<1/2\] by \Prr{first choice} again, with the convention that $x_0 x_1 =e$. Our claim follows by multiplying these fractions for $j$ ranging from 1 to $i$. 
		
		For each of the at most $d-1$ other edges $x_i z\neq e$ of $L_0$ incident with $x_i$, where $i\geq 1$, we use the rather generous bound $w_{x_i,z}(e) < (1/2)^{i-1}$, which is true by similar arguments because $\frac{w_{x_i,z}(x_{i-1},x_i)}{w_{x_i,z}(x_i,z)}< 1$. Again we have $w_{z,x_i}(e)=0$.
		
		Finally, we have $w_{x_0,x_1}(e) = w_{x_1,x_0}(e) = 1$ since $e=x_0 x_1$. Adding these contributions we obtain $\Sigma_0 \leq 2+ \sum_{i\geq 0}(d-1) (1/2)^i= 2+2(d-1) =2d$ as claimed.\end{poci}
	
	\begin{pocii}
		Let $yv\in L_{i-1}$. We will bound the contribution of the edges incident with $yv$ to $\Sigma_i$ in terms of the contribution of $yv$ to $\Sigma_{i-1}$. For this, let $vw\in L_i$. Note that $v$ separates $w$ from $r$ and so firstly, $w_{w,v}(e)=0$. Secondly this implies $\sigma^{vw}$ avoids moving from $v$ to $y$ whenever possible by \eqref{sigma xy}. Thus \Prr{last choice} yields $\frac{w_{v,w}(yv)}{w_{v,w}(vw)}=\frac1{2d(v)-1}$. Moreover, when at a vertex other than $v$, the strategies $\sigma^{vw}$ and $\sigma^{yv}$ make the same choices since the directions of $r$ as well as of the corresponding target vertex coincide. Therefore, $\frac{w_{y,v}(f)}{w_{y,v}(g)}=\frac{w_{v,w}(f)}{w_{v,w}(g)}$ for every two edges $f,g$ incident with a common vertex on the $y$--$e$~path. It follows that $w_{v,w}(e) = \frac{w_{y,v}(e)}{2d(v)-1}$, and summing over all such neighbours $w$ of $v$ we obtain $\sum_{vw\in L_i} w_{v,w}(e) \leq \frac{w_{y,v}(e) (d(v)-1)}{2(d(v)-1)+1} < w_{y,v}(e)/2$. Applying this to each edge $yv\in L_{i-1}$, and adding together, noting that at most one end vertex $v$ of $yv$ is incident with edges in $L_i$ by construction, we finally deduce
		\[\Sigma_i = \sum_{yv\in L_{i-1}} \sum_{vw\in L_i} w_{v,w}(e) < \sum_{yv\in L_{i-1}} w_{y,v}(e)/2 = \Sigma_{i-1}/2,\]
		as desired.\end{pocii} Combining both parts of the Claim proves our statement, as
	$\Sigma = \sum_i \Sigma_i \leq 2 \Sigma_0\leq 4d$.
\end{proof}

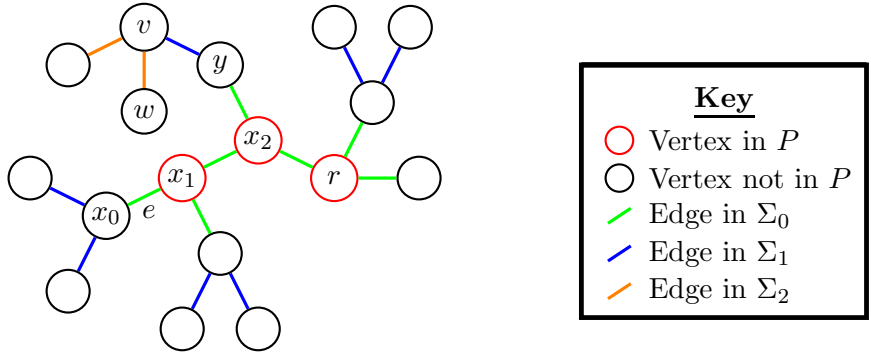
\begin{figure}
	\begin{center}
		\begin{tikzpicture}[xscale=.5,yscale=.5,knoten/.style={thick,circle,draw=black,fill=white},redknoten/.style={thick,circle,draw=red,fill=white},edge/.style={very thick,black},greenedge/.style={very thick,green},blueedge/.style={very thick,blue},orangeedge/.style={very thick,orange},dotedge/.style={very thick,dotted,black}]
		
		\begin{scope}[rotate=90]
		\node[knoten] (0) at (0,0) {$\phantom{\!\,r\!}$};
		\node[knoten] (1) at (-1,-2) {$\!\!x_0\!\!$};
		\node[knoten] (i) at (-3,-1) {$\phantom{\!\,r\!}$};
		\node[redknoten] (2) at (0,-4) {$\!\!x_1\!\!$};
		\node[knoten] (k) at (-2,-5) {$\phantom{\!\,r\!}$};
		\node[knoten] (g) at (-4,-4) {$\phantom{\!\,r\!}$};
		\node[knoten] (h) at (-4,-6) {$\phantom{\!\,r\!}$};
		\node[redknoten] (3) at (1,-6) {$\!\!x_2\!\!$};
		\node[knoten] (x) at (3,-5) {$ \!y\! $};
		\node[knoten] (y) at (4,-3) {$  v $};
		\node[knoten] (n) at (1.77,-3) {$ \!w\! $};
		\node[knoten] (z) at (3,-1) {$\phantom{\!\,r\!}$};
		\node[redknoten] (r) at (0,-8) {$ \!\,r\,\!$};
		\node[knoten] (a) at (0,-10.23) {$\phantom{\!\,r\!}$};
		\node[knoten] (b) at (2,-9) {$\phantom{\!\,r\!}$};
		\node[knoten] (c) at (4,-10) {$\phantom{\!\,r\!}$};
		\node[knoten] (d) at (4,-8) {$\phantom{\!\,r\!}$};
		
		\draw[blueedge] (0) to (1);
		\draw[greenedge] (1) to node[pos=0.65,color=black,below]{$e$} (2);
		\draw[blueedge] (1) to (i);
		\draw[greenedge] (2) to (3);
		\draw[greenedge] (2) to (k);
		\draw[blueedge] (k) to (g);
		\draw[blueedge] (k) to (h);
		\draw[blueedge] (x) to (y);
		\draw[orangeedge] (y) to (z);
		\draw[orangeedge] (y) to (n);
		\draw[greenedge] (3) to (x);
		\draw[greenedge] (3) to (r);
		\draw[greenedge] (r) to (a);
		\draw[greenedge] (r) to (b);
		\draw[blueedge] (b) to (c);
		\draw[blueedge] (b) to (d);
		\end{scope}
		\begin{scope}[shift={(16,1)}]
		\draw[line width=1.8pt] (-1.5,2) -- (6,2)  -- (6,-4.7) -- (-1.5,-4.7) -- (-1.5,2) -- (6,2);
		
		\draw (1.2,1) node[anchor=west]{\underline{\textbf{Key}}};
		
		\node[redknoten] (1) at (-.5,0) {};
		\draw (0,0) node[anchor=west]{Vertex in $P$};
		
		\node[knoten] (1) at (-.5,-1) {};
		\draw (0,-1) node[anchor=west]{Vertex not in $P$};
		
		\draw[greenedge] (-.8,-2.2)--(-.2,-1.8);
		\draw (0,-2) node[anchor=west]{Edge in $\Sigma_0$};
		
		\draw[blueedge] (-.8,-3.2)--(-.2,-2.8);
		\draw (0,-3) node[anchor=west]{Edge in $\Sigma_1$};

		\draw[orangeedge] (-.8,-4.2)--(-.2,-3.8);
		\draw (0,-4) node[anchor=west]{Edge in $\Sigma_2$};
		
		\end{scope}
		
		\end{tikzpicture}\end{center}
	\caption{Illustration of the edge sets $\Sigma_{i}$ and path $P$ in the proof of \Lr{edge bound}. The vertices $y,v$ and $w$ are labelled consistently with the proof of \Lr{edge bound}, Claim \eqref{sum i} in the case where $yv \in L_1$ and $vw\in L_2$.}\label{edgeweights}
\end{figure}

It is now easy to complete the proof of \Tr{tree linear}.

\begin{proof}[Proof of \Tr{tree linear}]
	By \eqref{essentialedge} we have
	\[\sum_{x,y\in V(T), xy\in E(T)} \Htwo xy < \sum_{x,y\in V(T), xy\in E(T)} 2 \sum_{e\in E(T)} w_{x,y}(e).\] Changing the order of summation, and then applying \Lr{edge bound} to each summand we bound the right hand side by
	\[2 \sum_{e\in E(T)} \left(\sum_{x,y\in V(T), xy\in E(T)}  w_{x,y}(e) \right) \leq 2 \sum_{e\in E(T)} 4d = 2(\abs{V(T)}-1) 4d \leq 8 d\abs{V(T)}.\qedhere\]
\end{proof}

\section{Infinite Graphs and Cover Time of Tori}
In this section we bound the cover time of the $d$-dimensional discrete torus $\mathbb{Z}_k^d$, which has $n=k^d$ vertices. Here we think of the dimension $d$ as being fixed while the side length $k$ grows. In order to prove a linear bound on the cover time, we will instead consider the infinite limit $\mathbb{Z}^d$, and infinite (but locally finite) graphs more generally.

For infinite graphs it is meaningless to ask about the CRW cover time, but still interesting to ask about hitting times. The most fundamental question is whether these can be made finite, which corresponds to asking for positive recurrence.
\begin{definition}
A graph is \emph{positive choice recurrent (PCR)} if there exists an unchanging strategy for the CRW such that the expected return time to any given vertex is finite. A graph is \emph{strongly PCR} if for every $p\in(0,1)$ there exists an unchanging CRW strategy such that expected return times are finite for the process which, at every time step, takes a step of the CRW with that strategy with probability $p$ and a step of the simple random walk otherwise.\end{definition}

A natural question is whether there is a strategy under which the walk becomes a transient Markov chain. The answer is always yes: fixing a root $r$ and giving each edge $uv$ weight $2^{\min(d(u,r),d(v,r))}$ produces a suitable weighting to apply Corollary~\ref{double-weight}. This weighted graph is transient because any infinite geodesic starting at the root has total resistance $2$ (see e.g.~\cite[Theorem 2.3]{LPbook}), and taking other edges into account cannot increase the effective resistance to infinity.

While positive recurrence is the property which will be useful to us, we might also ask for the weaker property of \textit{choice recurrence}, where we simply require return times to be almost surely finite. It is possible for a graph to be choice recurrent but not PCR; indeed, there are graphs which are recurrent under the SRW but not PCR. 
\begin{remark}Proposition \ref{binary} implies that any graph of maximum degree $3$ is PCR. This is not true for higher degrees, since for the infinite $4$-regular tree any strategy is more likely to move away from a given target vertex than towards it.\end{remark}

Note that $\mathbb Z^d=\mathbb Z^{d-1}\square\mathbb Z$, where $\square$ indicates the Cartesian product. We will need the following result about Cartesian products of PCR graphs.
\begin{lemma}\label{prod}If $G$ is PCR, $H$ is strongly PCR and both $G,H$ are regular, then $G\square H$ is PCR.\end{lemma}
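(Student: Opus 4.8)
The plan is to construct an explicit unchanging strategy on $G\square H$ by combining good strategies on the factors, and to compare the resulting Markov chain with the independent product of the two factor walks. Write a vertex of $G\square H$ as $(g,h)$; at $(g,h)$, a single CRW step offers a pair of neighbours sampled i.i.d.\ from $\Gamma(g,h)$. Because $G$ is $a$-regular and $H$ is $b$-regular, each offered neighbour is, with probability $\tfrac{a}{a+b}$, of the form $(g',h)$ with $g'$ uniform in $\Gamma_G(g)$, and with probability $\tfrac{b}{a+b}$ of the form $(g,h')$ with $h'$ uniform in $\Gamma_H(h)$. The key observation is that, \emph{conditioned on the $G$-move/$H$-move ``type'' of each of the two offered neighbours}, the controller sees exactly the choice structure of a CRW step in $G$ (if both are $G$-type), a CRW step in $H$ (if both are $H$-type), or a free choice between a $G$-step and an $H$-step (if the types differ). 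So the strategy I would use is: whenever both offers are of the same type, play the good factor strategy in that coordinate; when the types differ, always take the step in the factor that is ``behind'' --- concretely, always move in $H$ (this is where strong PCR of $H$ earns its keep, as explained below). Since all of this depends only on the current vertex, the resulting strategy is unchanging.

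The second step is to identify the law of the resulting chain. With the strategy above, at each step we flip: with probability $p_{GG}:=\bigl(\tfrac{a}{a+b}\bigr)^2$ we perform one step of the good CRW on $G$ in the first coordinate (leaving $H$ fixed); with probability $p_{HH}:=\bigl(\tfrac{b}{a+b}\bigr)^2$ plus the cross term $2\cdot\tfrac{a}{a+b}\cdot\tfrac{b}{a+b}$ we perform one step of \emph{some} walk on $H$ in the second coordinate; and these two sub-chains are driven by independent randomness, so the projection onto $G$ is exactly the good CRW on $G$ run at a slowed-down clock, and likewise the projection onto $H$ is a walk on $H$ run at its own clock. Crucially, the $H$-projection is \emph{not} the pure CRW on $H$: on the cross-term (one offer $G$-type, one $H$-type) we forced an $H$-move, and that offered $H$-neighbour is a single uniform sample, i.e.\ a simple-random-walk step. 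Hence the $H$-projection is precisely the ``CRW-with-probability-$p$, SRW-otherwise'' hybrid process appearing in the definition of \emph{strongly PCR}, with $p = p_{HH}/(p_{HH}+2\tfrac{ab}{(a+b)^2})$. Since $H$ is strongly PCR we may choose the $H$-strategy so that this hybrid has finite expected return times; since $G$ is PCR we take the good unchanging strategy there.

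The final step is to deduce finiteness of the expected return time to a vertex $(g_0,h_0)$ of $G\square H$ from finiteness of the return times of the two coordinate processes. The clean way to do this is to decompose time into the two independent sub-clocks via a coupling with a sequence of fair-ish coin flips: let $N$ be the number of steps until both the $G$-coordinate has returned to $g_0$ \emph{and} the $H$-coordinate has returned to $h_0$ simultaneously. One standard route is a renewal/regeneration argument: look at successive simultaneous returns. Return times to $g_0$ in the $G$-sub-chain have finite mean $m_G$, and each such $G$-sub-chain step corresponds to a Geometric$(p_{GG})$ number of real steps, so the real time between successive $G$-returns has finite mean; at each $G$-return, the $H$-coordinate is at some vertex and, by strong PCR, returns to $h_0$ in finite expected (sub-chain, hence real) time uniformly bounded over starting vertices of $H$ --- here one invokes that a positive-recurrent chain on (each finite piece, or via a uniform bound since we can take $H$ with the relevant property) has $\max_h \Ex{\text{return to }h_0}<\infty$ when $H$ is finite, or more carefully argues using the regularity and the stationary measure. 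Putting the two finite-mean ingredients together through Wald's identity over the geometric number of $G$-excursions needed before an $H$-return coincides gives a finite expected return time to $(g_0,h_0)$, establishing that $G\square H$ is PCR.

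The main obstacle, and the place that needs the most care, is exactly the passage in the last paragraph: \emph{combining} positive recurrence of the two coordinate processes into positive recurrence of the product, because the two coordinates do not return to base at the same time and one must control the expected ``extra'' time spent waiting for the lagging coordinate. This is where strong PCR is essential rather than cosmetic --- ordinary PCR of $H$ would only control the pure-CRW return time, not the return time of the hybrid process that actually arises on the $H$-projection, and it is also what lets us bound the $H$-recurrence time from an arbitrary starting vertex reached at a $G$-return. I would be careful to state the uniform-return-time bound I use (finite $\sup$ over start vertices of the expected hitting time of $h_0$), which holds because $H$ will ultimately be applied with $H=\mathbb Z$ (or a finite torus factor) where this is immediate, but in the stated generality one should either add a hypothesis or note that it follows from regularity plus the hybrid being positive recurrent with respect to a measure of full support.
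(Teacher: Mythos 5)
Your construction of the strategy is exactly the paper's: prefer $H$-moves whenever one is offered, play the good $G$-strategy when both offers are $G$-type, and observe that the resulting $H$-projection is the CRW/SRW hybrid from the definition of strong PCR (with the correct conditional probability $\tfrac{b}{2a+b}$ of a genuine two-choice step), which is precisely where strong PCR is needed. Up to that point the argument is correct and complete.

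The gap is in your final combining step. Your renewal/Wald argument requires, at each return of the $G$-coordinate to $g_0$, a bound on the expected time for the $H$-coordinate to reach $h_0$ from wherever it happens to be, and you propose to use $\sup_{h}\mathbb{E}_h[\tau_{h_0}]<\infty$. For an infinite-state positive recurrent chain this supremum is typically infinite, and in particular it is infinite in the intended application $H=\mathbb{Z}$ (the expected hitting time of $0$ from $n$ under any drift-toward-$0$ walk grows at least linearly in $n$), so your claim that the uniform bound ``is immediate'' there is false; nor does it follow from regularity plus positive recurrence of the hybrid. One could in principle repair this by controlling the \emph{distribution} of the $H$-coordinate at $G$-return times rather than taking a supremum, but that is exactly the delicate estimate the paper avoids. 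The paper's route is: define the $p$-product of two chains (at each step, independently with probability $p$ make a $B$-transition, else an $A$-transition), note that the chain you have built is literally the $p$-product of the good CRW on $G$ and the hybrid walk on $H$ with $p=\tfrac{2ab+b^2}{(a+b)^2}$, and then observe that the product $\pi_A\otimes\pi_B$ of the two stationary distributions is stationary for the $p$-product; an irreducible chain admitting a stationary probability distribution is positive recurrent. This replaces all hitting-time estimates with one line of soft theory, and is the step you should substitute for your last paragraph.
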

\begin{proof}
Define the $p$-product of two time-homogeneous Markov chains $A,B$ to be the chain with state space $S(A)\times S(B)$ where at each time step with probability $p$ a transition of $B$ occurs, and otherwise a transition of $A$ occurs. If both chains are irreducible and positive recurrent, then so is the $p$-product (this follows easily from the existence of stationary distributions). Now we define a strategy for the choice random walk on $G\square H$ as follows. If at least one of the choices given is a move in the $H$ co-ordinate, we make such a move. Now the probability of exactly one of the options being a move in $H$ is $\frac{2rs}{(r+s)^2}$, where $G$ is $r$-regular and $H$ is $s$-regular, and the probability of both options being moves in $H$ is $\frac{s^2}{(r+s)^2}$. Thus, conditional on at least one option being in $H$, both are in $H$ with probability $\frac s{2r+s}$. There is a strategy on $H$, for this probability of having two choices, which reaches the root in finite time; whenever we move in the $H$ co-ordinate we use this strategy. If both choices are moves in $G$ then we follow the appropriate strategy for the random walk with two choices in $G$. The resulting Markov chain is the $\frac{2rs+s^2}{(r+s)^2}$-product of positive recurrent Markov chains on $G$ and $H$, hence positive recurrent.
\end{proof}
The same argument shows that if in addition $G$ is strongly PCR then so is $G\square H$. Lemma \ref{prod} allows us to conclude that $\mathbb Z^d$ is PCR and consequently obtain a bound on its cover times and hitting times.

\newtheorem*{thm:covergrid}{Theorem \ref{covergrid}}
\begin{thm:covergrid}
For any $d$, the cover time of the CRW in the finite $d$-dimensional torus $\mathbb Z_k^d$ or grid $[k]^d$ is $\BT{n}$ and the hitting time is $\BT{k} = \BO{n^{1/d}}$, where $n$ is the number of vertices.\end{thm:covergrid}
\begin{proof}
Note that $\mathbb Z$ is strongly PCR, since always moving towards $0$ if possible gives a random walk which moves towards $0$ with probability $\frac 12+\frac p4>\frac 12$ (where $p$ is the probability of taking a step of the CRW). Inductively applying Lemma~\ref{prod} implies $\mathbb Z^d$ is PCR for any $d$, and so the hitting time to a neighbour is some constant $c_d$. This gives an upper bound on the hitting time to a neighbour in $\mathbb Z_k^d$ or $[k]^d$, and the strategy of visiting the vertices of a Hamilton path in order gives a cover time of less than $c_dk^d$. Similarly the hitting time $\Htwo xy$ is bounded by constant times the length of the shortest path between $x$ and $y$, and the worst-case value of this is $d\lfloor k/2 \rfloor$ for the torus and $dk$ for the grid, so $\BT{k}$. Both bounds are trivially best possible up to the constant.\end{proof}

\section{Hitting and Cover Times in Expanders}\label{S:regular}
In this section we prove bounds on the cover and hitting times of the CRW on a graph $G$ in terms of fundamental parameters. First we introduce our notation. Let $G$ be a graph with $n$ vertices, and write $\dmax$, $\dmin$ and $\davg$ for the maximum, minimum and average degree of $G$ respectively. Let $\trel$ be the \emph{relaxation time} of $G$, defined as $\frac{1}{1-\lambda_2}$, where $\lambda_2$ is the second largest eigenvalue of the transition matrix of the lazy random walk (LRW) on $G$ with loop probability $1/2$. Recall that $\thit$ is maximum over all pairs $uv\in V$ of the expected time it takes the SRW to reach $u$ from $v$. Our first result bounds the CRW cover time.
\begin{theorem}\label{trelbdd}
	For any connected $n$-vertex graph $G$ the following holds 
	\[\ttwo(G) = \BO{\thit\cdot \log(\dmax)\cdot \log\left( \frac{\davg\cdot \trel \cdot \log  n}{\dmin} \right) },\]
\end{theorem}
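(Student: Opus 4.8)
The plan is to design a time-varying CRW strategy that proceeds in phases, each phase aiming to cover a constant fraction of the currently-uncovered vertices, so that after $\BO{\log n}$ phases everything is covered. Within each phase the controller runs a process that behaves like a lazy simple random walk \emph{most} of the time but occasionally exploits the two choices to bias the walk toward an uncovered target; the point of the expander hypothesis (encoded through $\thit$ and $\trel$) is that such a biased walk still hits any fixed target vertex in time $\BO{\thit}$, while the $\log \dmax$ factor and the logarithm of $\davg\trel\log n/\dmin$ factor come, respectively, from the cost of simulating the desired bias at a vertex of large degree and from the number of independent attempts needed to boost a rare hitting event to high probability. Concretely, I would (i) fix an uncovered vertex $v$, (ii) use \Cr{double-weight} (or a direct argument as in \Prr{first choice}) to realise, at each vertex $x$, a CRW step whose distribution is within a multiplicative $\BO{1/\dmax}$ of the LRW step but tilted toward the neighbour of $x$ on a shortest path to $v$, and (iii) show this tilted walk hits $v$ in expected time $\BO{\thit}$, then amplify to probability $1-1/\poly(n)$ by $\BO{\log(\davg\trel\log n/\dmin)}$ repetitions of a block of length $\BO{\thit}$, using a spectral/mixing argument that after each block the walk has re-equilibrated enough that the attempts are near-independent.

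More carefully, the engine should be the ``boosting'' method the paper advertises just before the statement (\emph{``We develop a method for boosting the probabilities of rare events in the CRW''}): given an event $E$ (here, ``hit the target $v$ within the next $c\,\thit$ steps'') which a single run of the hybrid CRW achieves with probability $\Omega(1)$, one concatenates $k$ runs, interleaved with stretches of LRW of length $\BT{\trel \log(\cdot)}$ to decorrelate, and argues that the failure probability drops to $2^{-\Omega(k)}$. Choosing $k = \BT{\log(\davg \trel \log n/\dmin)}$ makes the per-target failure probability small enough to union bound over the $\leq n$ targets of a phase and over the $\BO{\log n}$ phases. Each single run costs $\BO{\thit}$; the decorrelation stretches cost $\BO{\trel \log(\cdots)}$ which is absorbed since $\trel = \BO{\thit}$ for the graphs in question (and in general one would carry it as a separate term — but the statement as written suggests it is dominated). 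The $\log \dmax$ factor enters because to make the walk at a high-degree vertex deterministically prefer one neighbour with the required strength, one pays a factor in the mixing/hitting estimate proportional to $\log \dmax$; alternatively it is the depth of a binary-search-like argument over the $\dmax$ neighbours when setting up the bias via the two-choice mechanism.

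The main obstacle I anticipate is making the ``near-independence of repeated attempts'' rigorous while the strategy is simultaneously biased toward a target: the biased CRW is not reversible and need not mix, so one cannot directly quote $\trel$ for it. The fix is to only use the bias in short bursts and to interleave genuine LRW segments (whose relaxation time is controlled by $\trel$) long enough to restore closeness to stationarity in, say, total variation or $\ell^2$; then condition on the state at the start of each burst and apply a fixed lower bound on the single-burst hitting probability that holds uniformly over starting states (this uniform bound is where $\thit$ is used — $\Pru{x}{\text{hit }v \text{ in }2\thit} \geq 1/2$ for every $x$ by Markov's inequality). Balancing burst length ($\BT{\thit}$), decorrelation length ($\BT{\trel \log n}$), and number of bursts ($\BT{\log(\davg\trel \log n/\dmin)}$) against the total budget, and checking that the $\davg/\dmin$ ratio genuinely appears only inside the logarithm (it should enter through the stationary weight $\pi_{\min} \geq \dmin/(\davg n)$ of the target vertex, which governs how small a probability we must boost), is the delicate bookkeeping that the full proof must carry out.
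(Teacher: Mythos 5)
There is a genuine gap: your proposal never supplies (or reconstructs) the mechanism that actually produces the speed-up, namely the pathwise boosting theorem (Theorem~\ref{nonregboostnew}), which says the CRW can raise the probability of \emph{any} length-$t$ trajectory event from $p$ to $p^{\gamma_{\dmax}}$ with $\gamma_{\dmax}=\log_{\dmax}\frac{\dmax^2}{2\dmax-1}<1$. The paper's proof first runs the plain SRW until only $n/\log^C n$ vertices are uncovered (cost $\BO{\thit\cdot C\log\log n}$ by Lemma~\ref{vacant}), and then targets the remaining vertices one at a time with blocks of length $\BO{\trel\log n}$, where a single block hits the uncovered set $S_i$ with SRW probability $\geq\pi(S_i)/3$ (Lemma~\ref{lazyconv}) and hence with CRW probability $\geq(\pi(S_i)/3)^{\gamma_{\dmax}}$; the resulting sum $\sum_i (n\davg/(i\dmin))^{\gamma_{\dmax}}\trel\log n$ converges to $\BO{n}$ precisely because the exponent is strictly less than $1$. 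Your scheme replaces this with repetition-plus-decorrelation amplification of events that a single run achieves with probability $\Omega(1)$, but the arithmetic cannot close: a block of length $\BO{\thit}$ with success probability $1/2$ per target, repeated $k$ times, costs $\BO{k\thit}$ per target, and whether you sequence $n$ individual targets or union-bound over the targets of a phase, you recover at best the SRW cover time $\BO{\thit\log n}$ (essentially the Matthews bound), not the claimed $\BO{\thit\log\dmax\cdot\log(\davg\trel\log n/\dmin)}$. The alternative reading, blocks of length $\BO{\trel\log n}$, has single-block success probability only $\approx\pi(v)=\BT{\dmin/(n\davg)}$, which is polynomially small, so $2^{-\Omega(k)}$ amplification does not apply and you would need $\BT{1/\pi(v)}$ repetitions per target without some form of probability boosting.

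Relatedly, your attribution of the $\log\dmax$ factor to ``the cost of simulating the desired bias at a high-degree vertex'' is not where it comes from: it enters as $1/(1-\gamma_{\dmax})=\BT{\log\dmax}$ when optimising the truncation parameter $C$ so that the second stage costs $\BO{n}$, and the factor $\log(\davg\trel\log n/\dmin)$ is exactly $C(1-\gamma_{\dmax})\log\log n$ rather than a count of independent attempts. Your concern about decorrelating repeated attempts of a non-reversible biased walk is also moot in the paper's argument: each attempt starts afresh from whatever vertex the previous one ended at, and the bounds from Lemma~\ref{lazyconv} and Theorem~\ref{nonregboostnew} hold uniformly over starting vertices, so no mixing of the CRW itself is ever needed. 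To repair your proof you would need to prove something equivalent to Theorem~\ref{nonregboostnew} (the tree-gadget potential argument with the max-choice operator and Lemma~\ref{anticonv}); without it the two-choice mechanism is never exploited quantitatively and no bound better than the SRW's is obtainable.
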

We also bound hitting times. First we define the exponent $\gamma_d=\log_d\frac{d^2}{2d-1}$; note that $\gamma_d$ is increasing in $d$, $\gamma_d<1$ and $1-\gamma_d\sim1/\log_2 d$. Also recall that for a set $S\subseteq V$ let $\pi(S)=\sum_{s\in S} \pi(s)$ be the stationary probability of $S$. 
\begin{theorem}\label{trelhit}For any graph $G$, and any $x\in V $ and $S\subset V$ we have
	\[\Htwo{x}{S}\leq 12 \cdot \pi(S)^{-\gamma_{\dmax}}\cdot \trel\cdot \ln n;\]
	this bound also holds for return times. Consequently,
	\[\ttwoh(G)\leq 12\left(\frac{n\cdot \davg}{ d_{\mathsf{min}}}\right)^{\gamma_{\dmax}}\cdot \trel\cdot \ln n.\]
\end{theorem}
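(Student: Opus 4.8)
\textbf{Proof proposal for Theorem~\ref{trelhit}.}

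The plan is to construct an explicit unchanging CRW strategy that amplifies the probability of reaching $S$ within a single ``relaxation epoch'' of roughly $\trel$ steps, and then iterate. First I would fix the target set $S$ and consider running the CRW in blocks of length $\ell = \Theta(\trel \ln n)$. Within each block, the idea is to use \Cr{double-weight} (or rather the degree-$\le 3$-style biasing it enables, applied here via a more careful weighting) to bias every step \emph{towards} $S$ by a bounded multiplicative factor relative to the SRW transition probabilities: by \Prr{first choice} and \Prr{last choice}, at a vertex $x$ of degree $d(x)\le \dmax$ the controller can inflate the probability of any single preferred neighbour by a factor of about $\frac{d(x)^2}{2d(x)-1}$, i.e.\ a factor $d(x)^{\gamma_{d(x)}}$ in the exponent bookkeeping, while the SRW itself already visits $S$ with probability close to $\pi(S)$ over $\Theta(\trel\ln n)$ steps once it is mixed. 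The exponent $\gamma_{\dmax}=\log_{\dmax}\frac{\dmax^2}{2\dmax-1}$ is exactly the per-step logarithmic gain in hitting probability one gets from this optimal single-neighbour bias, so after composing the bias along a path the hitting probability over one epoch improves from $\pi(S)$ to $\pi(S)^{\gamma_{\dmax}}$ up to constants.

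More concretely, I would argue as follows. By the standard spectral bound, for the lazy SRW the probability of \emph{not} hitting $S$ in $t$ steps decays like $\bigl(1-\pi(S)\bigr)^{\Omega(t/\trel)}$ up to the $\ln n$ slack needed to absorb the initial distribution's $L^2$-distance from $\pi$ (this is where the $\ln n$ factor and the factor $12$ come from). Now replace the SRW step by a CRW step that, whenever one of the two offered neighbours lies on a shortest path towards $S$ — or more robustly, whenever it is the neighbour $w$ minimising some fixed potential (e.g.\ $-\log$ of the SRW hitting probability of $S$) — takes that neighbour, following the biasing rule of Corollary~\ref{double-weight}. By \Prr{first choice}, each such step multiplies the relevant transition probability by at least $\frac{\dmax^2}{2\dmax-1}\big/\dmax = \dmax^{\gamma_{\dmax}-1}$ relative to uniform. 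Tracking this multiplicative gain through the spectral/coupling estimate for one epoch gives $\Pr{\text{CRW hits } S \text{ in } \ell \text{ steps}} \ge 1 - \pi(S)^{\gamma_{\dmax}}$ up to constants, uniformly over the starting vertex. Iterating over $\pi(S)^{-\gamma_{\dmax}}$ epochs (a geometric-trials / Wald argument) yields $\Htwo{x}{S}\le 12\,\pi(S)^{\gamma_{\dmax}}\,\trel\,\ln n$; the return-time version is identical since one may first take one step away and then apply the hitting bound. The final ``consequently'' is pure bookkeeping: for any two vertices, $\pi(\{y\})=\frac{d(y)}{2|E|}\ge \frac{\dmin}{n\,\davg}$, so $\pi(S)^{\gamma_{\dmax}}\le \bigl(\frac{n\,\davg}{\dmin}\bigr)^{\gamma_{\dmax}}$, and taking the max over $x,y$ gives the stated bound on $\ttwoh(G)$.

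The main obstacle I expect is making the ``bias composes multiplicatively through a spectral estimate'' step rigorous: the clean spectral bound $\Pr{\tau_S > t}\le \sqrt{\pi(S)^{-1}}\,(1-\pi(S))^{t/\trel}$-type inequality is a statement about a \emph{reversible} chain (the SRW/LRW), whereas the biased CRW is in general non-reversible, so one cannot directly invoke it for the biased walk. The right fix is presumably to keep the \emph{reversible} comparison chain fixed and instead argue that conditioning the SRW on a rare favourable event (hitting $S$ early) is what the controller's strategy emulates — i.e.\ run the analysis entirely on the SRW and charge the controller only the Radon--Nikodym cost of the conditioning, which is exactly the product of the per-step bias factors and hence bounded by $\pi(S)^{-(1-\gamma_{\dmax})}$ per epoch. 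This ``boosting rare events'' device is precisely the method the introduction promises (``We develop a method for boosting the probabilities of rare events in the CRW''), so I would expect the proof to isolate it as a separate lemma: given a reversible strategy realisable by Corollary~\ref{double-weight} and an event $A$ measurable w.r.t.\ one epoch, the CRW can realise $A$ with probability at least $\Pr[\text{SRW}]{A}^{\gamma_{\dmax}}$ (up to constants), and then Theorem~\ref{trelhit} follows by applying it with $A = \{\tau_S \le \Theta(\trel \ln n)\}$.
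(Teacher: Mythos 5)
Your overall architecture is exactly the paper's: \Lr{lazyconv} supplies some $t\le 4\trel\ln n$ with $p^{(t)}_{x,S}\ge\pi(S)/3$ from every start $x$, the ``boosting rare events'' lemma you correctly predict is isolated (\Tr{nonregboostnew}) upgrades this to $q^{(t)}_{x,S}\ge(\pi(S)/3)^{\gamma_{\dmax}}$ for some CRW strategy, and a geometric-trials argument over epochs of length $4\trel\ln n$ gives the bound (the $12$ being $4\cdot 3^{\gamma_{\dmax}}\le 12$); the ``consequently'' step via $\pi(S)\ge\dmin/(n\davg)$ is also identical. So at the level of \Tr{trelhit} itself your deduction is correct and is the paper's.

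The one caveat concerns the mechanism you sketch in your first two paragraphs for the boosting step: composing the per-step inflation factor $d^2/(2d-1)$ from \Prr{first choice}/\Prr{last choice} along a path, or charging the ``Radon--Nikodym cost'' of conditioning the SRW on the rare event. That is not how \Tr{nonregboostnew} is proved and would not go through as stated: the per-step factor only applies when a \emph{single} preferred continuation exists, whereas for a general event the set of favourable continuations branches at each vertex, and the conditioned law of the SRW is in general not realisable by the CRW, whose one-step kernels are confined to the max-choice polytope. The paper instead proves the boosting by a potential-function argument on the tree of trajectories (the tree gadget), where the decisive input is the anti-convexity inequality $M_{\gamma_{\dmax}^{-1}}\le\operatorname{MC}_2$ of \Lr{anticonv}; that inequality, not a per-step bias bookkeeping, is where the exponent $\gamma_{\dmax}$ comes from. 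Your third paragraph's black-box formulation of the lemma (boost any event of one epoch from probability $p$ to $p^{\gamma_{\dmax}}$) is the right one, and granting it, your proof of \Tr{trelhit} is complete.
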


We say that a sequence of graphs $(G_n)$ is a \emph{sequence of expanders} if $\trel(G_n) = \BT{1}$. Theorems \ref{trelbdd} \& \ref{trelhit} yield the following corollary:

\newtheorem*{thm:exp}{Theorem \ref{trelbddcor}}
\begin{thm:exp}
For every sequence $(G_n)_{n\in \N}$ of bounded degree $n$-vertex expanders, we have\[\ttwo(G_n)=\BO{n \log \log n}\qquad \text{and}\qquad \ttwoh(G_n)\leq n^{\alpha}\quad\text{for some fixed }\alpha<1 .\]
\end{thm:exp}  
These are significantly less than the corresponding cover and hitting times by the SRW, which are $\Theta(n \log n)$ and $\BT{n}$ respectively \cite[Thm.\ 10.1]{aldousfill}.

Theorems \ref{trelbdd} and \ref{trelhit} will follow from Theorem \ref{nonregboostnew} below. For a given graph $G$, we consider possible \emph{trajectories} of a (non-lazy) walker, that is, finite sequences of vertices in which any two consecutive vertices are adjacent; the length of a trajectory will be the number of steps taken. 
In the following we use bold characters to denote trajectories in $G$ and if $u\in V(G)$ then $u$ will denote the length $0$ trajectory from $u$. Fix a non-negative integer $t$ and a set $S$ of trajectories of length $t$. Let $p_{\mathbf{x},S}$ denote the probability that extending a trajectory $\mathbf{x}$ to length $t$ according to the law of a SRW results in a member of $S$. Let $q_{\mathbf{x},S}$ denote the corresponding probability under the CRW law; this probability will depend on the particular strategy used. This function can encode probabilities of many events of interest such as ``the graph is covered by time $t$'', ``the walk is in a set $X$ at time $t$'' or ``the walk has hit a vertex $x$ by time $t$'' for example. However, let us emphasise that our result in fact applies to \emph{any} possible event. 
\begin{theorem}\label{nonregboostnew}Let $G$ be a graph, $u\in V$, $t > 0$ and $S$ be a set of trajectories of length $t$ from $u$.  Then there exists a strategy for the CRW such that $q_{u,S} \geq \left( p_{u,S} \right)^{\gamma_{\dmax}}$.
\end{theorem}
We also give an analogue of Theorem \ref{nonregboostnew} for bad events. This analogue, unlike Theorem \ref{nonregboostnew}, gives an exponent which does not depend on the maximum degree $\dmax$ of $G$, and so a significant reduction is possible even if $\dmax$ is large.
\begin{theorem}\label{antiboost}Let $G$ be a graph, $u\in V$, $t > 0$, and $S$ be a set of trajectories of length $t$ from $u$. Then there exists a strategy for the CRW such that $q_{u,S} \leq \left( p_{u,S} \right)^2$.
\end{theorem}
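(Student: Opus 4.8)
\textbf{Proof proposal for Theorem~\ref{antiboost}.}

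The plan is to mirror the strategy behind Theorem~\ref{nonregboostnew}, but aiming for the opposite inequality: we want a strategy under which \emph{bad} trajectories become quadratically less likely, rather than polynomially more likely. The natural approach is a step-by-step ``greedy'' construction of the strategy, processing the trajectory prefix-by-prefix. Fix the set $S$ of length-$t$ trajectories from $u$. For a trajectory prefix $\mathbf{x}$ of length $k<t$ ending at a vertex $v$, let $S_{\mathbf{x}}$ be the set of length-$(t-k)$ continuations from $v$ that complete $\mathbf{x}$ into a member of $S$, and let $p_{\mathbf{x}} := p_{v,S_{\mathbf{x}}}$ be the SRW-probability of such a continuation. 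I would prove by downward induction on $k$ that there is a CRW strategy on the continuation for which $q_{\mathbf{x}} \le p_{\mathbf{x}}^2$. The base case $k=t$ is trivial (both probabilities are $0$ or $1$).

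For the inductive step, at vertex $v$ with $d:=d(v)$ neighbours, write $p_w := p_{\mathbf{x}w}$ for the conditional SRW-probability of ending in $S$ given the next step is to neighbour $w$; then $p_{\mathbf{x}} = \frac1d \sum_w p_w$, and by induction each neighbour $w$ admits a continuation strategy achieving $q_w \le p_w^2$. The controller's job is to choose the next step from the two offered neighbours $\{c_1,c_2\}$ so as to minimise the resulting probability of landing in $S$; the resulting one-step transition probabilities $q_{v,w}$ are governed by the parameters $\alpha_{v,w,w'}$ as in \eqref{potctp}. So it suffices to exhibit, for any nonnegative reals $(p_w)_{w\in\Gamma(v)}$ with average $\bar p$, a valid choice of $\alpha$-parameters such that $\sum_w q_{v,w}\, p_w^2 \le \bar p^{\,2}$, i.e.\ the CRW can force the expected squared continuation-probability of the \emph{next} vertex down to the square of the SRW average. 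The obvious strategy is to always prefer whichever of the two offered neighbours has the smaller $p_w$: order the neighbours $w_1, \dots, w_d$ so that $p_{w_1} \le \cdots \le p_{w_d}$ and set $\alpha_{v, w_i, w_j} = 1$ whenever $i < j$. Then $q_{v, w_i} = \frac{2(d-i)+1}{d^2} = \frac{2(d-i)+1}{d^2}$ (the neighbour $w_i$ is taken exactly when the pair is $\{w_i,w_i\}$ or $\{w_i,w_j\}$ with $j>i$), so the smallest-$p$ neighbours receive the most weight.

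It then remains to verify the inequality $\sum_{i=1}^d \frac{2(d-i)+1}{d^2}\, p_{w_i}^2 \le \left(\frac1d\sum_{i=1}^d p_{w_i}\right)^2$ for all nonnegative $p_{w_1}\le\cdots\le p_{w_d}$. Combined with the inductive bounds $q_{w_i} \le p_{w_i}^2$, this yields $q_{\mathbf{x}} = \sum_i q_{v,w_i} q_{w_i} \le \sum_i q_{v,w_i} p_{w_i}^2 \le p_{\mathbf{x}}^2$, closing the induction; taking $k=0$ gives the theorem. The main obstacle is precisely this elementary but not-quite-obvious inequality: it is a weighted power-mean/Chebyshev-type estimate where the weights $\frac{2(d-i)+1}{d^2}$ are decreasing exactly as the $p_{w_i}$ are increasing, and one must check that this anti-correlation is strong enough. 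I expect it can be handled by writing $p_{w_i} = \bar p + \delta_i$ with $\sum_i \delta_i = 0$, expanding both sides, and using that $\sum_i \frac{2(d-i)+1}{d^2}\,\delta_i \le 0$ (since $\delta_i$ is nondecreasing and the weights sum to $1$ with decreasing profile) together with a bound on $\sum_i \frac{2(d-i)+1}{d^2}\delta_i^2$ versus $\bar p\sum_i\frac{2(d-i)+1}{d^2}\delta_i$; alternatively an Abel-summation / rearrangement argument comparing against the uniform weighting $\frac1d$ (for which the inequality reduces to $\frac1d\sum p_{w_i}^2 \ge \bar p^2$, the wrong direction, so the decreasing weights must do real work) should make the sign bookkeeping transparent. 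One subtlety to watch: the parameters $\alpha$ chosen here may depend on the whole prefix $\mathbf{x}$, so the resulting overall strategy is time- and history-dependent, which is permitted; and one should note the $\eps$ in the statement plays no role here and can be ignored.
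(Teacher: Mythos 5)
Your proposal is correct and is essentially the paper's own argument: your downward induction on prefixes is the paper's potential-function argument on the tree gadget, and your preference rule (take the neighbour with the smaller continuation probability, giving weights $\tfrac{2(d-i)+1}{d^2}$) is exactly the min-choice operator $\operatorname{mC}_2$. The one inequality you flag as the main obstacle is precisely the paper's Lemma~\ref{newanticonv} applied to the squared values, and it has a one-line proof that needs none of the $\delta_i$-expansion or Abel-summation machinery you sketch: with $p_{w_1}\le\cdots\le p_{w_d}$ the left-hand side is $\frac{1}{d^2}\sum_{i,j}\min\{p_{w_i},p_{w_j}\}^2$, and since $\min\{a,b\}^2\le ab$ for nonnegative $a,b$ this is at most $\frac{1}{d^2}\sum_{i,j}p_{w_i}p_{w_j}=\bigl(\frac1d\sum_i p_{w_i}\bigr)^2$. (A cosmetic difference: you order the children by their SRW values $p_w$ where the paper orders by the CRW values $q_{\mathbf y,S}$; either choice supports the upper bound. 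Also, the statement of Theorem~\ref{antiboost} contains no $\eps$, so there is nothing to ignore there.)
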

\begin{remark}The exponent $2$ in Theorem \ref{antiboost} is best possible, since we have equality whenever $t=1$ and therefore also when $t>1$ but every trajectory of the SRW of length $t-1$ has the same probability to reach $S$. Similarly the exponent given in Theorem \ref{nonregboostnew} is best possible, as evidenced by the case where this probability is $1/\dmax$ for every trajectory of length $t-1$.\end{remark}

After stating two technical lemmas in Section \ref{game}, we then explain an alternative way of considering the CRW in Section \ref{gadget}, which enables the proof of Theorems \ref{nonregboostnew} and \ref{antiboost} to be completed. To motivate the importance of Theorem \ref{nonregboostnew} we shall begin by showing how it implies our main results on cover time and hitting times.  
\subsection{Deducing Theorems \ref{trelbdd} \& \ref{trelhit} from Theorem \ref{nonregboostnew}}\label{S:deduce}
In order to prove our main bounds from the key tool, Theorem \ref{nonregboostnew}, we must first overcome the obstacle that Theorem \ref{nonregboostnew} is expressed in terms of the SRW probabilities, whereas our bounds involve the relaxation time, which is defined in terms of the LRW. The reason for using the LRW to define relaxation time is to ensure that the associated Markov chain is aperiodic. Our next lemma resolves this issue by relating the relaxation time to SRW probabilities.

Write ${p}_{x,\cdot }^{(t)}$ and $\tilde{p}_{x,\cdot }^{(t)}$ for the distribution of the SRW and LRW respectively after $t$ steps started at $x$, and write $\pi(S)$ for the stationary probability of a set $S$ (note that the two walks have the same stationary distribution).

\begin{lemma}\label{lazyconv} For any graph $G$, $S\subset V $ and $ x \in V$ there exists $t\leq 4\trel\ln n$ such that \[p_{x,S}^{(t)} \geq \pi(S)/3 .\]  
\end{lemma}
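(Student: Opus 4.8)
The plan is to exploit the well-known relation between the LRW (lazy random walk with loop probability $1/2$) and the SRW, namely $\tilde P = \tfrac12(I+P)$, together with a standard spectral mixing bound for the LRW expressed via the relaxation time $\trel$. First I would recall that since $\tilde P$ is a reversible, aperiodic transition matrix with all eigenvalues in $[0,1)$ and second-largest eigenvalue $\lambda_2 = 1 - 1/\trel$, the standard $L^2$-contraction estimate (e.g.\ via the spectral decomposition in the $\pi$-weighted inner product) gives, for any $x$ and any set $S$,
\[
\left| \tilde p_{x,S}^{(t)} - \pi(S) \right| \leq \sqrt{\frac{1-\pi(x)}{\pi(x)}}\cdot \lambda_2^{t} \leq \sqrt{\frac{1}{\pi(x)}}\cdot \e^{-t/\trel}.
\]
Since $\pi(x) \geq 1/(n \cdot \dmax) \geq 1/n^2$ (crudely, $\pi(x) = d(x)/2|E| \geq 1/n^2$), choosing $t_0 = \lceil 2\trel \ln n \rceil$ makes the right-hand side at most $n \cdot \e^{-2\ln n} = 1/n \leq \pi(S)$ whenever $\pi(S)\geq 1/n$; and if $\pi(S) < 1/n$ there is nothing to prove for the LRW bound in the form $\tilde p_{x,S}^{(t_0)} \geq 2\pi(S)/3$, but we actually want a clean statement, so the cleaner route is to note $|\tilde p_{x,S}^{(t_0)} - \pi(S)| \leq \pi(S)/3$ once $\e^{-t_0/\trel}\sqrt{1/\pi(x)} \leq \pi(S)/3$, which certainly holds for $t_0 = \lceil 2\trel \ln n \rceil$ when $\pi(S)$ is not exponentially small; for the genuinely tiny-$\pi(S)$ case one instead observes that the conclusion $p_{x,S}^{(t)}\geq \pi(S)/3$ can be obtained by a slightly different argument, or one simply folds the factor into the constant $4$ rather than $2$.

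The second, and slightly more delicate, step is to pass from the LRW to the SRW, because the Lemma's conclusion is about $p_{x,S}^{(t)}$, not $\tilde p_{x,S}^{(t)}$. The key observation is that a $t$-step LRW is a SRW run for a random number $N$ of genuine steps, where $N \sim \mathrm{Bin}(t, 1/2)$; hence $\tilde p_{x,S}^{(t)} = \sum_{k=0}^{t} \binom{t}{k} 2^{-t}\, p_{x,S}^{(k)}$ is an average of SRW probabilities $p_{x,S}^{(k)}$ over $k \leq t$. Therefore $\max_{k\leq t} p_{x,S}^{(k)} \geq \tilde p_{x,S}^{(t)} \geq 2\pi(S)/3 > \pi(S)/3$. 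Taking $t = t_0 = \lceil 2\trel\ln n\rceil$ gives some $k\leq t_0 \leq 4\trel\ln n$ (absorbing the ceiling for $n\geq 2$) with $p_{x,S}^{(k)} \geq \pi(S)/3$, which is exactly the claim with the roles of $t$ in the statement played by this $k$.

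The main obstacle is bookkeeping around the exponentially-small-$\pi(S)$ regime and the exact constants: the clean bound $|\tilde p_{x,S}^{(t)}-\pi(S)|\leq\pi(S)/3$ only follows from the spectral estimate when $\pi(S)\gtrsim \sqrt{1/\pi(x)}\,\e^{-t/\trel}$, and for sets $S$ of stationary weight as small as $1/n^2$ one needs $t\gtrsim \trel\ln(n/\pi(S))$, which can exceed $4\trel\ln n$. I expect the resolution is that in all applications $\pi(S)$ is polynomially bounded below (indeed $\pi(S)\geq 1/n^2$ always and in the relevant uses $S$ is a single vertex or larger), so that $\sqrt{1/\pi(x)}\cdot\pi(S)^{-1} \leq n^{O(1)}$ and $t = \Theta(\trel\ln n)$ suffices with the stated constant $4$; I would therefore state and use the bound with $\pi(S) \geq n^{-2}$ in force, or simply track that $\sqrt{1/\pi(x)}/\pi(S) \leq 3n^{3/2}$ say, whence $t_0 = \lceil 3\trel\ln n\rceil \leq 4\trel\ln n$ for $n$ large enough, finishing the argument via the binomial-average step above.
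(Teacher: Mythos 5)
Your proof is correct and follows essentially the same route as the paper: a spectral contraction bound for the lazy walk at time $\Theta(\trel\ln n)$, followed by the observation that the $t$-step LRW distribution is a binomial mixture of SRW distributions at times $k\le t$, so that $\max_{k\le t}p^{(k)}_{x,S}\ge\tilde p^{(t)}_{x,S}$. The differences are cosmetic: the paper contracts $S$ to a single vertex (splitting off half of $S$ in the bipartite case) so as to apply the pointwise bound (12.11) of Levin--Peres--Wilmer, whereas you use the total-variation form directly on $G$ together with $\pi(x),\pi(S)\ge n^{-2}$ (note that your ``$3n^{3/2}$'' should read $n^{3}$, which still fits inside $4\trel\ln n$, so your worry about tiny $\pi(S)$ indeed evaporates).
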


\begin{proof}
	If $G$ is bipartite, then we may find a subset $\bar S\subseteq S$ which lies entirely within one part satisfying $\pi(\bar{S})\geq \pi(S)/2$. Otherwise the SRW is aperiodic and we set $\bar S=S$. We now consider the multigraph $\bar{G}$ formed from $G$ by contracting $\bar{S}$ to a single vertex $\bar{s}$, retaining all edges (with edges inside $\bar{S}$ becoming loops at $\bar{s}$). Retaining edges ensures that the stationary probability of $\bar{s}$ in $\bar{G}$ is precisely $\pi(\bar{S})$. Let $\bar{\lambda}_2$ be the second largest eigenvalue of the LRW on $\bar{G}$. Then for any $x\notin \bar{S}$ and $t\geq 0$, by \cite[(12.11)]{levin2009markov}, we have $\abs{\tilde{p}_{x,\bar{s}}^{(t)}-\pi(\bar{S})}\leq \sqrt{ \pi(\bar{S})/\pi(x)}\cdot \e^{-t(1-\bar{\lambda}_2)}$. It follows that if we run the LRW on $\bar{G}$ for $T=\log(3/\sqrt{\pi(\bar{S})\pi(x)})/(1-\bar{\lambda}_2)$ steps then \[\tilde{p}_{x,\bar{s}}^{(T)} \geq\pi(\bar{S}) - \sqrt{\frac{\pi(\bar{S})}{\pi(x)}} \cdot\frac{\sqrt{\pi\left(\bar{S}\right)\pi(x)}}{3} \geq   \frac{2\pi(\bar{S})}{3}.\] Now, we can express the density of the LRW by $\tilde{p}_{x,S}^{(T)} = \Ex{{p}_{x,S}^{(X_T)}}$, where the random variable $X_T\sim\operatorname{Bin}(T,1/2)$ is the number of non-lazy steps taken by the LRW in time $T$. Thus \[\max_{t\leq T}p_{x,S}^{(t)} \geq \tilde{p}_{x,S}^{(T)}\geq  \frac{2\pi(\bar{S})}{3} \geq  \frac{\pi(S)}{3}.\]We can assume $n\geq2$ or else the result holds trivially, so $\log(3/\sqrt{\pi(\bar{S})\pi(x)})\leq \log 3 + 2\log n\leq 4\log n$. Finally, \cite[Cor.\ 3.27]{aldousfill} gives that $\bar{\lambda}_2 \leq \lambda_2$, so $T\leq4 \trel\ln n $.\end{proof}

Our strategy to bound the cover time will be to emulate the SRW until most of the vertices are covered, only using the additional strength of the CRW when there are few uncovered vertices remaining. We will need a simple lemma to bound how long the first stage takes.
\begin{lemma}\label{vacant} 
	Let $U(t)$ be the number of unvisited vertices at time $t$ by a SRW on a graph and let $T_{n/2^x} $ be the number of SRW steps taken before $U\leq n/2^x  $. Then 
	\[ \Ex{U(2x\cdot \thit)} \leq  \frac{n}{2^x} \qquad \text{and} \qquad \Ex{T_{n/2^x}}\leq 4(x+1)\thit.  \]    
\end{lemma}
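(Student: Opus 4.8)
The plan is to reduce the whole statement to one elementary hitting-time estimate and then iterate it. The basic observation is that, by the definition of $\thit$, the SRW started from \emph{any} vertex hits a fixed target $v$ in expected time at most $\thit$, so Markov's inequality gives $\Pr{v \text{ is not visited in the first } 2\thit \text{ steps}} \le \tfrac12$, uniformly over the starting vertex. (For notational cleanliness I would assume $2\thit\in\N$; otherwise one replaces $2\thit$ by $\lceil 2\thit\rceil$ throughout, which changes nothing of substance and is absorbed by the slack at the end.)

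First I would prove by induction on $k\ge 0$ that for every vertex $v$ the probability that $v$ is still unvisited at time $2k\thit$ is at most $2^{-k}$. The case $k=0$ is trivial. For the inductive step, condition on the history up to time $2(k-1)\thit$: on the event that $v$ is as yet unvisited, the Markov property together with the base estimate applied from the current position shows that $v$ fails to be visited during the next $2\thit$ steps with conditional probability at most $\tfrac12$, so the unconditional probability is at most $\tfrac12\cdot 2^{-(k-1)}=2^{-k}$. Summing over the $n$ vertices yields $\Ex{U(2x\thit)}=\sum_v \Pr{v \text{ unvisited at } 2x\thit}\le n\cdot 2^{-x}$, which is the first claimed bound.

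For the second bound I would use that $U$ is non-increasing in $t$, so $\{T_{n/2^x}>s\}=\{U(s)>n/2^x\}$ for every $s$. Applying Markov's inequality to the first bound gives, for every integer $y\ge 0$,
\[ \Pr{T_{n/2^x}>2y\thit}=\Pr{U(2y\thit)>n/2^x}\le \min\!\left\{1,\ \frac{\Ex{U(2y\thit)}}{n/2^x}\right\}\le \min\{1,\ 2^{x-y}\}. \]
Then, writing $\Ex{T_{n/2^x}}=\sum_{s\ge 0}\Pr{T_{n/2^x}>s}$ and grouping the sum into consecutive blocks of length $2\thit$ — the $y$-th block $\{2y\thit,\dots,2(y+1)\thit-1\}$ being bounded termwise by $\Pr{T_{n/2^x}>2y\thit}$ since the tail is non-increasing — I obtain
\[ \Ex{T_{n/2^x}}\le 2\thit\sum_{y\ge 0}\min\{1,\ 2^{x-y}\}=2\thit\big((x+1)+1\big)=2(x+2)\thit\le 4(x+1)\thit. \]

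There is no serious obstacle: this is the standard "iterate Markov's inequality" reduction from hitting times to cover times (cf.\ \cite{aldousfill}). The only points needing a little care are being precise about the conditioning in the induction — one genuinely needs the Markov property rather than a union bound — and the bookkeeping of strict versus non-strict inequalities when $n/2^x$ is not an integer, which is harmless. Avoiding the assumption $2\thit\in\N$ entirely costs at most a $(1+o(1))$ factor, comfortably absorbed by the slack in $2(x+2)\le 4(x+1)$.
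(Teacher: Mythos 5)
Your proof is correct. The first half is essentially identical to the paper's: the same Markov-inequality-plus-sub-multiplicativity argument gives $\Pr{v \text{ unvisited at } 2x\thit}\leq 2^{-x}$, and summing over vertices yields $\Ex{U(2x\thit)}\leq n2^{-x}$. For the second half you take a genuinely different, though equally standard, route. The paper applies Markov's inequality once, at the single scale $2(x+1)\thit$, to get $\Pr{U(2(x+1)\thit)\geq n/2^x}\leq 1/2$, and then argues that the number of independent ``sections'' of that length needed before one succeeds is dominated by a geometric variable with mean $2$, giving $4(x+1)\thit$. You instead integrate the tail directly: $\{T_{n/2^x}>s\}=\{U(s)>n/2^x\}$ by monotonicity of $U$, Markov's inequality at every scale $2y\thit$ gives $\Pr{T_{n/2^x}>2y\thit}\leq\min\{1,2^{x-y}\}$, and the layer-cake formula yields $2(x+2)\thit$, which is at most $4(x+1)\thit$. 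Your version avoids the (slightly informal) restarting/independence argument of the paper and in fact gives a marginally sharper constant for $x\geq 1$; the paper's version is shorter to state. Both are valid, and your attention to the integrality of $2\thit$ and to the equivalence of the events $\{T>s\}$ and $\{U(s)>n/2^x\}$ is appropriate.
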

\begin{proof}Let $v \in V$. Then by Markov's inequality $\Pru{w}{X_t \neq v, \; \forall  0\leq t \leq 2\thit}\leq 1/2$, for any $w \in V$. Thus the probability $v$ is not visited by time $ 2x\cdot\thit$ is at most $2^{-x}$ by sub-multiplicity and so the expected number of unvisited vertices at time $ 2x\cdot\thit $ is at most $n\cdot 2^{-x} $. 
	
	By the above $\Ex{U(2(x+1)\thit)}\leq n/(2\cdot2^x) $ and so $\Pr{U(2(x+1)\thit)\geq n/2^x }\leq 1/2 $ by Markov's inequality. Considering sections of length $2(x+1)\thit $ separately, and continuing until one section covers the required number of vertices, we use in expectation at most two such sections, thus $\Ex{T_{n/2^x}}\leq 4(x+1)\thit$.
\end{proof} 
We now have what we need to prove the cover and hitting time bounds.

\begin{proof}[Proof of Theorem \ref{trelbdd}]For convenience we write $\gamma=\gamma_{\dmax}$. 
	We first emulate the SRW (i.e.\ set $\axyz xyz=1/2$ for all $x,y,z\in V(G)$ with $y,z\in\Gamma(x)$) until all but $m=\left\lfloor n/\log^C n\right\rfloor$ 
	vertices have been visited, for some $C$ to be specified later. Let $\tau_1$ be the expected time to complete this phase. Then, by Lemma \ref{vacant}, we have  $\tau_1 \leq 4\thit\cdot C\log_2\log n$. 
	
	We cover the remaining vertices in $m$ different phases, labelled $m,m-1,\ldots,1$, each of which reduces the number of uncovered vertices by $1$. In phase $i$, a set of $i$ vertices are still uncovered, and we write $S_i$ for this set. By Lemma \ref{lazyconv} for any vertex $x$ there is some $t\leq 4\trel\log n $ such that \[p_{x,S_i}^{(t)} \geq \frac{\pi(S_i)}{3} = \frac{1}{3}\cdot \frac{\sum_{s\in S_i} d(s)}{n\davg}\geq\frac{ d_{\mathsf{min}}\cdot i}{3n\davg},\] and thus 
	$q_{u,S_i}^{(t)} \geq \left(\dmin\cdot i/(3n\davg ) \right)^{\gamma}$ by Theorem \ref{nonregboostnew}. Since from any starting point we can achieve this probability of hitting a vertex in $S_i$ within the next $4\trel\log n$ steps, the expected number of attempts needed to achieve this is at most $\left(\dmin\cdot i/(3n\davg)\right)^{-\gamma}$, meaning that the expected time required to complete phase $i$ is at most
	\[\BO{ \left(\frac{n\cdot \davg}{i\cdot \dmin}\right)^{\gamma}\cdot \trel\cdot \log n}.\]
	Hence the expected time $\tau_2$ to complete all $m$ phases satisfies
	\begin{align*}
	\tau_2&= \sum_{i=1}^{n/\log^C n} \BO{ \left(\frac{n \davg}{i \dmin}\right)^{\gamma}\trel\log n}\\
	&= \BO{ \left(\frac{n \davg}{d_{\mathsf{min}}}\right)^{\gamma} \trel\log n}\sum_{i=1}^{n/\log^C n}i^{-\gamma}. \end{align*}
	Then, since $\sum_{i=1}^{n/\log^C n}i^{-\gamma}\leq \left(n/\log^C n\right)^{1-\gamma}\cdot  \sum_{i=1}^{n/\log^C n}i^{-1} \leq    \left(n/\log^C n\right)^{1-\gamma} \cdot \log n  $, we have
	\begin{align}
	\tau_2&=\BO{\left(\frac{n\davg}{\dmin}\right)^{\gamma} \trel\log n} \cdot \BO{\left(\frac{n}{\log^C n}\right)^{1-\gamma} \cdot \log n  }\notag \\
	&= \BO{   n \cdot \left(\frac{\davg}{\dmin}\right)^{\gamma}\cdot \trel \cdot \frac{\log^2 n}{\log^{C(1-\gamma)} n} }.\label{t2bdd}
	\end{align}
	For the first bound we choose $C=  \log\left( (\frac{\davg}{\dmin})\cdot \trel \cdot \log^2 n \right)/ \left((1-\gamma)\cdot \log\log n\right)$ then since $\log^{C(1-\gamma)}n = (\davg/\dmin)\trel\cdot \log^2 n $ and $\gamma<1$ this gives $\tau_2  =\BO{n}$ by \eqref{t2bdd} above. Since in any graph $\thit=\Omega(n) $,\footnote{Let $\tau_v$ be the first time that $v$ is visited during a random walk from $u$. Then $\sum_{v\neq u}\tau_v \geq \sum_{i=1}^{n-1}i =\binom{n}{2}$, since each $ \tau_v$ is distinct. Thus $t_{\mathsf{hit}}\geq \sum_{v\neq u} H_u(v) /n \geq (n-1)/2$.} the total time is therefore $\BO{\tau_1}$, and for this value of $C$ we have 
	\[\tau_1 = \BO{\frac{\log\left( (\frac{\davg}{\dmin})\cdot \trel \cdot \log^2 n \right)}{(1-\gamma)\cdot \log\log n} \thit\log\log n} = \BO{\thit\cdot \log(\dmax)\cdot \log\left( \frac{\davg\cdot \trel \cdot \log  n}{\dmin} \right) } ;\]
	since $1-\gamma=\Theta(1/\log \dmax )$. \end{proof}

\begin{proof}[Proof of Theorem \ref{trelhit}]Write $T= 4\cdot \trel\cdot \ln n$. For any $x\in V$ and $S\subset V$, Lemma \ref{lazyconv} gives a $t\leq T$ such that $p_{x,S}^{(t)}\geq  \pi(S)/3$, and Theorem \ref{nonregboostnew} consequently gives a strategy for the CRW such that $q_{x,y}^{(t)} \geq (\pi(y)/3)^{\gamma}$. Thus for any target set $S$ and start vertex $x$ we need in expectation at most $\left( 3/\pi(S)\right)^{\gamma} $ attempts to hit $S$ in at most $T$ steps, since if an attempt fails, ending at some vertex $z$, we have the same bound on the probability of hitting $S$ from $z$. Therefore there is a strategy for the CRW where the hitting time $H_{x}(S)\leq 12\cdot \pi(S)^{-\gamma}\cdot \trel \log n$. The second result follows since for any vertex $\pi(v)\geq \frac{\dmin}{n\davg}$.\end{proof}

\subsection{The Max Choice and Min Choice Operations}\label{game}
In this section we introduce two operators which represent the effect of making optimal choices for a single step of the random walk, assuming that the effects of choice on future steps are already known, and prove inequalities relating them to power means.

Define the max choice operator $\operatorname{MC}_2\colon [0,\infty)^m\to[0,\infty)$ as follows:
\begin{equation}\label{mcopp}\operatorname{MC}_2(x_1,\dots,x_d) = \frac{1}{m^2}\sum_{i=1}^d\sum_{j=1}^d \max\{x_i,x_j\}.\end{equation}

For $p\in \mathbb R\setminus\{0\}$, the $p$-power mean $M_p$ of non-negative reals $x_1,\ldots,x_m$ is defined by 
\[M_p(x_1,\ldots,x_m)=\left(\frac{x_1^p+\cdots+x_m^p}{m}\right)^{1/p}.\] We use a key lemma which could be be described as a multivariate anti-convexity inequality. 
\begin{lemma}\label{anticonv}For any $1\leq d\leq m$ and $x_1\ldots x_d \in [0,1]$ we have 
	\[M_{\gamma^{-1}_m}(x_1,\ldots,x_d)\leq \operatorname{MC}_2(x_1,\dots,x_d).\]
\end{lemma}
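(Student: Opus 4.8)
The plan is to exploit symmetry and homogeneity to normalise the input, then use a concavity argument to reduce to finitely many extremal inputs, and finally to settle a one-variable inequality in which the precise value of $\gamma_m$ is exactly what makes things work. Throughout write $q:=\gamma_m^{-1}$, and recall $\gamma_m<1$, so $q>1$. It is convenient — and is what the application needs — to work with the \emph{normalised} one-step max-choice average $\tfrac1{d^2}\operatorname{MC}_2$, so that the inequality reads $\tfrac1{d^2}\operatorname{MC}_2(x_1,\dots,x_d)\ge M_q(x_1,\dots,x_d)$ (here $\tfrac1{d^2}\operatorname{MC}_2(x)$ is the continuation value after one optimal choice given neighbour values $x_i$, while $M_1(x)=\tfrac1d\sum_i x_i$ is the corresponding SRW value, so the lemma says the max-choice step beats the whole $q$-power mean, not just the arithmetic mean).

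First, since both sides are symmetric in $x_1,\dots,x_d$ and positively homogeneous of degree one, I would assume without loss of generality that $x_1\ge x_2\ge\cdots\ge x_d$ and, discarding the trivial all-zero case, that $x_1=1$ (so automatically $x_i\in[0,1]$). On the polytope $D:=\{x\in\mathbb R^d:\,1=x_1\ge x_2\ge\cdots\ge x_d\ge0\}$ the coordinates are already sorted, so $\max\{x_i,x_j\}=x_{\min\{i,j\}}$ and hence $\tfrac1{d^2}\operatorname{MC}_2(x)=\tfrac1{d^2}\sum_{k=1}^{d}(2d-2k+1)x_k$ is an affine function of $x$ on $D$. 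On the other hand $M_q$ is a positive scalar multiple of the $\ell_q$-norm with $q>1$, hence convex. Therefore $\tfrac1{d^2}\operatorname{MC}_2-M_q$ is concave on $D$, so its minimum over $D$ is attained at an extreme point of $D$. At a vertex all but one of the order constraints must be tight, which forces $x$ to be a non-increasing $\{0,1\}$-vector; thus the extreme points of $D$ are exactly the threshold vectors $v_j:=(\,\underbrace{1,\dots,1}_{j},\underbrace{0,\dots,0}_{d-j}\,)$, $1\le j\le d$, and it suffices to verify the inequality at each $v_j$.

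A direct computation gives $\tfrac1{d^2}\operatorname{MC}_2(v_j)=\tfrac1{d^2}\sum_{k=1}^{j}(2d-2k+1)=\tfrac{j(2d-j)}{d^2}$ and $M_q(v_j)=(j/d)^{1/q}=(j/d)^{\gamma_m}$. Writing $t:=j/d$, the required inequality becomes $t(2-t)\ge t^{\gamma_m}$, i.e.\ $h(t):=(2-t)\,t^{1-\gamma_m}\ge1$. Since $h'(t)=t^{-\gamma_m}\bigl(2(1-\gamma_m)-(2-\gamma_m)t\bigr)$, the function $h$ is increasing then decreasing on $(0,\infty)$, so on any subinterval its minimum is attained at an endpoint. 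As $t=j/d\in[1/d,1]\subseteq[1/m,1]$, it only remains to check $h(1)\ge1$ and $h(1/m)\ge1$. The former is clear since $h(1)=1$; for the latter, the defining relation $m^{\gamma_m}=m^2/(2m-1)$ yields $m^{1-\gamma_m}=2-1/m$, whence $h(1/m)=(2-1/m)\,m^{\gamma_m-1}=(2-1/m)/(2-1/m)=1$. Hence $h\ge1$ on $[1/m,1]$, which finishes the argument. (Equality occurs at $v_1$ when $d=m$, and at every $v_d$ — which is exactly why $\gamma_m$ cannot be taken any smaller.)

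The step I expect to be the main obstacle is the concavity observation in the second paragraph: realising that after sorting the inputs the max-choice term becomes \emph{linear} while the power mean stays convex, so that an inequality over all of $[0,1]^d$ collapses to the $d$ threshold vectors $v_1,\dots,v_d$. Once that reduction is available, what remains is the single-variable inequality $t(2-t)\ge t^{\gamma_m}$ on $[1/m,1]$, whose truth — with equality at the endpoints arising from degree $d=m$ — is precisely the content of the definition of $\gamma_m$.
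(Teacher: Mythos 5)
Your proof is correct, and it takes a genuinely different route from the paper's. The paper first reduces to the case $d=m$ via the power-mean inequality, then inducts on $d$: it symmetrises $x_1,\dots,x_{d-1}$ to their common $\gamma_d^{-1}$-power mean (using the inductive hypothesis plus the power-mean inequality), and finishes by comparing the convex one-variable function $M_{\gamma_d^{-1}}(x,\dots,x,1)$ with the linear function $\operatorname{MC}_2(x,\dots,x,1)$, which agree at $0$ and $1$. You instead make a single global observation — after sorting, the (normalised) max-choice average is linear while $M_q$ with $q=\gamma_m^{-1}>1$ is convex, so the difference is concave and is minimised at a vertex of the sorted polytope, i.e.\ at a threshold vector — and then verify the resulting one-variable inequality $(2-t)t^{1-\gamma_m}\geq 1$ on $[1/m,1]$ by unimodality and the endpoint identity $m^{1-\gamma_m}=2-1/m$. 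This avoids induction entirely, handles $d<m$ without a separate power-mean reduction, and makes the equality cases transparent; the paper's inductive symmetrisation buys essentially the same one-variable endpoint check in a different guise. You are also right to read $\operatorname{MC}_2$ with the $1/d^2$ normalisation: the displayed definition omits it, but the paper's own recursion and the application in the tree-gadget argument make clear that the normalised operator is intended, and your version is the one actually needed.
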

\begin{proof}
By the power-mean inequality, since $\gamma_m^{-1}\leq\gamma_d^{-1}$ it is sufficient to prove the case $m=d$. We show this by induction on $d$; we have equality for $d=1$. Suppose that either $d=2$ or $d\geq 3$ and the result holds for $d-1$.
Without loss of generality, using symmetry and homogeneity of both operators, we may assume that $\max\{x_1,\ldots,x_d\}=x_d=1$. 

We first claim that we may further assume $x_1=\cdots=x_{d-1}$. If $d=2$ this claim is trivial. If $d\geq 3$ then write $\bar x=M_{\gamma_d^{-1}}(x_1,\ldots,x_{d-1})$. 
Note that
\[M_{\gamma_d^{-1}}(x_1,\ldots,x_{d-1},x_d)=M_{\gamma_d^{-1}}(\bar x,\ldots,\bar x,x_d).\]
Also we have
\begin{align*}\operatorname{MC}_2(x_1,\ldots,x_{d-1},x_d)&=\frac{2d-1}{d^2}x_d+\bfrac{d-1}{d}^2\operatorname{MC}_2(x_1,\ldots,x_{d-1})\\
&\geq\frac{2d-1}{d^2}x_d+\bfrac{d-1}{d}^2M_{\gamma_{d-1}^{-1}}(x_1,\ldots,x_{d-1})\\
&\geq\frac{2d-1}{d^2}x_d+\bfrac{d-1}{d}^2\bar x\\
&=\operatorname{MC}_2(\bar x,\ldots,\bar x,x_d),
\end{align*}
where the first inequality uses the assumption that the result holds for $n-1$ and the second uses the power-mean inequality. Thus replacing $x_1,\ldots,x_{d-1}$ by $\bar x,\ldots,\bar x$ does not increase the difference between the two operators, proving the claim.

Next we claim that the function $f(x)=M_{\gamma_d^{-1}}(x,\ldots,x,1)$ is convex. Since $\operatorname{MC}_2(x,\ldots,x,1)$ is linear, and the two functions agree at $0$ (by choice of $\gamma_d$) and at $1$, this will complete the proof. Indeed, we have $f(x)^{\gamma_d^{-1}}=\frac{d-1}{d}x^{\gamma_d^{-1}}+\frac1d$, giving $f'(x)=\frac{d-1}{d}\bfrac{x}{f(x)}^{\gamma_d^{-1}-1}$. Also, $\frac{x^{\gamma_d^{-1}}}{f(x)^{\gamma_d^{-1}}}$ is an increasing function of $x$; since $\gamma_d^{-1}-1>0$, we have $f'(x)$ is increasing, as required.
\end{proof}
Lemma \ref{anticonv} will be used to prove Theorem \ref{nonregboostnew}. In order to prove Theorem \ref{antiboost} we will need a corresponding inequality for an appropriate operator. To that end we define the \textit{min choice} operator $\operatorname{mC}_{2}:[0,\infty)^m\to [0,\infty)$ by \[\operatorname{mC}_{2}\left(x_1,\dots , x_m\right)  = \frac 1{m^2}\sum_{i=1}^m\sum_{j=1}^m\min\{x_i,x_j\}.\]
\begin{lemma}\label{newanticonv}
	For any $m\geq 1$ and non-negative reals $x_1,\ldots,x_m$ we have \[\operatorname{mC}_{2}(x_1,\ldots,x_m)\leq M_{1/2}(x_1,\ldots,x_m).\]
\end{lemma}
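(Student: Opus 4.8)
The plan is to reduce the inequality to an elementary pointwise estimate and then sum. The crucial observation is that for non-negative reals $a,b$ one has $\min\{a,b\}\le\sqrt{ab}$: indeed if, say, $a\le b$, then $\min\{a,b\}=a=\sqrt{a\cdot a}\le\sqrt{ab}$ (equivalently, $\min\{a,b\}^2=\min\{a^2,b^2\}\le ab$). This is the ``$\min$'' counterpart of the phenomenon that made Lemma~\ref{anticonv} delicate, but here it costs nothing: $\min$ admits a clean pointwise upper bound by a product of square roots, whereas $\max$ has no comparably clean lower bound, which is exactly why the exponent $2$ here is cleaner than $\gamma_m^{-1}$ there.

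Applying this bound to each pair $(x_i,x_j)$ and summing over all $1\le i,j\le m$ gives
\[
m^2\operatorname{mC}_{2}(x_1,\ldots,x_m)=\sum_{i=1}^m\sum_{j=1}^m\min\{x_i,x_j\}\le\sum_{i=1}^m\sum_{j=1}^m\sqrt{x_i}\sqrt{x_j}=\Bigl(\sum_{i=1}^m\sqrt{x_i}\Bigr)^{2}=m^2\,M_{1/2}(x_1,\ldots,x_m),
\]
where the last equality is just the definition of the $(1/2)$-power mean. Dividing through by $m^2$ yields the claim.

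I expect no genuine obstacle here: the only nontrivial point is the pointwise inequality $\min\{a,b\}\le\sqrt{ab}$, after which the double sum factors exactly as $\bigl(\sum_i\sqrt{x_i}\bigr)^2$. One could instead order the variables $x_1\le\cdots\le x_m$ and use the identities $\sum_{i,j}\min\{x_i,x_j\}=\sum_{k}(2(m-k)+1)x_k$ and $\bigl(\sum_i\sqrt{x_i}\bigr)^2=\sum_k x_k+2\sum_{i<j}\sqrt{x_ix_j}$, reducing to the term-by-term bound $\sum_{i<j}x_i\le\sum_{i<j}\sqrt{x_ix_j}$; but the argument above is shorter and makes the equality case (attained, e.g., when all $x_i$ coincide, or when all but one vanish) transparent.
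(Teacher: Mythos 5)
Your proof is correct and is essentially identical to the paper's: both reduce the claim to the pointwise bound $\min\{x_i,x_j\}\le\sqrt{x_ix_j}$ and then observe that the double sum $\sum_{i,j}\sqrt{x_ix_j}$ factors as $\bigl(\sum_i\sqrt{x_i}\bigr)^2=m^2M_{1/2}(x_1,\ldots,x_m)$. No issues.
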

\begin{proof}
	Observe that
	\begin{align*}\biggl(\frac 1m\sum_{i=1}^m\sqrt{x_i}\biggr)^2&=\frac1{m^2}\sum_{i=1}^m\sum_{j=1}^m\sqrt{x_ix_j}\\
	&\geq\frac1{m^2}\sum_{i=1}^m\sum_{j=1}^m\min\{x_i,x_j\}\qedhere\end{align*}
\end{proof}
\subsection{The Tree Gadget for Graphs}\label{gadget}
In this section we prove Theorem \ref{nonregboostnew}. To achieve this we introduce the \textit{Tree Gadget} which encodes trajectories of length at most $t$ from $u$ in a rooted graph $(G,u)$ by vertices of an arborescence $(\mathcal{T}_t,\mathbf{r})$, i.e.\ a tree with all edges oriented away from the root $\mathbf{r}$. Given $(G,u)$ we represent each trajectory of length $i\leq t$ started from $u$ in $G$ as a node at distance $i$ from the root $\mathbf{r}$ in the tree $\mathcal{T}_t$. The root $\mathbf{r}$ represents the trajectory of length $0$ from $u$. There is an edge from $\mathbf{x}$ to $\mathbf{y}$ in $\mathcal{T}_t$ if $\mathbf{x}$ is obtained from $\mathbf{y}$ by deleting the final vertex.  

Also for $\mathbf{x} \in V(\ct_t)$ let $\Gamma^+(\mathbf{x}) = \{\mathbf{y} \in V(\ct_t) : \mathbf{x}\mathbf{y}\in E(\ct_t) \} $ be the offspring of $\mathbf{x}$ in $ \ct$; as usual we write $d^+(\mathbf{x})$ for the number of offspring. Write $\abs{\mathbf{x}}$ for the length of the trajectory $\mathbf{x}$. To prove Theorem \ref{nonregboostnew} we shall need to discuss SRW trajectories; let $W_u(k):=(X_i)_{i=0}^k$ be the trajectory of a SRW $X_i$ on $G$ up to time $k$, with $X_0=u$.

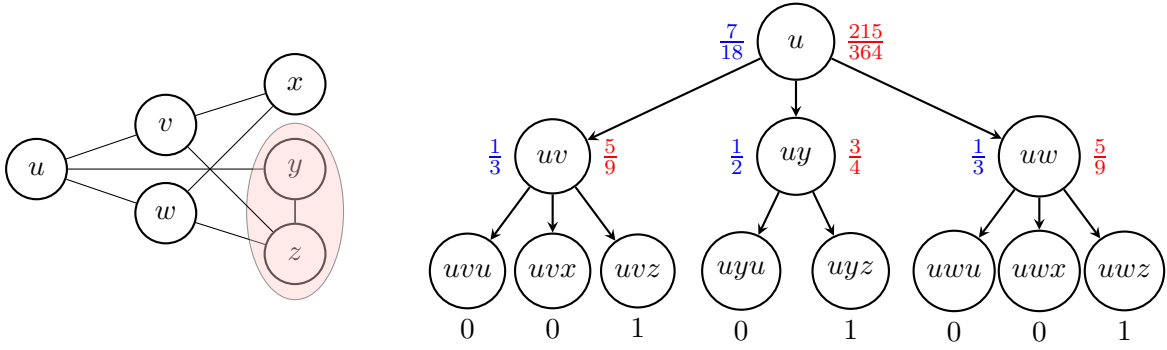
\begin{figure}
	\begin{subfigure}{.35\textwidth}
	\begin{tikzpicture}[xscale=.85,yscale=0.9,knoten/.style={thick,circle,draw=black,minimum size=.8cm,fill=white},wknoten/.style={thick,circle,draw=black,minimum size=.6cm,fill=white},edge/.style={black},dedge/.style={thick,black,-stealth}]
\node[knoten] (u) at (2,2) {$u$};
\node[knoten] (v) at (4,2.65) {$v$};
\node[knoten] (w) at (4,1.35) {$w$};
\node[knoten] (x) at (6,3.25) {$x$};
\node[knoten] (y) at (6,2) {$y$};
\node[knoten] (z) at (6,0.75) {$z$};
\draw[edge] (u) to (v);
\draw[edge] (u) to (y);
\draw[edge] (w) to (x);
\draw[edge] (w) to (z);
\draw[fill=red!20,opacity=0.4] (6,1.375) ellipse (0.75cm and 1.3cm);
\draw[edge] (v) to (x);
\draw[edge] (z) to (y);
\draw[edge] (u) to (w);
\draw[edge] (v) to (z);
\end{tikzpicture}
	\end{subfigure}%
\begin{subfigure}{.65\textwidth}
	\begin{tikzpicture}[xscale=0.8,yscale=0.76,knoten/.style={thick,circle,draw=black,minimum size=.6cm,fill=white},wknoten/.style={thick,circle,draw=black,minimum size=.6cm,fill=white},edge/.style={black},dedge/.style={thick,black,-stealth}]
		\node[knoten] (u1) at (2,6) [label=left:$\textcolor{blue}{\frac{7}{18}}$,label=right:$\textcolor{red}{\frac{215}{364}}$]{$\phantom{x}u\phantom{x}$};
		\node[knoten] (v2) at (-2,4)[label=left:$\textcolor{blue}{\frac{1}{3}}$,label=right:$\textcolor{red}{\frac{5}{9}}$] {$\;uv\;$};
		\node[knoten] (y2) at (2,4)[label=left:$\textcolor{blue}{\frac{1}{2}}$,label=right:$\textcolor{red}{\frac{3}{4}}$] {$\;uy\;$};
		\node[knoten] (w2) at (6,4)[label=left:$\textcolor{blue}{\frac{1}{3}}$,label=right:$\textcolor{red}{\frac{5}{9}}$] {$\;uw\;$};
		
		\node[wknoten] (u3) at (-3.4,2)[label=below:$0$] {$uvu$};
		\node[wknoten] (x3) at (-2,2)[label=below:$0$] {$uvx$};
		\node[wknoten] (z3) at (-0.6,2)[label=below:$1$] {$uvz$};
		
		\node[wknoten] (u31) at (1.1,2)[label=below:$0$] {$uyu$};
		\node[wknoten] (z31) at (2.9,2)[label=below:$1$] {$uyz$};
		
		\node[wknoten] (u32) at (4.6,2) [label=below:$0$]{$uwu$};
		\node[wknoten] (x32) at (6,2) [label=below:$0$]{$uwx$};
		\node[wknoten] (z32) at (7.4,2) [label=below:$1$]{$uwz$};

		\draw[dedge] (u1) to (v2);
		\draw[dedge] (u1) to (y2);
		\draw[dedge] (u1) to (w2);
		
		\draw[dedge] (v2) to (u3);
		\draw[dedge] (v2) to (x3);
		\draw[dedge] (v2) to (z3);
		
		\draw[dedge] (y2) to (u31);
		\draw[dedge] (y2) to (z31);
		
		\draw[dedge] (w2) to (u32);
		\draw[dedge] (w2) to (x32);
		\draw[dedge] (w2) to (z32);
	
		\end{tikzpicture}
			\end{subfigure}
	\caption{Illustration of a (non-lazy) walk on a non-regular graph starting from $u$ with the objective of being in $\{y,z\}$ at step $t=2$. The probabilities of achieving this are given in blue (left) for the SRW and in red (right) for the CRW.}
	
\end{figure}

\begin{proof}[Proof of Theorem \ref{nonregboostnew}]For ease of notation we write $\eta=1/\gamma_{\dmax}$. To each node $\mathbf{x}$ of the tree gadget $\ct_t$ we assign the value $q_{\mathbf{x},S} $ under the CRW strategy of preferring the choice which extends to a trajectory $\mathbf{y}\in \Gamma^+(\mathbf{x})$ giving a higher value of $q_{\mathbf{y},S}$. This is well defined because both the strategy and the values $q_{\mathbf{x},S}$ can be computed in a ``bottom up'' fashion starting at the leaves, where if $\mathbf{x} \in V(\ct_t)$ is a leaf then $q_{\mathbf{x},S} $ is $1$ if $\mathbf x\in S$ and $0$ otherwise. 
	
Suppose $\mathbf{x}$ is not a leaf. The controller is presented with two uniformly random offspring $\mathbf y,\mathbf z\in\Gamma^+(\mathbf x)$, and chooses $\mathbf y$ if $q_{\mathbf{y},S}\geq q_{\mathbf{z},S}$ and $\mathbf z$ otherwise. Thus we have
\begin{equation}\label{qqqq}q_{\mathbf{x},S}=
\frac1{d^+(\mathbf x)^2}\sum_{\mathbf y,\mathbf z\in\Gamma^+(\mathbf x)}\max\{q_{\mathbf{y},S},q_{\mathbf{z},S}\}
=\operatorname{MC}_{2}\left( \left(q_{\mathbf{y},S}\right)_{\mathbf{y}\in \Gamma^+(\mathbf{x})} \right).\end{equation} 

We define the following potential function $\Phi^{(i)}$ on the $i^{th}$ generation of the tree gadget $\ct$: 
\begin{equation}\label{Phi}\Phi^{(i)}= \sum\limits_{\abs{\mathbf{x}}=i}q_{\mathbf{x},S}^{\eta}\cdot \Pr{W_u(i) = \mathbf{x}};\end{equation}
where the sum ranges over all trajectories $\mathbf{x}$ of length $i$.
Notice that if $\mathbf{x}\mathbf{y}\in E(\ct_t)$ then \[\Pr{W_u(\abs{\mathbf{y}}) = \mathbf{y}} = \Pr{W_u(\abs{\mathbf{x}}) = \mathbf{x}}/d^+(\mathbf{x}).\] 
Also since each $\mathbf{y}$ with $abs{\mathbf{y}}=i$ has exactly one parent $\mathbf{x}$ with $\abs{\mathbf{x}}=i-1$ we can write 
\begin{equation}\label{PPhi}\Phi^{(i)} = \sum\limits_{\abs{\mathbf{x}}=i-1}\sum_{\mathbf{y} \in \Gamma^+(\mathbf{x})}q_{\mathbf{y},S}^{\eta}\cdot \frac{\Pr{W_u(i-1) = \mathbf{x}}}{d^+(\mathbf{x})  }.\end{equation} We now show that $\Phi^{(i)} $ is non-increasing in $i$. By combining \eqref{Phi} and \eqref{PPhi} we can see that the difference $\Phi^{(i-1)}-\Phi^{(i)}$ is given by  
\begin{align*}
&\sum\limits_{\abs{\mathbf{x}}=i-1} \left(q_{\mathbf{x},S}^{\eta}-\frac{1}{d^+(\mathbf{x})}\sum_{\mathbf{y} \in \Gamma^+(\mathbf{x})}q_{\mathbf{y},S}^{\eta} \right) \Pr{W_u(i-1) = \mathbf{x}}.
\end{align*}Recalling \eqref{qqqq}, to establish $\Phi^{(i-1)}-\Phi^{(i)}\geq 0$ it is sufficient to show the following inequality holds whenever $\mathbf{x}$ is not a leaf: 
\[ \operatorname{MC}_{2}\left(\left( q_{\mathbf{y},S}\right)_{\mathbf{y} \in \Gamma^+(\mathbf{x})} \right)^{\eta} \geq \frac{1}{d^+(\mathbf{x} ) }\sum_{\mathbf{y}\in \Gamma^+(\mathbf{x})}q_{\mathbf{y},S}^{\eta}.\]
Raising both sides to the power $1/\eta = \gamma_{\dmax}$, since $d^+(\mathbf x)\leq\dmax$ this inequality holds by Lemma \ref{anticonv}, and thus $\Phi^{(i)} $ is non-increasing in $i$. 
	
Observe $\Phi^{(0)} = q_{u,S}^{\eta}$. Also if $\abs{\mathbf{x}}=t$ then $q_{\mathbf{x},S}=1 $ if $\mathbf{x} \in S$ and $0$ otherwise. It follows that  
\[\Phi^{(t)} = \sum_{\abs{\mathbf{x}}=t}q_{\mathbf{x},S}^{\eta}\cdot \Pr{W_u(t) = \mathbf{x}}=\sum_{\abs{\mathbf{x}}=t}\mathbf{1}_{\{\mathbf{x}\in S\}}\cdot \Pr{W_u(t) = \mathbf{x}} = p_{u,S} .\] Thus since $\Phi^{(t)}$ is non-increasing $q_{u,S}^{\eta} = \Phi^{(0)}\geq \Phi^{(t)} = p_{u,S}  $, as required. \end{proof}

Theorem \ref{antiboost} now follows similarly to Theorem \ref{nonregboostnew}. 
\begin{proof}[Proof of Theorem \ref{antiboost}]
	Construct the tree gadget to height $t$. We associate each node $\mathbf x$ with the probability $q_{\mathbf x,S}$ under a strategy which always prefers the smaller value. For a leaf this is simply the indicator function $\mathbf{1}_{\{\mathbf x\in S\}}$, whereas for an internal vertex it is given by $\operatorname{mC}_2\bigl((q_{y,S})_{y\in\Gamma^+(\mathbf x)}\bigr)$.
	We define a potential function $\Psi$ by \[\Psi^{(i)}= \sum_{\abs{\mathbf{x}}=i}\Pr{W_u(i) = \mathbf{x}}\sqrt{q_{\mathbf{x},S}}.\]
	As before, 
	\[\Psi^{(t)}=\sum_{\abs{\mathbf{x}}=t}\Pr{W_u(t) = \mathbf{x}}\cdot \mathbf{1}_{\{\mathbf x\in S\}}=p_{u,S}.\]
	Further, for each internal vertex $\mathbf x$ we have, using Lemma \ref{newanticonv}, 
	\begin{align*}\Pr{W_u(\abs{\mathbf x})=\mathbf x}\sqrt{q_{\mathbf{x},S}}
	&=\Pr{W_u(\abs{\mathbf x})=\mathbf x}\sqrt{\operatorname{mC}_2\bigl((q_{\mathbf y,S})_{\mathbf y\in\Gamma^+(\mathbf x)}\bigr)}\\
	&\leq\Pr{W_u(\abs{\mathbf x})=\mathbf x}\sqrt{M_{1/2}\bigl((q_{\mathbf y,S})_{\mathbf y\in\Gamma^+(\mathbf x)}\bigr)}\\
	&=\sum_{\mathbf y\in\Gamma^+(\mathbf x)}\Pr{W_u(\abs{\mathbf x})=\mathbf y}\sqrt{q_{\mathbf{y},S}}.
	\end{align*}
	Summing over all $\mathbf x$ at level $i$, we obtain $\Psi^{(i)}\leq\Psi^{(i+1)}$ for each $i<t$, and consequently $\sqrt{q_{u,S}}=\Psi^{(0)}\leq\Psi^{(t)}=p_{u,S}$, as required.
\end{proof}

\subsection{Random graphs}
We now consider CRW hitting and cover times in the Erd\H{o}s--R\'enyi random graph $\mathcal{G}(n,p)$. This is the probability distribution over all $n$-vertex simple graphs generated by sampling each possible edge independently with probability $p$, see \cite{RGbook} for more details. 
\newtheorem*{thm:gnp}{Theorem \ref{gnp}}
\begin{thm:gnp} Let $\mathcal{G} \overset{d}{\sim}\mathcal{G}(n,p)$ where $  np\geq c\ln n$ for any fixed $c>1$ and $\log np =\lo{\log n}$. Then w.h.p.\ 
		\begin{enumerate}[(i)]
			\item  $\ttwo\left(\mathcal{G}\right) = \BO{n \cdot \log(np)\cdot \log\log n}$ 
			\item  $\ttwoh\left(\mathcal{G}\right) = n^{1-\Omega(1/ \log(np))}$ .
		\end{enumerate} \end{thm:gnp}

\begin{proof}To begin we show that the graph is almost regular w.h.p.
	\begin{claim}
		For $p$ as above, $\dmin,\dmax = \BT{np}$ w.h.p.
	\end{claim}
\begin{poc}In $\mathcal{G} \overset{d}{\sim}\mathcal{G}(n,p)$ since each edge is independent with probability $p$, each degree $d(u)$ is distributed as a binomial random variable $\operatorname{Bin}(n-1,p)$. The Chernoff bound \cite[Thm.\ 3.2]{Conc} states that for any $\lambda$, $\Pr{\operatorname{Bin}(n,p)\geq np +\lambda } \leq \exp\left(-\frac{\lambda^2}{ 2(np + \lambda/3)}\right)$. Thus by a union bound over all vertices $\dmax \leq 5np $ w.h.p.. For $ \dmin$ note that the expected number of vertices of degree $k$ is given by $x_k= n\binom{n-1}{k}p^k(1-p)^{n-1-k}$. We shall consider $k=\kappa np$ for $\kappa \leq 1/2$, in this case $x_{k}/x_{k-1} = \frac{np}{k(1-p)} \geq 2$ and so the expected number of vertices with degree $k$ is $\BO{x_k}$. Observe that $ x_k \leq \left(\frac{npe}{k}\right)^k\exp(\ln n - np) $ and so if $np\geq 5 \log n$ then choosing $k=np/e$ yields $x_k \leq (e^2)^{np/e}\exp(\log n-np) = e^{-\Omega(\log n)}$. Otherwise since $np \geq c\ln n$ where $c>1$ fixed, setting $k = \kappa np $ we have $x_k \leq \left(\frac{ce}{\kappa}\right)^{\kappa \ln n}\exp(- (c-1)\ln n ) $ thus if $\kappa$ satisfies $\kappa\ln(\frac{ce}{\kappa} ) < (c-1)$ then $x_{k}=\lo{1}$. Choosing $\kappa = (c-1)^2/500$ suffices. Since in either case we showed $x_k=\lo{1}$ for some $k= \Omega(np)$ by Markov's inequality $ \dmin = \Omega(np)$ w.h.p. \end{poc} Cooper \& Frieze \cite{CoFrGnp} show that for $np= c\ln n$, $c>1$ w.h.p.\ the conductance of $G(n,p)$ is at least $1/6$, implying that $\trel =\BO{1}$ \cite[Thm.\ 13.14]{levin2009markov}. For larger values of $np$, Coja-Oghlan \cite[Thm.\ 1.2]{CojaEigen} showed that there exists some $c<\infty $ such that for $np \geq c\log n$ the spectral gap of the normalised Laplacian of $\mathcal{G}(n,p)$ is $1-\BO{1/\sqrt{np}}$ w.h.p. Since the normalised Laplacian $\mathcal{L}$ is similar to the random walk Laplacian $\mathcal{L}'$, and the later is given by $\mathcal{L}' =I-P$ we see that also in this range $\trel =\BO{1}$. We have shown that, in this regime, $\mathcal{G}(n,p)$ is almost regular and has constant relaxation time w.h.p., thus $\thit =\BO{n}$ w.h.p.\ by \cite[Thm.\ 5.2]{Chand}. Theorems \ref{trelbdd} \& \ref{trelhit} now yield the results. 
\end{proof}
Thus the CRW gives a significant improvement in the cover and hitting times whenever degrees of $\mathcal{G}(n,p)$ are subpolynomial in $n$. 
	
\section{Computing Optimal Choice Strategies} \label{complexsec}

In this section we focus on the following problem: given a graph $G$ and an objective, how can we compute a series of choices for the walk which achieves the given objective in optimal expected time? In particular we consider the following computational problems related to our main objectives of max/minimising hitting times, cover times and stationary probabilities $\pi_v$.
\begin{labeling}{$\mathtt{Hit}\left(G,v,S\right)$:}
	\item[$\mathtt{Stat}\left(G,w\right)$:] Find a CRW strategy min/maximising $\sum_{v  \in V}w_v \pi_{v}$ for vertex weights $w_v\geq 0$.
	\item[$\mathtt{Hit}\left(G,v,S\right)$:] Find a CRW strategy minimising $\Htwo{v}{S}$ for a given $S\subseteq V(G)$ and $v\in V(G)$.
	\item[$\mathtt{Cov}\left(G,v\right)$:] Find a CRW strategy minimising  $\Ctwo{v}{G}$ for a given $v \in V(G)$.
\end{labeling}
The analogous problems to $\mathtt{Stat}\left(G,w\right)$ and $\mathtt{Hit}\left(G,v,S\right) $  were studied in \cite{ABKLPbias} for the biased random walk. While $\mathtt{Stat}$ is not one of our primary objectives, we include it here both as a natural problem to consider but also because of its relationship to $\mathtt{Hit}$ in the case where $w$ is the indicator function of a set $S$; we shall abuse notation by writing $\mathtt{Stat}(G,S)$ for this case. 
Clearly for $\mathtt{Stat}$ we must restrict ourselves to unchanging strategies for the stationary probabilities $\pi_v$ to be well-defined; we shall show that $\mathtt{Hit}$ also has an unchanging optimal strategy.

For $\mathtt{Hit}$ and $\mathtt{Cov}$, there are two possible interpretations of what it means to ``find'' a CRW strategy. Perhaps the most natural is to compute a sequence of optimal choices in an on-line fashion, that is at each time step to compute which of the two offered choices to accept. For any particular walk, with suitable memoisation, at most a polynomial number of such computations will be required for either problem: which choice to accept depends only on the current vertex, the two choices, and in the case of $\mathtt{Cov}$ the vacant set, which can change at most $n$ times. We might alternatively want to compute a complete optimal strategy in advance; for $\mathtt{Hit}$ this requires only a polynomial number of single-choice computations, but for $\texttt{Cov}$ the number of possible situations our strategy must cover will be exponential. However, we shall show that $\mathtt{Cov}$ is hard even for individual choices.

\subsection{A Polynomial-Time Algorithm for \texttt{Stat} and \texttt{Hit}}\label{polyhit}
First, we show how the (unknown) optimal values $\Htwo xv$ determine an optimal strategy for $\mathtt{Hit}(G,\cdot,v)$. In the following two lemmas we will need to work with a multigraph $F$; in this context the choice offered at each stage is between two random edges from the current vertex. 
\begin{lemma}\label{beta-opt}Let $F$ be a multigraph and fix a vertex $v$. Let $v=v_0,v_1,\ldots$ be an ordering of the vertices such that for all $i<j$ we have $\Htwo {v_i}v\leq\Htwo {v_j}v$. Let $\beta$ be the deterministic unchanging strategy given by $\axyz[\beta]{v_i}{v_j}{v_k}=1$ whenever $j<k$. Then $\beta$ is optimal (among all strategies) for $\mathtt{Hit}(F,x,v)$ for every $x\neq v$, and also for the problem of minimising $\Exu v{\tau_v^+}$.
\end{lemma}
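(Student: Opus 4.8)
The plan is to argue that the greedy "go to the neighbour with smallest hitting value" rule $\beta$ is optimal by a standard exchange/optimality argument for Markov decision processes, exploiting the fact that the objective (expected hitting time of $v$) satisfies a Bellman-type optimality equation. First I would set up notation: for any strategy $\alpha$ write $h^\alpha(x) = H_x^{\mathsf{two},\alpha}(v)$ for the expected time to reach $v$ from $x$ under $\alpha$, and let $h^*(x) = \Htwo xv$ be the optimal value (the infimum over all strategies, which is attained by an unchanging deterministic strategy by the usual MDP theory — this is essentially the content of the remark earlier in the paper that there is an unchanging deterministic optimal strategy for hitting). The key structural fact is the dynamic programming principle: from vertex $x\neq v$, when offered the pair $\{i,j\}$ the controller should pick whichever of $i,j$ has smaller $h^*$-value, and then
\[
h^*(x) = 1 + \frac{1}{d(x)^2}\sum_{i,j\in\Gamma(x)} \min\{h^*(i), h^*(j)\},
\]
with $h^*(v)=0$. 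This is precisely the min-choice operator $\operatorname{mC}_2$ applied to the vector $(h^*(i))_{i\in\Gamma(x)}$, scaled by $d(x)^2$, plus one.

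Next I would observe that the strategy $\beta$ defined in the statement is exactly the strategy that, at every vertex $x=v_i$, always moves to the offered neighbour with the smaller index in the ordering $v_0,v_1,\ldots$ — equivalently, the one with the smaller optimal hitting value $h^*$ (ties broken by the fixed ordering). So $\beta$ is the greedy strategy with respect to $h^*$. To show $\beta$ is optimal, I would verify that the vector $h^*$ is a fixed point of the Bellman operator and that $\beta$ achieves it: substituting $\beta$ into the one-step equations gives a linear system $h^\beta(x) = 1 + \sum_y q^\beta_{x,y} h^\beta(y)$ (for $x\neq v$, with $h^\beta(v)=0$) whose right-hand side, when we plug in $h^*$ in place of $h^\beta$, reproduces the Bellman equation above because $\beta$ by construction selects the minimising neighbour. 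Since the hitting-time linear system for an absorbing chain (absorbing at $v$, and the chain restricted to $V\setminus\{v\}$ is substochastic because $v$ is reachable) has a unique solution, we get $h^\beta = h^*$, i.e. $\beta$ attains the optimal value from every $x$. The standard verification argument then shows no strategy can do better: for an arbitrary strategy $\alpha$, a one-step coupling / monotonicity argument shows $h^\alpha(x)\geq 1 + \frac{1}{d(x)^2}\sum_{i,j}\min\{h^*(i),h^*(j)\}$ whenever $h^\alpha \geq h^*$ pointwise is maintained inductively; more cleanly, one runs $\beta$ for one step and uses that $\min\{h^*(i),h^*(j)\} \leq \alpha\text{-average of } \{h^*(i),h^*(j)\}$ to conclude $h^* \leq h^\alpha$.

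For the return-time statement, $\Exu v{\tau_v^+} = 1 + \frac{1}{d(v)^2}\sum_{i,j\in\Gamma(v)}(\text{value after stepping to the chosen neighbour})$, and since from any neighbour the optimal continuation is to hit $v$ as fast as possible, the same greedy choice at $v$ (move to the offered neighbour minimising $h^*$) is optimal, and $\beta$ already prescribes this; so $\beta$ simultaneously minimises $\Exu v{\tau_v^+}$. The main obstacle I anticipate is not any single deep step but rather being careful about the infinite/unbounded case: one must ensure the optimal values $h^*(x)$ are finite (which follows since the graph is connected so $v$ is reachable and some strategy — e.g. emulating the SRW via \Cr{double-weight} with unit weights, or just any strategy reaching $v$ a.s. — has finite hitting time), and that the interchange of "inf over strategies" with the one-step decomposition is valid, i.e. that the Bellman equation genuinely characterises $h^*$. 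For finite multigraphs $F$ this is classical MDP theory (negative/first-passage dynamic programming), so I would either cite it or give the short self-contained verification sketched above.
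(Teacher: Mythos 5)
Your proposal is correct and rests on the same core idea as the paper's proof --- that $\beta$ is the greedy strategy with respect to the optimal hitting values $\Htwo \cdot v$, combined with a one-step decomposition of the objective --- but the two arguments close differently. The paper takes as given only the existence of an unchanging deterministic optimal strategy $\alpha$, proves by an explicit exchange argument that the first step of any optimal strategy can be replaced by $\beta$'s choice without loss, and then concludes by a rolling-horizon induction ("follow $\beta$ for $k$ steps, then $\alpha$") together with a limit $k\to\infty$; it never writes down the Bellman equation. You instead assert the Bellman fixed-point equation for $h^*$ from MDP theory and conclude $h^\beta=h^*$ from uniqueness of the solution to the absorbed-chain linear system. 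Your route is cleaner once the dynamic programming principle is granted, but note that proving that principle self-contained is essentially the paper's exchange step, so the work is relocated rather than avoided. Two small points to tighten: (i) invertibility of $I-Q^\beta$ needs that $v$ is reachable under $\beta$ from every state, which holds because under any CRW strategy every neighbour of $x$ receives transition probability at least $1/d(x)^2>0$, so $\beta$'s transition digraph contains all edges of $F$ --- substochasticity alone does not suffice; (ii) for the return-time claim it is worth writing the two-step chain of inequalities explicitly as the paper does, namely $H_v^{\gamma}(v^+)=\sum_x q_x^{\gamma}H_x^{\gamma}(v)\geq\sum_x q_x^{\gamma}H_x^{\beta}(v)\geq H_v^{\beta}(v^+)$, using optimality of $\beta$ for hitting and then the exchange at $v$, since the lemma claims optimality among \emph{all} strategies, not just unchanging ones.
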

\begin{proof}
	Fix an optimal strategy $\alpha$ for $\mathtt{Hit}(F,x,v)$, and for each $y\in\Gamma(x)$ write $q_y$ for the probability that the first step under this strategy is from $x$ to $y$. Recall that $q_y=\sum_{z\in\Gamma(x)}\frac{2\axyz xyz}{  d(x)^2}$. Now given that the first step is at $y$, an optimal strategy for the remaining steps is precisely an optimal strategy for $\mathtt{Hit}(F,y,v)$, and thus
	\[\Htwo xv=1+\sum_{y\in\Gamma(x)}q_y\Htwo yv.\]
	Suppose there exist $y,z\in\Gamma(x)$ with $\Htwo yv<\Htwo zv$ but $\axyz xyz<1$ at the first step. By instead (at time $1$ only) always choosing $y$ in preference to $z$, the expected hitting time is decreased by $\frac{2}{d(x)^2}(1-\axyz xyz)(\Htwo zv-\Htwo yv)$, a contradiction. Thus we have $\axyz xyz=1$ if $\Htwo yv<\Htwo zv$ and $\axyz xyz=0$ if $\Htwo yv>\Htwo zv$. If $\Htwo yv=\Htwo zv$ then the expected hitting time does not depend on $\axyz xyz$, and so any strategy satisfying these conditions at time $1$, and thereafter following an optimal strategy, is itself optimal. 
	
	It follows by induction that following $\beta$ for $k$ turns and thereafter following $\alpha$ is optimal; since this gives arbitrarily good approximations of the expected hitting time under $\beta$, $\beta$ is itself optimal for $\mathtt{Hit}(F,x,v)$, and, since the definition of $\beta$ does not depend on $x$, for $\mathtt{Hit}(F,y,v)$ for any $y\neq v$. 
	
	Next we show that $\beta$ is also an optimal strategy for minimising $\Exu v{\tau_v^+}$. Suppose not, and let $\gamma$ be an optimal strategy. Write $q_x^{\gamma}$ for the probability of moving from $v$ to $x$ at time $1$ under $\gamma$, and $H_v^{\gamma}(v^+)$ for $\Exu v{\tau_v^+}$ under $\gamma$. Now
	\begin{align*}H_v^{\gamma}(v^+)&=1+ \sum_{x\in\Gamma(v)}q_x^{\gamma}H_x^{\gamma}(v)\\
	&\geq 1 + \sum_{x\in\Gamma(v)}q_x^{\gamma}H_x^{\beta}(v),
	\end{align*}
	by optimality of $\beta$ for $\mathtt{Hit}(F,x,v)$. Suppose $\axyz[\gamma]vxy\neq\axyz[\beta]vxy$ for some $x,y\in\Gamma(v)$. 
	Replacing $\axyz[\gamma]vxy$ and $\axyz[\gamma]vyx$ by $\axyz[\beta]vxy$ and $\axyz[\beta]vyx$ respectively changes
	$\sum_{x\in\Gamma(v)}q_x^{\gamma}H_x^{\beta}(v)$ by $\frac2{d(v)^2}(\axyz[\beta]vxy-\axyz[\gamma]vxy)(\Htwo xv-\Htwo yv)$, 
	which is non-positive by choice of $\beta$. Thus after a sequence of such changes we obtain
	\begin{align*}H_v^{\gamma}(v^+)&\geq 1+\sum_{x\in\Gamma(v)}q_x^{\gamma}H_x^{\beta}(v)\\
	&=H_v^{\beta}(v^+).\qedhere\end{align*}
\end{proof}

\begin{remark}
	In particular, recalling that for an unchanging strategy $\pi_v = 1/\Exu{v}{\tau_v^+}$, it follows that $\beta$ is an optimal strategy for $\mathtt{Stat}(F,\{v\})$. However, this is true in a somewhat stronger sense, since optimality for $\mathtt{Stat}$ only requires minimising $\Exu{v}{\tau_v^+}$ among unchanging strategies, whereas Lemma \ref{beta-opt} shows that $\beta$ minimises this quantity among all strategies; we shall need this extra strength.
\end{remark}
Note that there may be other deterministic unchanging optimal strategies for $\mathtt{Hit}(F,x,v)$. For example, if there are multiple vertices with the same optimal hitting time, we may choose between them arbitrarily, and in particular may have a cyclic order of preference which is not consistent with any single ordering. 
The following lemmas will enable us to show that a good enough approximation to an optimal strategy must itself be optimal.
\begin{lemma}\label{approx-strat}Let $F$ be a multigraph with at most $n$ vertices and at most $\binom n2$ edges, and fix a vertex $v$. Let $\alpha$ be any unchanging strategy for $\mathtt{Stat}(F,\{v\})$. Suppose there exist vertices $x,y,z$ with $y,z\in\Gamma(x)$, $\Htwo yv<\Htwo zv$ and $\axyz xyz\leq1/2$. Then $\pi^{\alpha}_v$ differs from the optimal value by at least $n^{-4(n+1)}(\Htwo zv-\Htwo yv)$.
\end{lemma}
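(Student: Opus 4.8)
The plan is to work with expected return times rather than stationary probabilities. For an unchanging strategy $\pi_v = 1/\Exu{v}{\tau_v^+}$, and by \Lr{beta-opt} together with the remark following it the strategy $\beta$ that always prefers the neighbour of smaller optimal hitting time to $v$ minimises $\Exu{v}{\tau_v^+}$ among \emph{all} strategies. Write $R^{\star}$ for this minimum, so the optimal value of $\mathtt{Stat}(F,\{v\})$ equals $\pi_v^{\star}=1/R^{\star}$, and write $R(\gamma)$ for $\Exu{v}{\tau_v^+}$ when the walk follows $\gamma$, so that $\pi_v^{\alpha}=1/R(\alpha)$ and $\pi_v^{\star}-\pi_v^{\alpha}=\pi_v^{\star}\pi_v^{\alpha}\bigl(R(\alpha)-R^{\star}\bigr)$. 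I would reduce the lemma to the one-step improvement bound
\begin{equation}\label{prop-keyineq}
R(\alpha)\ \geq\ R^{\star}+\frac{p}{d(x)^{2}}\bigl(\Htwo zv-\Htwo yv\bigr),
\end{equation}
where $p$ is the probability that the walk, started at $v$ and run under $\alpha$, visits $x$ before returning to $v$ (with $p:=1$ if $x=v$), together with the crude lower bounds $\pi_v^{\star}\geq n^{-2}$, $\pi_v^{\alpha}\geq n^{-2n}$, $p\geq n^{-2(n-1)}$ and $d(x)^{-2}\geq n^{-4}$, whose product is exactly $n^{-4(n+1)}$.

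To prove \eqref{prop-keyineq}, assume first $x\neq v$ and compare two history-dependent strategies. Strategy $\widehat\alpha_{1}$ follows $\alpha$ until the walk either returns to $v$ or first visits $x$, and in the latter case switches to $\beta$ for the remainder of the excursion; $\widehat\alpha_{2}$ does the same, except that on first visiting $x$ it takes one further step using $\alpha$'s rule at $x$ before switching to $\beta$. Since $\beta$ minimises the hitting time of $v$ from every vertex, replacing $\alpha$ by $\beta$ after the step out of $x$ cannot increase the expected return time, so $R(\widehat\alpha_{2})\leq R(\alpha)$; conditioning on whether $x$ is reached before $\tau_v^{+}$,
\[
R(\widehat\alpha_{2})-R(\widehat\alpha_{1})=p\Bigl(1+\sum_{w\in\Gamma(x)}q^{\alpha}_{x,w}\,\Htwo wv-\Htwo xv\Bigr).
\]
Now $\Htwo xv=1+\min_{q}\sum_{w}q_{w}\Htwo wv$, the minimum being over transition distributions at $x$ realisable by a choice rule; replacing only $\axyz xyz$ (which is at most $\tfrac12$ by hypothesis) by $1$ decreases $\sum_{w}q^{\alpha}_{x,w}\Htwo wv$ by at least $\tfrac1{d(x)^{2}}(\Htwo zv-\Htwo yv)$ while keeping it no smaller than that minimum, so the bracket is at least $\tfrac1{d(x)^{2}}(\Htwo zv-\Htwo yv)$. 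Since also $R(\widehat\alpha_{1})\geq R^{\star}$ ($\widehat\alpha_{1}$ being a legitimate strategy), \eqref{prop-keyineq} follows. The case $x=v$ is the same Bellman argument applied directly to the first step of the excursion from $v$, with $p=1$.

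For the estimates, put $m:=|E(F)|\leq\binom n2$, so $2m\leq n(n-1)\leq n^{2}$. The simple random walk is the unchanging strategy with every $\axyz xyz=\tfrac12$, and its return time to $v$ is $2m/d(v)\leq n^{2}$, hence $R^{\star}\leq n^{2}$ and $\pi_v^{\star}\geq n^{-2}$; also $d(x)\leq 2m\leq n^{2}$. For any vertices $w\neq w'$, the walk under $\alpha$ started at $w$ follows a fixed shortest $w$--$w'$ path $w=u_{0},\dots,u_{\ell}=w'$ (with $\ell\leq n-1$ and the $u_{i}$ distinct) with probability at least $\prod_{i<\ell}d(u_{i})^{-2}$, because at each $u_{i}$ the walk is forced onward whenever both offered edges lead to $u_{i+1}$, an event of probability at least $d(u_{i})^{-2}$; by AM--GM, $\prod_{i<\ell}d(u_{i})\leq(2m/\ell)^{\ell}\leq n^{n-1}$ (using $\sum_{i<\ell}d(u_{i})\leq 2m\leq n(n-1)$, $\ell\leq n-1$, and that the hypothesis is vacuous for $n<3$), so this probability is at least $n^{-2(n-1)}$. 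With $(w,w')=(v,x)$ this gives $p\geq n^{-2(n-1)}$ (a shortest $v$--$x$ path does not revisit $v$), and running the estimate in consecutive blocks of $n-1$ steps shows that the expected time under $\alpha$ to reach $v$ from any vertex is at most $(n-1)n^{2(n-1)}$, whence $R(\alpha)\leq n^{2n}$ and $\pi_v^{\alpha}\geq n^{-2n}$. Feeding the four bounds into \eqref{prop-keyineq} and the identity above,
\[
\pi_v^{\star}-\pi_v^{\alpha}\ \geq\ n^{-2}\cdot n^{-2n}\cdot n^{-2(n-1)}\cdot n^{-4}\cdot\bigl(\Htwo zv-\Htwo yv\bigr)\ =\ n^{-4(n+1)}\bigl(\Htwo zv-\Htwo yv\bigr).
\]

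The main obstacle is that $\alpha$ is an arbitrary strategy, so the hitting times of $v$ from $y$ and from $z$ \emph{under $\alpha$} need not respect the order $\Htwo yv<\Htwo zv$ of the \emph{optimal} hitting times; a naive perturbation that merely corrects $\alpha$'s choice at $x$ therefore cannot be shown to reduce $\Exu{v}{\tau_v^+}$. Forcing a switch to the optimal strategy $\beta$ immediately after leaving $x$---the purpose of $\widehat\alpha_{1}$ and $\widehat\alpha_{2}$---is exactly what makes the relevant continuation hitting times equal the optimal $\Htwo wv$, so that the one-step comparison becomes valid. The remaining delicate point is the estimate $\prod_{i}d(u_{i})\leq n^{n-1}$ along a shortest path, which relies on the degrees summing to $2m\leq n(n-1)$; anything weaker there would fall short of the stated $n^{-4(n+1)}$.
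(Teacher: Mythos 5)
Your proof is correct and follows essentially the same route as the paper's: compare $\alpha$ with a hybrid strategy that switches to the optimal strategy $\beta$ of Lemma~\ref{beta-opt} once the walk reaches the critical vertex $x$, lower-bound the resulting gap in $\Exu{v}{\tau_v^+}$ by the probability of forcing the walk along a shortest $v$--$x$ path times the one-step loss at $x$, and then convert the return-time gap into a stationary-probability gap. The only cosmetic difference is that the paper conditions on the walk actually being offered the pair $\{y,z\}$ at $x$ and uses a single hybrid, whereas you condition merely on reaching $x$ and extract the loss via a Bellman-gap comparison of two hybrids; both arguments deliver the stated $n^{-4(n+1)}$ factor.
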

\begin{proof}First we bound $H_v^\alpha(v^+)-H_v^\beta(v^+)$, where $\beta$ is as described in Lemma \ref{beta-opt}. Consider the strategy of following $\alpha$ until the first time the walk either reaches $v$ or is at $x$ and offered a choice between $y$ and $z$, and in the latter case following $\beta$ until $v$ is reached. The difference between this strategy and following $\alpha$ is $p(\axyz xyz H_y^{\alpha}(v)+\axyz xzy H_z^{\alpha}(v)-H_y^{\beta}(v))$, where $p$ is the probability of the latter event occurring before the walk returns to $v$. Note that
	\begin{align*}\axyz xyz H_y^{\alpha}(v)+\axyz xzy H_z^{\alpha}(v)-H_y^{\beta}(v)&\geq(\axyz xyz-1)H_y^{\beta}(v)+\axyz xzy H_z^{\beta}(v)\\
	&=(1-\axyz xyz)(\Htwo zv-\Htwo yv)\\
	&\geq(\Htwo zv-\Htwo yv)/2
	\end{align*}
	by Lemma \ref{beta-opt} and the assumptions. Further, 
	\[p\geq2\bfrac1{\dmax(F)^2}^{d(v,x)+1}\geq\binom n2^{-2n},\]
	since with at least this probability the walk is forced along a specific shortest path to $x$, then offered a choice of $y$ or $z$.
	
	Thus the difference in $\Exu{v}{\tau_v^+}$ between $\alpha$ and this hybrid strategy is at least \[\zeta:=\frac 12\binom n2^{-2n}(\Htwo zv-\Htwo yv),\] and since $\beta$ minimises this quantity among all strategies by Lemma \ref{beta-opt}, the same bound applies to the difference between $\alpha$ and $\beta$, giving \[\pi_\alpha(v)^{-1}\geq \pi_\beta(v)^{-1}+\zeta,\]
	and consequently
	\begin{equation}\label{pi-diff}
	\pi_\alpha(v)\leq \pi_\beta(v)-\zeta\frac{\pi_\beta(v)^2}{1+\pi_\beta(v)\zeta}.\end{equation}
	We have $1\geq \pi_\beta(v)\geq\binom n2^{-1}$ by comparison with a simple random walk. Also we may crudely bound $\ttwoh$ by noting that a SRW has probability at least $\binom n2^{1-n}$ of reaching any given vertex in at most $n-1$ steps, giving $\zeta<1$.
	Combining these bounds with \eqref{pi-diff} gives the required result.
\end{proof}

\begin{lemma}\label{approlem}For any simple graph $G$ of order $n$ and every pair of vertices $x,y$ with $\Htwo{x}{S}<\Htwo{y}{S}$ we have $\Htwo{y}{S}-\Htwo{x}{S}>n^{-2n^2}$.
\end{lemma}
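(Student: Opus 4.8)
The plan is to show that the hitting times $\Htwo{x}{S}$ are rational numbers with denominators that cannot be too large, so that two distinct values must differ by at least the reciprocal of the product of their denominators. The key observation is that for a fixed target set $S$, the optimal hitting times are determined by a linear system. Concretely, by \Lr{beta-opt} there is an unchanging deterministic optimal strategy $\beta$ for $\mathtt{Hit}(G,\cdot,S)$ (the lemma is stated for a single vertex $v$, but contracting $S$ to a single vertex reduces the set case to it, or one repeats the argument verbatim with $S$ in place of $v$). Under a fixed unchanging strategy the CRW is an ordinary Markov chain, and the vector $\bigl(\Htwo{x}{S}\bigr)_{x\notin S}$ is the unique solution of the linear system $h_x = 1 + \sum_{y} q_{x,y} h_y$, where the $q_{x,y}$ are the transition probabilities of that strategy.

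First I would pin down the denominators of the $q_{x,y}$. By \eqref{potctp}, for an unchanging strategy with $\axyz xij\in\{0,1/2,1\}$ (as in the optimal $\beta$), each $q_{x,y}$ equals $2\sum_{j}\axyz xyj / d(x)^2$, so it is a rational with denominator dividing $2d(x)^2 \le 2n^2$; clearing denominators, the system $h_x = 1 + \sum_y q_{x,y}h_y$ becomes $M h = \mathbf{1}$ where $M = D^{-1}M'$ with $D$ diagonal having entries bounded by $2n^2$ and $M'$ an integer matrix with entries of absolute value at most $2n^2$. Then I would apply Cramer's rule: each $\Htwo{x}{S}$ is a ratio of two determinants of matrices of size at most $n$ whose entries are integers bounded in absolute value by $2n^2$ (after scaling the equation for row $x$ by $d(x)^2$ to clear that row's denominator). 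By Hadamard's inequality such a determinant has absolute value at most $(n \cdot (2n^2)^2)^{n/2} = (4n^5)^{n/2}$, which is crudely at most $n^{n^2}$ for $n$ large (and the small cases can be absorbed into constants or checked directly). Hence each $\Htwo{x}{S}$ is a rational $a/b$ with $|a|,|b| \le n^{n^2}$.

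Given this, the result is immediate: if $\Htwo{x}{S} = a_1/b_1 \ne a_2/b_2 = \Htwo{y}{S}$ with all of $|a_i|,|b_i| \le n^{n^2}$, then
\[
\left| \frac{a_1}{b_1} - \frac{a_2}{b_2} \right| = \frac{|a_1 b_2 - a_2 b_1|}{|b_1 b_2|} \ge \frac{1}{|b_1 b_2|} \ge n^{-2n^2},
\]
since the numerator is a nonzero integer. This gives the claimed bound $\Htwo{y}{S} - \Htwo{x}{S} > n^{-2n^2}$ (with the strict inequality and any slack in constants handled by being slightly generous in the Hadamard estimate, e.g.\ using a denominator bound strictly below $n^{n^2}$, which the crude bound $(4n^5)^{n/2}$ comfortably satisfies).

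The main obstacle I anticipate is a bookkeeping one rather than a conceptual one: making sure the linear system used is genuinely of the right form, i.e.\ that the optimal strategy $\beta$ can be taken unchanging and deterministic with $\alpha$-weights in $\{0,\tfrac12,1\}$ so that the $q_{x,y}$ have denominators dividing $2n^2$, and that the coefficient matrix of the hitting-time system is invertible (which it is, by irreducibility of the chain restricted to hitting $S$, since $\Htwo{x}{S}$ is finite). One should also be careful that $\Htwo xS$ here means the optimal value, and that under the \emph{single} optimal strategy $\beta$ all the hitting times $\{\Htwo xS : x \notin S\}$ are simultaneously realised (again \Lr{beta-opt}), so that the two values being compared come from one and the same linear system — otherwise the Cramer's-rule denominators need not be comparable. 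Once these points are nailed down, the determinant estimate and the rational-separation argument are routine.
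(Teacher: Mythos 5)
Your proposal is correct and follows essentially the same route as the paper's proof: express the optimal hitting times as the solution of the linear system determined by the deterministic unchanging optimal strategy (whose transition probabilities have denominators dividing $d(x)^2$), solve by Cramer's rule/cofactors, bound the resulting determinants via Hadamard's inequality, and conclude that two distinct rationals with denominators below $n^{n^2}$ differ by more than $n^{-2n^2}$. The only difference is bookkeeping — you clear denominators row by row to get an integer matrix, whereas the paper uses the global common denominator $D=\operatorname{LCM}(d(x)^2)_{x\in V}<(n!)^2$ — and your variant in fact yields a slightly sharper separation bound.
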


\begin{proof}
	Note that the hitting times $\left( h_x \right)_{x \in V}$ of $S$ from $x$ for any given unchanging strategy are uniquely determined by the equations
	\[h_x=\begin{cases}1+\sum_y \mathbf{P}_{xy}\cdot h_y&\quad\text{if }x\not\in S\\
	0&\quad\text{if }x\in S,\end{cases}\]
	where $\mathbf P$ is the transition matrix for the strategy. This set of equations can be written as $ \mathbf A \mathbf h=\mathbf b$, where $\mathbf A :=\left(\mathbf{I} -\mathbf{Q}\right)$, $\mathbf{Q}_{i,j} =\mathbf{P}_{i,j} $ if $i\notin S$ and $0$ otherwise, and $\mathbf b$ is a $0$-$1$ vector. Notice that $\mathbf{A}$ is diagonally dominant, and from any row where equality occurs there is a path of non-zero entries to a strictly dominant row. It is straightforward to check that such a matrix is invertible: see for example \cite[Lemma 3.2]{AF16}. 
	For any non-random strategy, and in particular for the optimal strategy described above, every transition probability from $x$ is a multiple of $d(x)^{-2}$. Thus all the elements of $\mathbf A$ can be put over a common denominator $D$, where $D:=\operatorname{LCM}(\mdeg{x}^2 )_{x \in V}<(n!)^2<n^{2n}/2$.
	
	We have $\mathbf h = \mathbf A^{-1}\mathbf b$ = $\abs{\mathbf A}^{-1} \mathbf{C}^{\mathsf T} \mathbf b$, where $\mathbf C$ is the matrix of cofactors. Each entry in $\mathbf C$ can be put over a common denominator which is at most $D^n$, and so the same applies to each entry of $\mathbf C^{\mathsf T}\mathbf b$. Also, $\abs{\mathbf A}<2^n$ by Hadamard's inequality \cite[Thm. 7.8.1]{Horn}. It follows that if two hitting times differ, they differ by at least $(2D)^{-n}$.
\end{proof}

For any graph $G$ and weighting $w:V\rightarrow [0,\infty)$ on the vertices of $G$ we can phrase $\mathtt{Stat}\left(G,w\right) $ as an optimisation problem as follows, where we shall encode our actions using the probabilities $\axyz xyz = \Pr{X_{t+1} =y \mid X_t = x, c=\{y,z \} }  $ from Section \ref{formaldef}. 

\begin{equation}\label{statop}
\begin{aligned}
& \text{maximize:}\! & \sum\limits_{v  \in V}w_v \pi(v)&  & \\
& \text{subject to:}\!&  \pi(x)&=   \sum_{y\in \Gamma(x)}\pi(y) \cdot \frac{2\sum_{z \in \Gamma(y)}\axyz yxz}{\mdeg{x}^2}   
,\quad& \forall x\in V \\
&  & \sum_{x\in V}\pi(x)  &=1,  & \\
&    & \axyz xzy  &\in [0,1],  &\forall xz,xy\in E \\
&   & \axyz xzy &= 1-\axyz xyz,  &\forall xz,xy\in E \\
\end{aligned}
\end{equation}For minimising the stationary probabilities we maximise $-1$ times the objective function. 

To prove Theorem \ref{progstat} the quadratic terms in \eqref{statop} can be eliminated using the same substitution as \cite[Theorem 6]{ABKLPbias}; we can then solve \eqref{statop} as a Linear Program.

\begin{theorem} \label{progstat} For any multigraph $G$ and weight function $w:V\rightarrow [0,\infty)$ a policy solving the problem  $\mathtt{Stat}\left(G,w\right) $ to within an additive $\eps$ factor can be computed in time $\poly\!\left(\abs E, \log(1/\eps)\right)$.  
\end{theorem}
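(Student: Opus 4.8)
The plan is to recognise \eqref{statop} as a \emph{bilinear} program—the balance constraints contain the products $\pi(y)\cdot\axyz yxz$—and to linearise it by the change of variables to occupation measures that underlies the linear-programming treatment of average-reward Markov decision processes (this is also how the analogous problem is handled for the biased walk in \cite{ABKLPbias}). Concretely, for each vertex $x$ and each ordered pair $e,f$ of edges incident with $x$—recall that for a multigraph the offered choice is between two random edges at the current vertex—I would introduce a variable $\sigma_x(e,f)\ge 0$ standing for $\pi(x)\cdot\axyz xef$. The constraint $\axyz xef=1-\axyz xfe$ together with the convention $\axyz xee=1/2$ then reads $\sigma_x(e,f)+\sigma_x(f,e)=\pi(x)$ for all incident $e,f$ (the diagonal case $e=f$ giving $2\sigma_x(e,e)=\pi(x)$), which also forces $0\le\sigma_x(e,f)\le\pi(x)$; the balance (stationarity) constraints become linear in $(\pi,\sigma)$ once each product $\pi(y)\axyz yxz$ is replaced by $\sigma_y(x,z)$; and the objective $\sum_v w_v\pi(v)$ and the normalisation $\sum_x\pi(x)=1$ are already linear. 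The result is a linear program $\mathrm{LP}$ with $\BO{|E|^2}$ variables and constraints whose coefficients are rationals of bit-length $\poly(|E|)$, the only denominators being the $\mdeg{x}^2$.

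The substance of the argument is that $\mathrm{LP}$ is an \emph{exact} reformulation: its feasible points are in value-preserving correspondence with unchanging strategies. One direction is immediate—an unchanging strategy $\alpha$ has a stationary distribution $\pi^\alpha$, and $\sigma_x(e,f):=\pi^\alpha(x)\cdot\axyz[\alpha]xef$ is feasible with objective $\sum_v w_v\pi^\alpha(v)$. For the converse I would use the structural observation that under \emph{any} unchanging strategy the induced walk moves from $x$ to each neighbour $w$ with probability at least $\mdeg{x}^{-2}$ (the probability that both offered edges coincide with a fixed $x$--$w$ edge, in which case $w$ must be chosen), so since $G$ is connected this Markov chain is irreducible and has a unique, everywhere-positive stationary distribution. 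Given a feasible $(\pi,\sigma)$, if some coordinate $\pi(x)$ vanished then the balance constraint at $x$ would force $\sigma_y(x,\cdot)=0$ for every neighbour $y$, and in particular—via the diagonal constraint on the edge $yx$—also $\pi(y)=0$; by connectedness $\pi\equiv 0$, contradicting normalisation. Hence $\pi>0$ everywhere, the strategy $\axyz xef:=\sigma_x(e,f)/\pi(x)$ is well defined and valid, its induced (irreducible) chain has $\pi$ as its unique stationary distribution, and so its $\mathtt{Stat}$-value is exactly the objective value of $(\pi,\sigma)$.

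Given the reformulation, the conclusion is routine: run a polynomial-time linear-programming algorithm (the ellipsoid method, or an interior-point method) to obtain a feasible point of $\mathrm{LP}$ whose objective is within $\eps$ of the optimum in time $\poly(|E|,\log(1/\eps))$—in fact $\mathrm{LP}$ can be solved exactly in time $\poly(|E|)$ since its data is rational of polynomial size—and read off the corresponding strategy in a further $\poly(|E|)$ steps. The minimisation version follows by negating the objective (equivalently the weights $w$).

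I expect the main obstacle to be precisely the tightness of the relaxation: one must verify that \emph{every} feasible point of $\mathrm{LP}$, not merely its vertices, is realised by a genuine strategy with the matching value, and this rests on the positivity argument above (which in turn relies on keeping the diagonal variables $\sigma_x(e,e)$ and their constraints). A secondary, purely bookkeeping, point is that for multigraphs the $\sigma$-variables and the sums $\sum_z\axyz yxz$ must be indexed by incident \emph{edges} rather than neighbours, so that the ``both offered choices equal'' event and the transition probabilities are counted with the correct multiplicities.
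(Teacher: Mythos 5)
Your proposal follows essentially the same route as the paper: the substitution $\sigma_x(e,f)=\pi(x)\cdot\axyz xef$ is exactly the paper's substitution $r_{x,y,z}=\pi(x)\cdot\axyz xyz$ (borrowed from \cite{ABKLPbias}), after which the program \eqref{statop} becomes a linear program solvable by the ellipsoid or an interior-point method in time $\poly(|E|,\log(1/\eps))$. Your additional verification that the relaxation is exact—in particular the irreducibility/positivity argument showing every feasible point recovers a valid strategy—is a detail the paper leaves implicit, so your write-up is a correct and somewhat more careful rendering of the same proof.
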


\begin{proof}
	We prove the simple graph case; this proof may be easily extended for multigraphs with suitably adapted notation. 
	The optimisation problem \eqref{statop} above can be rephrased as a Linear Program by making the substitution $r_{x,y,z}=\pi(x)\cdot \axyz xyz$. 
	Either the Ellipsoid method or Karmarkar's algorithm will approximate the solution to within an additive $\eps>0$ factor in time which is polynomial in the dimension of the problem and $\log(1/\eps)$, see for example \cite{GroLovSch,Karmarkar}.
\end{proof}

We now show how one can now use this linear program to determine the hitting times.
\newtheorem*{thm:hitexact}{Theorem \ref{hitexact}}
\begin{thm:hitexact}
 For any graph $G$ and any $S \subset V$, a solution to $\mathtt{Hit}\left(G,x,S\right) $ for every $x\in V\setminus S$  can be computed in time $\poly(n)$. 
\end{thm:hitexact}
\begin{proof}
	Contract $S$ to a single vertex $v$ to obtain a multigraph $F$; where a vertex $x$ has more than one edge to $S$ in $G$, retain multiple edges between $x$ and $v$ in $F$. Note that $F$ has at most $n$ vertices and at most $\binom n2$ edges. 
	Provided that the CRW on $G$ has not yet reached $S$, there is a natural correspondence between strategies on $G$ and $F$ with the same transition probabilities, and it follows that $\Htwo xS$ for $G$ and $\Htwo xv$ for $F$ are equal for any $x\in V(G)\setminus S$. We compute an optimal strategy to $\mathtt{Stat}(F,\{v\})$ to within an additive error of $\eps:=n^{-10n^2}$; note that $\log(1/\eps)=o(n^3)$ and so this may be done in time $\poly(n)$ by Theorem \ref{progstat}. 
	Applying Lemma \ref{approx-strat} to $F$ and Lemma \ref{approlem} to $G$, using the equality of corresponding hitting times, implies that this strategy has $\axyz xyz>1/2$ whenever $\Htwo yv<\Htwo zv$, and so rounding each of the probabilities $\axyz xyz$ to the nearest integer gives an optimal strategy (on $F$) for every $x$, which may easily be converted to an optimal strategy for $G$.
\end{proof}

\subsection{A Hardness Result for \texttt{Cov}} 
We show that in general even the on-line version of $\mathtt{Cov}\left(G,v\right)$ is $\NP$-hard. To that end we introduce the following problem, which represents a single decision in the on-line version. The input is a graph $G$, a current vertex $u$, two vertices $v$ and $w$ which are adjacent to $u$, and a visited set $X$, which must be connected and contain $u$.
\begin{labeling}{$\mathtt{NextStep}\left(G,u,v,w,X\right)$:}
	\item[$\mathtt{NextStep}\left(G,u,v,w,X\right)$:] Choose whether to move from $u$ to either $v$ or $w$ so as to minimise the expected time for the CRW to visit every vertex not in $X$, assuming an optimal strategy is followed thereafter.
\end{labeling}
Any such problem may arise during a random walk with choice on $G$ starting from any vertex in $X$, no matter what strategy was followed up to that point, since with positive probability no real choice was offered in the walk up to that point. 

\newtheorem*{thm:CovIsHard}{Theorem \ref{CovIsHard}}
\begin{thm:CovIsHard}$\mathtt{NextStep}$ is $\NP$-hard, even if $G$ is constrained to have maximum degree $3$.\end{thm:CovIsHard}
\begin{proof}
	We give a (Cook) reduction from the $\NP$-hard problem of either finding a Hamilton path in a given graph $H$ or determining that none exists. This is known to be $\NP$-hard even if $H$ is restricted to have maximum degree $3$ \cite{GJS-NPC}.
	
	We shall find it more convenient to work with the following problem, which takes as input a graph $G$, a current vertex $u$ and a connected visited set $X$ containing $u$.
\begin{labeling}{$\mathtt{BestStep}\left(G,u,X\right)$:}
		\item[$\mathtt{BestStep}\left(G,u,X\right)$:] Choose a neighbour of $u$ to move to so as to minimise the expected time for the CRW to visit every vertex not in $X$, assuming an optimal strategy is followed thereafter.
	\end{labeling}
	We may solve $\mathtt{BestStep}(G,u,X)$ by computing $\mathtt{NextStep}(G,u,v,w,X)$ for every pair $v,w$ of neighbours of $u$; since all optimal neighbours must be preferred to all others, this will identify a set of one or more optimal choices for $\mathtt{BestStep}(G,u,X)$.
	Consequently, it is sufficient to reduce the Hamilton path search problem to $\mathtt{BestStep}$.
	
	Given an $n$-vertex graph $H$, construct the graph $G$ as follows. First replace each edge of 
	$H$ by a path of length $2cn^2$ through new vertices. Next add a new pendant path of length $n^3$ 
	starting at the midpoint of each path corresponding to an edge of $H$. Finally, add edges to form a cycle consisting of the end 
	vertices of these pendant paths (in any order). Note that if $H$ has maximum degree $3$, so does~$G$.
	
	Fix a starting vertex $u$ and a non-empty unvisited set $Y\subseteq V(H)\setminus\{u\}$, and set $X=V(G)\setminus Y$. (The purpose of the second and third stages of the construction is to make $X$ 
	connected without affecting the optimal strategy.) Suppose that $H$ contains at least one path of 
	length $\abs{Y}$ starting at $u$ which visits every vertex of $Y$; in particular if $Y=V(H)\setminus\{u\}$ this is a Hamilton path of $H$. We claim that any optimal next step is to move towards the next vertex 
	on some such path. Assuming the truth of this claim, an algorithm to find a Hamilton path starting at 
	$x$, if one exists, is to set $u=x$ and $Y=V(H)\setminus\{x\}$, then find the vertex $y$ such that 
	moving towards $y$ is optimal, set $u=y$ and remove $y$ from $Y$, then continue. If this fails to find 
	a Hamilton path, repeat for other possible choices of $x$.
	
	To prove the claim, first we argue by induction that there is a strategy to visit every vertex in $\abs{Y}$ in expected time $(4cn^2+\BO{n})\abs{Y}$, where the implied constant does not depend on $c$. 
	This is clearly true for $\abs{Y}=0$. Let $y$ be the next vertex on a suitable path in $H$, and let 
	$z$ be the middle vertex of the path corresponding to the edge $uy$. Attempting to reach $z$ by a 
	straightforward strategy, the distance to $z$ evolves as a random walk with probability $3/4$ of 
	decreasing unless the current location is a branch vertex. We thus reach $z$ in expected time $2cn^2$ 
	plus an additional constant time for each visit to $u$, of which we expect $\BO{d(u)}=\BO{n}$, giving a 
	total expected time of $2cn^2+\BO{n}$ (if the walker is forced to a different branch vertex first, the 
	expected time to return from this point is polynomial in $n$, but this event occurs with exponentially 
	small probability). Similarly, the time taken to reach $y$ from $z$ is $2cn^2+\BO{1}$. Once $y$ is 
	reached, there is (by choice of $y$) a path of length $\abs{Y}-1$ in $H$ starting from $y$ and
	visiting all of $Y\setminus\{y\}$. Thus, by induction, the required bound holds.
	Secondly, suppose that an optimal first step in a strategy from $u$ moves towards a vertex $y'$ of $H$ 
	which is not the first step in a suitable path. Since the expected remaining time decreases whenever 
	an optimal step is taken, two successive optimal steps cannot be in opposite directions unless the
	walker visits a vertex of $Y$ in between. Thus the optimal strategy is to continue in the direction of 
	$y'$ if possible, and such a strategy reaches $y'$ before returning to $u$ with at least constant 
	probability $p$, and this takes at least $2cn^2$ steps. Note that the expected time taken to reach 
	another vertex of $H$ from a vertex in $H$, even if the walker is purely trying to minimise this 
	quantity, is at least $4cn^2$, and from either $u$ or $y'$ at least $\abs{Y}$ such transitions are 
	necessary to cover $Y$. Thus such a strategy, conditioned on the first step being in the direction of 
	$y'$, has expected time at least $4cn^2+2pcn^2$, which, for suitable choice of $c$, proves the claim.
\end{proof}

\subsection{Computing \texttt{Cov} via Markov Decision Processes}
To compute a solution for $\mathtt{Cov}\left(G,v\right) $ we can encode the cover time problem as a hitting time problem on a (significantly) larger graph.
\begin{lemma}\label{covashit}
	For any graph $G=(V,E)$ let the (directed) auxiliary graph $\tilde{G}=(\tilde{V},\tilde{E})$ be given by $\tilde{V}=\{(v,S):  S\subseteq V, v\in S\}$ and $\tilde{E}= \left\{((i,S),(j,S\cup j))\mid ij \in E\right\}$. Then solutions to $\mathtt{Cov}\left(G,v\right)$ correspond to solutions to $\mathtt{Hit}\bigl(\tilde{G},\tilde{v},W\bigr)$ and vice versa, where $W=\{(u,V)\mid u\in V\}$.
\end{lemma}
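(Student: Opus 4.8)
The plan is to augment the state of the walk with the set of already-visited vertices, so that the memoryless state of $\tilde G$ records exactly what an optimal cover strategy on $G$ needs to know. Given any CRW $(X_t)_{t\ge 0}$ on $G$ started at $v$, set $S_t:=\{X_0,\dots,X_t\}$ and $\tilde X_t:=(X_t,S_t)$. Since $X_tX_{t+1}\in E$ and $S_{t+1}=S_t\cup\{X_{t+1}\}$, the sequence $(\tilde X_t)$ is a walk in $\tilde G$ from $\tilde v:=(v,\{v\})$; conversely, the definition of $\tilde E$ forces every walk in $\tilde G$ from $\tilde v$ to be of this shape, with the second coordinate at time $t$ equal to the set of first coordinates up to time $t$. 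So walks from $v$ in $G$ and walks from $\tilde v$ in $\tilde G$ are in natural bijection, and under it the event ``every vertex of $G$ has been visited by time $t$'' is precisely the event $\tilde X_t\in W$.

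The key structural observation I would record first is that for every $(i,S)\in\tilde V$ the map $j\mapsto(j,S\cup\{j\})$ is a bijection from $\Gamma(i)$ onto the out-neighbourhood of $(i,S)$ in $\tilde G$. Because this bijection is measure preserving, drawing two independent uniform out-neighbours of $(i,S)$ in $\tilde G$ (with replacement) has the same law as drawing two independent uniform neighbours $c_1,c_2$ of $i$ in $G$ and mapping them through it — in particular the coincidence $c_1=c_2$ is treated consistently on both sides. Hence the two-choice mechanism of the CRW on $\tilde G$ faithfully simulates the two-choice mechanism on $G$ together with the deterministic update of the covered set.

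It then remains to match strategies. Using the neighbourhood bijection, a history $\mathcal H_t$ of the CRW on $\tilde G$ from $\tilde v$ corresponds bijectively to a history of the CRW on $G$ from $v$ — the covered sets appearing in the former are a deterministic function of the latter and carry no further information — so a CRW strategy on $\tilde G$ corresponds bijectively to a CRW strategy on $G$, and running two corresponding strategies with the same randomness yields $\tilde X_t=(X_t,S_t)$ for all $t$. Therefore, for corresponding strategies, $\Htwo{\tilde v}{W}=\Ctwo vG$; consequently the two optimal values agree, and a strategy is optimal for $\mathtt{Hit}(\tilde G,\tilde v,W)$ precisely when the corresponding one is optimal for $\mathtt{Cov}(G,v)$, which is the assertion. (One may harmlessly first pass to the subgraph of $\tilde G$ reachable from $\tilde v$; and, as in the dynamic-programming argument for hitting times, cf.\ \Lr{beta-opt}, an optimal $\mathtt{Hit}$ strategy on $\tilde G$ can be taken to depend only on the current state $(i,S)$, recovering the familiar fact that an optimal cover strategy need only consult the current vertex and the covered set.)

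This lemma is essentially bookkeeping, and the only points that want care are in the middle two steps: that the generation of the two random neighbours in $\tilde G$ genuinely simulates that in $G$ (no randomness is lost or created), which reduces to the neighbourhood bijection being measure preserving; and that a CRW strategy on $\tilde G$, although formally allowed to depend on the whole $\tilde G$-history, corresponds to a legitimate CRW strategy on $G$, because the covered-set coordinate is a deterministic function of the $G$-trajectory — so passing to $\tilde G$ neither hides information that a $G$-strategy could use nor grants a strategy information that an on-line cover strategy on $G$ could not already read off from its history.
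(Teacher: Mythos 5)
Your proposal is correct and follows essentially the same route as the paper's proof: the out-neighbourhood bijection $j\mapsto(j,S\cup\{j\})$, the induced measure-preserving correspondence between offered choices, the resulting bijection between histories/strategies, and the coupling under which covering $V$ in $G$ coincides with hitting $W$ in $\tilde G$. The additional remarks (restricting to the reachable part of $\tilde G$, and that an optimal strategy may be taken to depend only on $(i,S)$) are harmless elaborations of the same argument.
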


\begin{proof}
	There is a natural bijection between the out-edges in $G$ from $u$ and those in $\tilde G$ from $(u,T)$ for any $u\in V,T\subseteq V$. This extends to a natural bijection from finite walks (which we may think of as a vertex together with a history) in $G$ starting from $v$ to walks in $\tilde G$ starting from $\tilde v$, and also to a measure-preserving bijection between the choices which may be offered from $u$ and $(u,T)$. Thus there is a natural bijection between strategies for the two walks, and both the choices offered and any random bits used may be coupled so that corresponding strategies produce corresponding walks. Since the walk in $G$ has covered $V$ if and only if the walk in $\tilde G$ has hit some vertex in $W$, the times that these events first occur are identically distributed for corresponding strategies, and in particular the sets of optimal strategies correspond.
\end{proof}

In light of Lemma \ref{covashit} it may appear that we can solve $\mathtt{Cov}(G,v)$ by converting it to an instance of $\mathtt{Hit}\bigl(\tilde{G},\tilde{v},W\bigr)$ and appealing to Theorem \ref{hitexact}. This is unfortunately not the case as $\tilde{G}$ is a directed graph and Theorem \ref{hitexact} cannot handle directed graphs. Lemma \ref{covashit} is still of use as we can phrase $\mathtt{Hit}$ in terms of Markov Decision processes and then standard results tell us that an optimal strategy for the problem can be computed in finite time. 

A Markov Decision Process (MDP) is a discrete time finite state stochastic process controlled by a sequence of decisions \cite{Derman}. At each step a controller specifies a probability distribution over a set of actions which may be taken and this has a direct affect on the next step of the process. Costs are associated with each step/action and the aim of the controller is to minimise the total cost of performing a given task, for example hitting a given state. In our setting the actions are orderings of the vertices in each neighbourhood and the cost of each step/action is one unit of time. The problem $\mathtt{Hit}\bigl(G,u,v)$ is then an instance of the optimal first passage problem which is known to be computable in finite time \cite{Derman}. 

\begin{corollary}\label{covalg}For any graph $G$ and $v\in V$ an optimal policy for the problem  $\mathtt{Cov}\left(G,v\right) $ can be computed in exponential time. 
\end{corollary}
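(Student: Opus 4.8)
The plan is to combine Lemma~\ref{covashit} with the theory of Markov Decision Processes. By Lemma~\ref{covashit}, solving $\mathtt{Cov}(G,v)$ is equivalent to solving $\mathtt{Hit}(\tilde G,\tilde v, W)$ on the auxiliary graph $\tilde G$, whose vertex set is $V\times\mathcal P(V)$ and hence has size $n2^n$, exponential in $n$ but finite. So it suffices to argue that an optimal strategy for $\mathtt{Hit}$ on a finite (possibly directed) graph can be computed in time polynomial in the size of that graph, which then gives an exponential-time algorithm in terms of $n$.

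First I would set up $\mathtt{Hit}(\tilde G,\tilde v,W)$ as an MDP: the states are the vertices of $\tilde G$, the actions available at a state $(u,S)$ are the orderings (or, equivalently, the strategies $\alpha_{u,\cdot,\cdot}$) of the out-neighbourhood of $u$ in $\tilde G$ — each action induces a probability distribution over the successor states via the transition probabilities \eqref{potctp} — and each step incurs unit cost. The states in $W$ are absorbing at zero cost. Minimising the expected time to hit $W$ from $\tilde v$ is then exactly the classical \emph{optimal first-passage} (stochastic shortest path) problem for MDPs. Since $\tilde G$ is connected from $\tilde v$ to $W$ along the coordinate that records covered vertices (the covered set only grows, and $G$ being connected guarantees $W$ is reachable), the optimal expected hitting time is finite, and standard MDP theory — value iteration, policy iteration, or a linear-programming formulation — yields an optimal deterministic stationary policy in time polynomial in the number of states and actions \cite{Derman}. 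The number of states is $n2^n$ and the number of actions per state is at most $(\dmax!)\leq n!$, so the total running time is $2^{\BO{n\log n}}$, i.e.\ exponential in $n$.

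Finally I would translate back: an optimal policy for the MDP is, by the bijection in Lemma~\ref{covashit} between strategies on $\tilde G$ and strategies on $G$, an optimal policy for $\mathtt{Cov}(G,v)$, and it is produced in exponential time. This gives the corollary.

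The main obstacle is a conceptual rather than technical one: Theorem~\ref{hitexact} (the polynomial-time algorithm for $\mathtt{Hit}$) does not apply here because $\tilde G$ is directed, so one cannot simply invoke it as a black box. The fix is to appeal instead to the general (and weaker) guarantee that first-passage MDPs are solvable in time polynomial in the \emph{explicit} description of the MDP, which is already exponentially large; one must be a little careful to note that the relevant reachability/finiteness hypothesis for the MDP solvability result holds, which it does because the covered set is monotone and $G$ is connected.
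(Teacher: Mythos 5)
Your proposal is correct and follows essentially the same route as the paper: reduce $\mathtt{Cov}(G,v)$ to $\mathtt{Hit}\bigl(\tilde G,\tilde v,W\bigr)$ via Lemma~\ref{covashit}, observe that Theorem~\ref{hitexact} is unavailable because $\tilde G$ is directed, and instead solve the resulting optimal first-passage MDP (via linear programming or policy iteration, per \cite{Derman}) in time polynomial in the exponentially many states and orderings-as-actions. Your extra remarks on reachability and on the strategy bijection are sound but do not change the argument.
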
 

\begin{proof}We first encode the problem $\mathtt{Cov}\left(G,v\right)$ as the problem $\mathtt{Hit}\bigl(\tilde{G},\tilde{v},W\bigr)$ as described in Lemma \ref{covashit}. Now as mentioned $ \mathtt{Hit}\bigl(\tilde{G},\tilde{v},W\bigr)$ is an instance of the optimal first passage problem which for a given graph $\tilde{G}$, start vertex $\tilde{v}$ and target vertex $W$ can be computed in finite time using either policy iteration or linear programming, see for example \cite[Ch.\ 5 Cor.\ 1]{Derman}. Examination of the linear program on \cite[page.\ 58]{Derman} reveals that there is a constraint for every ordering of the neighbours of each vertex. Since $\tilde{G}$ has at most $2^n$ vertices and each of these has at most $n$ neighbours we see that there are at most $2^n\cdot n!\leq e^{n^3}$ constraints. It follows that this Linear program can be solved in time $\poly(e^{n^3})$ thus $\mathtt{Cov}\left(G,v\right) \in \EXP$.
\end{proof}

\begin{remark}Since in our setting actions are orderings of neighbourhoods the space of actions may be factorial in the size of the graph. The algorithms for computing $\mathtt{Hit}\bigl(G,u,v)$ from \cite{Derman} used to establish Corollary \ref{covalg} are polynomial in the number of actions and thus will not yield a polynomial time algorithm for the problem. This is why we resisted appealing to MDP theory when finding a polynomial time algorithm for $\mathtt{Hit}\bigl(G,u,v)$ on undirected graphs in Section \ref{polyhit}. \end{remark}

\section{Summary}

In this paper we proposed a new random walk process inspired by the power of choice  paradigm. We derived several quantitative bounds on the hitting and cover times, and also presented a  dichotomy with regards to computing optimal strategies. 

While we were able to show that on an expander graph, the CRW significantly outperforms the simple random walk in terms of its cover time, we do not yet know the exact order of magnitude of $\ttwo$. In fact, we do not have any lower bound on $\ttwo$ improving the trivial $\Omega(n)$ for any sequence of bounded degree graphs. Constructing a sequence of graphs $(G_n)$, especially expanders, with $\ttwo(G_n) = \omega(n)$ would be very interesting.

We have shown that $\mathtt{Cov}\in \EXP $ and that the problem is $\NP$-hard. It would be interesting to find a complexity class for which the problem is complete, and we suspect it is $\PSPACE$-complete.

\section*{Acknowledgements}
We thank Sam Olesker-Taylor for many insightful comments on an earlier version of this work. A.G. and J.H. were supported by ERC Starting Grant no.\ 639046 (RGGC). J.H. was also partially supported by UK Research and Innovation Future Leaders Fellowship MR/S016325/1. T.S. and J.S. were supported by ERC Starting Grant no.\ 679660 (DYNAMIC MARCH).

\bibliographystyle{abbrv}

\end{document}